\renewcommand{\baselinestretch}{1.2}
\newcommand{\bfm}[1]{\ensuremath{\mathbf{#1}}}
\def\bc{\bfm c}     
\def\be{\bfm e}
     \def\bM{\bfm M}
\def\bv{\bfm v}     \def\bV{\bfm V}
\def\bw{\bfm w}     
\def\bx{\bfm x}
\newcommand{\bfsym}[1]{\ensuremath{\boldsymbol{#1}}}
\def\balpha{\bfsym \alpha}
\def\bbeta{\bfsym \beta}
\def\bgamma{\bfsym \gamma}
\def\bdelta{\bfsym \delta}
\def\bSigma{\bfsym \Sigma}
\def\bphi{\bfsym \phi}
\def\bvarphi{\bfsym \varphi}
\def\cA{{\cal  A}}
\def\cC{{\cal  C}}
\def\cD{{\cal  D}}
\def\cE{{\cal  E}}
\def\cF{{\cal  F}}
\def\cH{{\cal  H}}
\def\cI{{\cal  I}}
\def\cK{{\cal  K}}
\def\cL{{\cal  L}}
\def\cM{{\cal  M}}
\def\cN{{\cal  N}}
\def\cP{{\cal  P}}
\def\cR{{\cal  R}}
\def\cS{{\cal  S}}
\def\cT{{\cal  T}}
\def\cW{{\cal  W}}
\def\cX{{\cal  X}}
\def \bbP {\mathbb{P}}
\def \bbE {\mathbb{E}}
\def \bbR {\mathbb{R}}
\renewcommand{\hat}{\widehat}
\renewcommand{\tilde}{\widetilde}
\def\hbbeta{\hat{\bbeta}}
\def \bomega    {\bfsym {\omega}}
\def \bphi      {\bfsym {\phi}}
\def\bSigma{\bfsym \Sigma}
\def \bDelta    {\bfsym {\Delta}}
\DeclareMathOperator{\argmin}{argmin}
\DeclareMathOperator{\Var}{Var}
\DeclareMathOperator{\Cov}{Cov}
\def\r#1{\textcolor{red}{\bf #1}}
\def\revp#1{\textcolor{orange}{\bf #1}}
\newcommand{\todist}{\stackrel{d}{\to}}
\theoremstyle{plain}
\newtheorem{theorem}{Theorem}[section]
\newtheorem{lemma}[theorem]{Lemma}
\newtheorem{corollary}[theorem]{Corollary}
\theoremstyle{definition}
\newtheorem{assumption}{Assumption}[section]
\theoremstyle{remark}
\newtheorem{remark}{Remark}
\begin{document}


{
  \title{\bf Communication-Efficient Distributed Estimation and Inference for Cox's Model}
  \author{Pierre Bayle\thanks{The authors gratefully acknowledge the support of NSF Grants DMS-2053832 and DMS-2210833 and ONR grant N00014-22-1-2340. Emails: \texttt{pbayle@alumni.princeton.edu}, \texttt{jqfan@princeton.edu},
  \texttt{ZHL318@pitt.edu}.} \qquad Jianqing Fan\footnotemark[1] \qquad Zhipeng Lou\footnotemark[1] \\
    Department of Operations Research and Financial Engineering\\
    Princeton University}
    \date{}
  \maketitle
}

\bigskip
\begin{abstract}
Motivated by multi-center biomedical studies that cannot share individual data due to privacy and ownership concerns, we develop communication-efficient iterative distributed algorithms for estimation and inference in the high-dimensional sparse Cox proportional hazards model. We demonstrate that our estimator, even with a relatively small number of iterations, achieves the same convergence rate as the ideal full-sample estimator under very mild conditions. To construct confidence intervals for linear combinations of high-dimensional hazard regression coefficients, we introduce a novel debiased method, establish central limit theorems, and provide consistent variance estimators that yield asymptotically valid distributed confidence intervals. In addition, we provide valid and powerful distributed hypothesis tests for any coordinate element based on a decorrelated score test. We allow time-dependent covariates as well as censored survival times. Extensive numerical experiments on both simulated and real data lend further support to our theory and demonstrate that our communication-efficient distributed estimators, confidence intervals, and hypothesis tests improve upon alternative methods.
\end{abstract}

\noindent
{\it Keywords:} Communication efficiency, Cox's proportional hazards model, Distributed inference, High-dimensional, Iterative algorithm
\vfill

\newpage
\def\spacingset#1{\renewcommand{\baselinestretch}%
{#1}\small\normalsize} \spacingset{1.4}
\section{Introduction}
Massive datasets are ubiquitous and require proper storage, computation, and analysis. The amount of available data can be overwhelming and may not fit into a single machine, in which case storing and processing data across multiple machines provides a solution. Furthermore, information is often naturally collected in parallel: for a given study or survey, different counties or states may obtain data independently from their inhabitants, or competing firms may have their own customers' information.  Additionally, algorithms might not be scalable, and would take considerable time and computational power to run if the sample size is too large on a single machine. Splitting the data across multiple machines alleviates the problem.

Moreover, privacy is of utmost importance and is yet another explanation of the pressing need of carefully-designed distributed methods. Sharing data from multiple locations may cause privacy and ownership concerns, which need to be avoided in a variety of applications. These concerns thus hamper the full access of the whole data at any location. One striking example is in the field of medical science, where hospitals and laboratories work with clinical and genomic data and may not share patient-level information. As an illustration, the Observational Health Data Sciences and Informatics (OHDSI) network is an international collaboration which only transmits aggregated information from local health databases to a coordinating center, without sharing any patient-level data~\citep{Hripcsak2015observational}.

To address the distributed data challenges, it is pressing to develop communication-efficient distributed algorithms that yield estimators and inference procedures as good as if having access to the full data. The literature on distributed learning is flourishing and encompasses a variety of topics: $M$-estimation~\citep{Zhang2013communication, Shamir2014communication, Lee2017communication, Wang2017efficient, Battey2018distributed, Jordan2019communication, Fan2021communication}, Principal Component Analysis~\citep{Garber2017communication-pca, Fan2019distributed-eigen, Chen2021distributed-pca}, feature screening~\citep{Li2020distributed-screening}, to name a few.
We refer to~\citet{Gao2022review} for a recent review of the distributed learning literature.

Most methods rely on one round of communication between machines and a central processor, which are referred to as \textit{one-shot} approaches. Different machines carry out their statistical analysis locally on disjoint subsets of the dataset and send their local results to a central machine which then aggregates the information.
For example,~\citet{Zhang2013communication} averaged the estimators computed locally on each machine to form a global estimator.
In sparse high-dimensional linear and generalized linear models, using the debiasing approach~\citep{Javanmard2014confidence,vandeGeer2014asymptotically},~\citet{Lee2017communication} and~\citet{Battey2018distributed} debiased the local estimators before averaging them for parameter estimation. \citet{Battey2018distributed} further proposed distributed sparse high-dimensional inference for these models, based on debiasing.

To achieve optimal statistical convergence rate, these one-shot approaches require the number of machines to be small and therefore it considerably limits their applications. To overcome this difficulty, iterative procedures have been proposed~\citep{Shamir2014communication, Wang2017efficient, Jordan2019communication, Fan2021communication}, where there are multiple rounds of communication between the machines and central processor. Throughout the paper, we will prefer to use the term \textit{centers} rather than machines, as one core application of survival analysis is medical science.

There is a small body of literature on distributed learning for survival analysis, but previous works only tackle parameter estimation, not inference, and do not address fundamental issues such as communication efficiency and high-dimensional data. \citet{Lu2015webdisco} proposed a proof-of-concept web service and a distributed iterative algorithm to perform Cox regression, and~\citet{Duan2020learning} developed a one-shot distributed algorithm; these methods are privacy-preserving but they have major drawbacks. They do not work for high-dimensional data as regularization is not used. Moreover, they require transmitting Hessian matrices, which is not communication-efficient as it can be too costly to communicate $p^2$ entries, where $p$ is the number of predictors. Many datasets being high-dimensional nowadays, such as microarrays and SNPs in the fields of biomedicine and genomics, it is crucial to develop and analyze communication-efficient privacy-preserving distributed algorithms that are able to deal with the huge number of predictors.

In survival analysis, a widely used semi-parametric model is Cox's proportional hazards model~\citep{Cox1972regression}. The outcome variable is time-to-event, and often, some observations are censored. For example, a subject can leave the study before its end. \citet{Andersen1982cox} formulated the model into a counting process framework, which in particular allows for a systematic study of time-dependent covariates. Despite the very large number of predictors in many modern datasets, most predictors are often irrelevant to explain the outcome, leading to sparse models. Several regularized regression techniques have been extended to Cox's proportional hazards model~\citep{Tibshirani1997lasso-cox, Fan2002variable-cox}. \citet{Bayle2022factor} proposed a factor-augmented regularized model that is capable to handle high-dimensional correlated covariates. \citet{Bradic2011regularization} established model selection consistency and strong oracle properties for various penalty functions in the ultra-high dimensional setting. \citet{Huang2013oracle} and~\citet{Kong2014non} established oracle inequalities for LASSO under different conditions.

In the context of the high-dimensional linear regression model, fundamental statistical inference for linear functionals $\bc^{\top }\bbeta^{\star}$ of the population parameter vector has been well studied in the literature. \citet{Cai2017confidence} established a systematic theory of the minimax expected length for confidence intervals of $\bc^{\top}\bbeta^{\star}$; see also~\citet{Zhu2018linear} and~\citet{Javanmard2020flexible}. For the high-dimensional Cox model under sparsity, \citet{Yu2021confidence} and~\citet{Xia2022statistical} derived central limit theorems for linear functionals, and~\citet{Fang2017testing} proposed hypothesis testing procedures for low-dimensional components of the population parameter vector.

\paragraph{Our contributions} The contributions of our paper can be summarized as follows. First of all, we study a communication-efficient iterative distributed algorithm for parameter estimation in the high-dimensional sparse Cox proportional hazards model regularized with the $\ell_{1}$ (LASSO) penalty, and prove under mild conditions that our estimator achieves the same rate of convergence as an ideal estimator that would have access to the full dataset. In particular, we do not require the number of centers to be small. We then design novel communication-efficient distributed inference procedures for Cox's model: confidence intervals construction for $\bc^{\top}\bbeta^{\star}$ using a debiasing approach, and hypothesis testing for any coordinate element of $\bbeta^{\star}$ building upon decorrelated score test statistics. To this end, we establish central limit theorems and derive consistent variance estimators. Even though the topic of communication-efficient distributed estimation and inference in the context of Cox's proportional hazards model is crucial for many reasons, including but not limited to alleviating privacy issues in medical science, it has not been studied in past literature, and our contributions are the first of their kind. In addition to providing strong theoretical guarantees for these algorithms, we perform numerical experiments. They lend further support that our distributed estimator becomes as accurate as the full-sample estimator after only a few rounds of communication, our confidence intervals have good empirical coverage and small width, and our hypothesis tests have the correct size and high power.

The rest of the paper is organized as follows. Section~\ref{sec_setup} formulates Cox's model in a distributed environment. In Section~\ref{sec_iterative}, we introduce the iterative algorithm and establish convergence of our estimator with optimal rate. We carry out inference in Section~\ref{sec_inference}, constructing confidence intervals for linear functionals $\bc^{\top} \bbeta^{\star}$ of the population parameter, and providing hypothesis testing procedures for coordinates of $\bbeta^{\star}$. The application of our communication-efficient distributed estimation procedure, confidence intervals and tests to both simulated and real data in Section~\ref{sec_experiments} corroborates our theory. We prove our results in the Appendix.

\paragraph{Notation} For any integer $n$, we denote $[n] = \{1, \dots, n\}$. For a vector $\bgamma = (\gamma_{1}, \ldots, \gamma_{p})^{\top} \in \bbR^{p}$ and $q \geq 1$, denote the $\ell_{q}$ norm $\|\bgamma\|_{q} = (\sum_{i = 1}^{p} |\gamma_{i}|^{q})^{1/q}$ and $\|\bgamma\|_{\infty} = \max_{1\leq i \leq p} |\gamma_{i}|$, as well as $\bgamma^{\otimes 0} = 1$, $\bgamma^{\otimes 1} = \bgamma$, $\bgamma^{\otimes 2} = \bgamma\bgamma^\top$. For a matrix $\bM$, we denote by $\|\bM\|_{\max}=\max_{i,j}|M_{ij}|$ its max norm, and by $\|\bM\|_q$ its induced $q$-norm for $q\in \mathbb{N}^{\star}\cup\{\infty\}$; $\lambda_{\max}(M)$ is its maximum eigenvalue. For a matrix $\bM \in \bbR^{p \times p}$ and a vector $\bgamma = (\bgamma_{1}, \bgamma_{2})$ with $\bgamma_{i} \in \bbR^{p_i}$ where $p_1 + p_2 = p$, let $\bM_{\bgamma_{1}\bgamma_{2}} \in \bbR^{p_1 \times p_2}$ be the submatrix of $\bM$ corresponding to rows $1, \dots, p_1$ and columns $p - p_2 + 1, \dots, p$. We similarly define $\bM_{\bgamma_{1}\bgamma_{1}}$, $\bM_{\bgamma_{2}\bgamma_{1}}$ and $\bM_{\bgamma_{2}\bgamma_{2}}$. For a set or an event $A$, we use $\mathbb{I}\{A\}$ to represent the indicator function of $A$, and $A^{c}$ is the complement of $A$. Let $\nabla$ and $\nabla^2$ be the gradient and Hessian operators. When $f$ is a function of $\bgamma = (\bgamma_{1}, \bgamma_{2})$, $\nabla_{\bgamma_{i}} f$ is the vector of partial derivatives with respect to $\bgamma_{i}$, and $\nabla^2_{\bgamma_{i} \bgamma_{j}} f = (\nabla^2 f)_{\bgamma_{i} \bgamma_{j}}$ for $i, j \in \{1, 2 \}$. For two positive sequences $a_{n}$ and $b_{n}$, we write $a_{n} \lesssim b_{n}$ if there exists a constant $C > 0$ independent of $n$ such that $a_{n} \leq C \, b_{n}$ for all sufficiently large $n$. For two numbers $a$ and $b$, $a \lor b$ and $a \land b$ denote their maximum and minimum, respectively. $\mathcal{N}(0,1)$ refers to the standard normal distribution and $\Phi$ is its cumulative distribution function. Let $\todist$ and $\overset{\bbP}{\to}$ denote convergence in distribution, and in probability, respectively. For two sequences of random variables $X_{n}$ and $Y_{n}$ with non-negative $Y_{n}$, we write $X_{n} = O_{\bbP}\left(Y_n\right)$ if for any $\varepsilon > 0$, there exists a constant $C > 0$ independent of $n$ such that $\bbP(|X_n| \geq C \, Y_n) \leq \varepsilon$ for all sufficiently large $n$.

\section{Problem Setup}
\label{sec_setup}
\subsection{Cox's proportional hazards model}
Let $T$, $C$ denote the survival time and the censoring time, respectively, and let $\{\bx(t) \in \bbR^{p} : 0\leq t\leq \tau\}$ be a vector-valued predictable covariate process. Here $\tau$ is the study ending time, which we consider to be finite. As commonly assumed in the literature, $T$ and $C$ are conditionally independent given the covariates $\{\bx(t) : 0\leq t\leq \tau\}$. Let $Z = \min\{T, C\}$ be the observed time and $\delta = \mathbb{I}\{T \leq C\}$ be the censoring indicator. We observe $n$ independent and identically distributed (i.i.d)~samples $\{(\{\bx_{i}(t) : 0\leq t\leq \tau\}, Z_{i}, \delta_{i})\}_{i = 1}^{n}$ from the population $(\{\bx(t) : 0\leq t\leq \tau\}, Z, \delta)$, and we assume that there are no tied observations and the covariates are centered.

Cox's proportional hazards model~\citep{Cox1972regression} is a semi-parametric model widely used for time-to-event outcomes. In this model, the conditional hazard function $\lambda(t\mid \bx(t))$ of the survival time $T$ at time $t$ given covariate vector $\bx(t) \in \bbR^{p}$ is assumed to have the form
\begin{align*}
    \lambda(t\mid \bx(t)) = \lambda_{0}(t) \exp\{\bx(t)^{\top}\bbeta^{\star}\},
\end{align*}
where $\lambda_{0}(\cdot)$ is the baseline hazard function and $\bbeta^{\star} = (\beta_{1}^{\star}, \ldots, \beta_{p}^{\star})^{\top} \in \bbR^{p}$ is the population parameter vector.

We introduce usual counting process notation~\citep{Andersen1982cox}, namely $N_{i}(t) = \mathbb{I}\{Z_{i} \leq t, \delta_{i} = 1\}$ and $Y_{i}(t) = \mathbb{I}\{Z_{i} \geq t\}$. Then the average negative log-partial likelihood function is given by 
\begin{align*}
    \cL(\bbeta) = - \frac{1}{n} \sum_{i = 1}^n \int_{0}^{\tau} \{\bx_{i}(t)^{\top} \bbeta\} d N_{i}(t) + \frac{1}{n} \int_{0}^{\tau} \log\left[\sum_{i=1}^{n} Y_{i}(t) \exp\{\bx_{i}(t)^{\top} \bbeta\}\right] d \bar{N}(t),
\end{align*}
where $\bar{N}(t) = \sum_{i = 1}^{n} N_{i}(t)$. To estimate the parameter vector $\bbeta^{\star}$ in the high-dimensional setting where the dimension $p$ is larger than the sample size $n$, we consider the following regularized estimator
\begin{align*}
    \hat{\bbeta}_{\vartheta} = \underset{\bbeta\in\bbR^{p}}{\argmin} \{\cL(\bbeta) + \cP_{\vartheta}(\bbeta)\},
\end{align*}
where $\cP_{\vartheta}(\cdot)$ is a sparsity-inducing penalty function with regularization parameter $\vartheta > 0$. Most predictors are irrelevant to explain the outcome, and it is commonly assumed that $\bbeta^{\star}$ is a sparse vector. We define $\cS_{\star} = \{j \in [p]: \beta_{j}^{\star} \neq 0\}$ to be the support of $\bbeta^{\star}$ and let $|\cS_{\star}| = \sum_{j = 1}^{p} \mathbb{I}\{\beta_{j}^{\star} \neq 0\}$ be its cardinality. Various types of penalty functions have been investigated in the literature. \citet{Tibshirani1997lasso-cox} and~\citet{Fan2002variable-cox} introduced the LASSO and the SCAD penalties for survival data, respectively. In~\citet{Bradic2011regularization}, the authors established strong oracle properties in the high-dimensional setting. \citet{Huang2013oracle} established oracle inequalities for the Lasso estimator under quite mild conditions.

For $\ell = 0,1,2$, define the following quantities
\begin{align}
\label{S-s-param}
    S^{(\ell)}(\bbeta, t) = \frac{1}{n} \sum_{i = 1}^{n} Y_{i}(t) \{\bx_{i}(t)\}^{\otimes \ell} \exp\{\bx_{i}(t)^{\top}\bbeta\} \enspace \mathrm{and} \enspace s^{(\ell)}(\bbeta, t) = \bbE \{S^{(\ell)}(\bbeta, t)\}.
\end{align}
To ease the notations, also define the following
\begin{align*}
    \cX(\bbeta, t) = \frac{S^{(1)}(\bbeta, t)}{S^{(0)}(\bbeta, t)} \enspace \mathrm{and} \enspace \bV(\bbeta, t) = \frac{S^{(2)}(\bbeta, t)}{S^{(0)}(\bbeta, t)} - \cX(\bbeta, t)^{\otimes 2}. 
\end{align*}
Then the gradient vector and Hessian matrix of $\cL(\bbeta)$ are respectively given by 
\begin{align}
\label{eq_Gradient_Hessian}
    \nabla \cL(\bbeta) = -\frac{1}{n} \sum_{i = 1}^{n} \int_{0}^{\tau} \{\bx_{i}(t) - \cX(\bbeta, t)\} d N_{i}(t) \enspace \mathrm{and} \enspace \nabla^{2} \cL(\bbeta) = \frac{1}{n} \int_{0}^{\tau} \bV(\bbeta, t) d\bar{N}(t).  
\end{align}
Observe that $N_{i}(t)$ is a counting process with intensity process $\lambda_{i}(t, \bbeta^{\star}) = \lambda_0(t) Y_i(t) \exp\{\bx_i(t)^{\top} \bbeta^{\star}\}$, which does not admit jumps at the same time as $N_j(t)$ for $j \neq i$. For each $i \in [n]$, define the predictable compensator 
\begin{align*}
    \Lambda_i(t) = \int_{0}^{t} \lambda_i(s, \bbeta^\star) d s.
\end{align*}
Then $M_i(t) = N_i(t) - \Lambda_i(t)$ is an orthogonal local square-integrable martingale with respect to the filtration $\cF_{t,i} = \sigma\{N_i(u), \bx_i(u^+), Y_i(u^+) : 0 \leq u \leq t\}$. Let $\cF_{t} = \bigcup_{i=1}^n \cF_{t,i}$ be the smallest $\sigma$-algebra containing $\cF_{t,i}$. Then $\bar{M}(t) = \sum_{i=1}^n M_i(t)$ is a martingale with respect to $\cF_t$.

\subsection{Distributed environment}
Let $\cI_{1}, \ldots, \cI_{K}$ be a partition of $[n]$ corresponding to indices of the datapoints in each of the $K$ centers. Without loss of generality, we assume that $n/K$ is an integer called~\textit{subsample size} and denoted by $m$, with $|\cI_{1}| = \cdots = |\cI_{K}| = m$.

For the $k$-th center, the average negative log-partial likelihood function is
\begin{align*}
    \cL_{k}(\bbeta) = - \frac{1}{m} \sum_{i \in \cI_{k}} \int_{0}^{\tau} \{\bx_{i}(t)^{\top} \bbeta\} d N_{i}(t) + \frac{1}{m} \int_{0}^{\tau} \log \left[\sum_{i\in \cI_{k}} Y_{i}(t) \exp\{\bx_{i}(t)^{\top} \bbeta\} \right] d\bar{N}_{k}(t),   
\end{align*}
where $\bar{N}_{k}(t) = \sum_{i \in \cI_{k}} N_{i}(t)$ for each $k \in [K]$.
Having data split across numerous centers further reduces the ratio of the subsample size to the dimension, and makes the study and use of penalized methods even more crucial. In this paper, we focus on the penalty $\cP_{\vartheta}(\bbeta) = \vartheta \|\bbeta\|_{1}$ for some tuning parameter $\vartheta > 0$, which yields $\ell_{1}$-regularized Cox proportional hazards regression. The corresponding estimate for $\bbeta^{\star}$ based on the full sample is given by
\begin{align*}
    \hbbeta_{\vartheta} = \underset{\bbeta\in\bbR^{p}}{\argmin} \{\cL(\bbeta) + \vartheta \|\bbeta\|_{1}\}.
\end{align*}
Under fairly general regularity conditions, \citet{Huang2013oracle} established that
\begin{align}
\label{eq_hbbeta_lambda}
    \|\hbbeta_{\vartheta} - \bbeta^{\star}\|_{2} = O_{\bbP}\left(\sqrt{\frac{|\cS_{\star}|\log p}{n}}\right). 
\end{align}
Throughout this paper, we use $\hbbeta_{\vartheta}$ as the benchmark estimator in the context of distributed learning. Our first goal is to design a distributed estimator that attains the same convergence rate as $\hbbeta_{\vartheta}$.

\section{Iterative Algorithm and Convergence} 
\label{sec_iterative}
In the principal center $\cI_{1}$, a natural idea would be to approximate $\cL(\bbeta)$ via the subsample version $\cL_{1}(\bbeta)$. However, this approximation does not utilize the information across the centers. To obtain a better trade-off between communication and computation efficiencies, we use the gradient-enhanced loss (GEL) function~\citep{Jordan2019communication, Fan2021communication}. More specifically, given any vector $\bbeta^{\circ} \in \bbR^{p}$, the GEL function at the central processor $\cI_{1}$ at $\bbeta^{\circ}$ is defined by  
\begin{align*}
    \tilde{\cL}_{1}(\bbeta) =
    \cL_{1}(\bbeta) - \{\nabla \cL_{1}(\bbeta^{\circ}) - \nabla \cL(\bbeta^{\circ})\}^{\top} \bbeta.  
\end{align*}
This replaces the local gradient at the point $\bbeta^{\circ}$ at the principal center by the full-sample gradient, namely $\nabla \tilde \cL_{1}(\bbeta^{\circ}) =  \nabla \cL(\bbeta^{\circ})$. However, in view of~\eqref{eq_Gradient_Hessian}, the exact value of $\nabla \cL (\bbeta^{\circ})$ is not computable without accessing all the data points across the $K$ centers. Therefore, we propose to approximate $\nabla \cL(\bbeta^{\circ})$ via the following distributed version,
\begin{align*}
    \hat{\nabla \cL} (\bbeta^{\circ}) = \frac{1}{K}\sum_{k = 1}^{K} \nabla \cL_{k}(\bbeta^{\circ}).  
\end{align*}
This can easily be communicated from each individual center to the principal center $\mathcal{I}_{1}$. Consequently, for any initial estimator $\bbeta_{0}$ of $\bbeta^{\star}$ and any $t \geq 0$, we compute an estimator $\bbeta_{t + 1}$ as
\begin{align}
\label{eq_iterated_estimator}
    \bbeta_{t + 1} = \underset{\bbeta \in \mathbb{R}^{p}}{\argmin} \left[\cL_{1}(\bbeta) - \left\{\nabla \cL_{1} (\bbeta_{t}) - \hat{\nabla \cL} (\bbeta_{t})\right\}^{\top} \bbeta + \vartheta_{t + 1} \|\bbeta\|_{1}\right],
\end{align}
where $\vartheta_{t + 1} > 0$ is a regularization parameter and $\bbeta_{t}$ is the previous iterate. It is worth mentioning that this estimating procedure is invariant to the possibly different baseline hazard functions across the $K$ centers, which are treated as nuisance parameters in the semi-parametric modelling of Cox's proportional hazards model. Before presenting the main result on the convergence rate of the estimator $\bbeta_{t + 1}$ for $t \geq 0$, we first introduce some basic notation and regularity conditions. For any subset $\mathcal{S} \subset [p]$ and positive constant $\gamma < \infty$, we denote the convex cone $\mathcal{C}(\mathcal{S}, \gamma) = \{\bv \in \mathbb{R}^{p} : \|\bv_{\mathcal{S}^{c}}\|_{1} \leq \gamma \|\bv_{\mathcal{S}}\|_{1}\}$, where $\mathcal{S}^{c}$ is the complement of $\mathcal{S}$.

\begin{assumption}
\label{Assumption_covariate}
There exists a positive constant $B < \infty$ such that 
\begin{align*}
    \max_{i \in [n]} \sup_{t \in [0, \tau]} \|\bx_{i}(t)\|_{\infty} \leq B. 
\end{align*}
\end{assumption}

\begin{assumption}
\label{Assumption_eigenvalue}
There exists a constant $\varrho_{\star} > 0$ such that
\begin{align*}
    \min_{0 \neq \bv \in \cC(\cS_{\star}, 3)} \frac{\bv^\top \nabla^2 \cL_{1}(\bbeta^{\star}) \bv}{\|\bv\|_{2}^{2}} \geq \varrho_{\star}^{2}.
\end{align*}
\end{assumption}

Here, we will briefly comment on these two assumptions. Assumption~\ref{Assumption_covariate} is a mild boundedness condition. It is very common in the literature on the Cox's proportional hazards model; see, for instance, \citet{Zhao2012principled}, \citet{Kong2014non} and~\citet{Yu2021confidence}, among others. Assumption~\ref{Assumption_eigenvalue} is a restricted eigenvalue condition and has been justified to be reasonable in view of Theorem 4.1 in~\citet{Huang2013oracle} which requires that $|\cS_{\star}|\sqrt{(\log p)/m}$ be sufficiently small. Henceforth, we assume that there exists a positive constant $R_{0} < \infty$ such that $|\cS_{\star}|\sqrt{\{\log (pK)\}/m} \leq R_{0}$. In the following lemma, we derive a non-asymptotic upper bound for the $\ell_{2}$ error of $\bdelta_{t} = \bbeta_{t} - \bbeta^{\star}$ for each integer $t \geq 1$, which serves as a useful lemma to our convergence results.

\begin{lemma}
\label{Theorem_consistency}
For each integer $t \geq 0$, assume that
\begin{align}
\label{eq_deviation_t}
    \chi_{t + 1} := \frac{12\vartheta_{t + 1} B |\cS_{\star}|}{\varrho_{\star}^{2}} \leq \frac{1}{e} \enspace \mathrm{and} \enspace \|\nabla \cL_{1}(\bbeta^{\star}) - \nabla \cL_{1}(\bbeta_t) + \hat{\nabla \cL} (\bbeta_{t})\|_{\infty} \leq \frac{\vartheta_{t + 1}}{2}.
\end{align}
Then, under Assumptions~\ref{Assumption_covariate} and \ref{Assumption_eigenvalue}, we have $\bdelta_{t + 1} \in \cC(\cS_{\star}, 3)$, and the upper bound
\begin{align*}
    \|\bdelta_{t + 1}\|_2 \leq \frac{3\vartheta_{t + 1} e \sqrt{|\cS_{\star}|}}{2\varrho_{\star}^2}.
\end{align*}
\end{lemma}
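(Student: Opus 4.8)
The plan is to run a standard basic-inequality / restricted-eigenvalue argument, but localized at the GEL objective rather than the full-sample penalized loss. Write $\bdelta = \bdelta_{t+1} = \bbeta_{t+1} - \bbeta^\star$ and let
\[
\tilde{\cL}_{1}(\bbeta) = \cL_{1}(\bbeta) - \{\nabla \cL_{1}(\bbeta_{t}) - \hat{\nabla \cL}(\bbeta_{t})\}^{\top} \bbeta
\]
be the GEL objective from~\eqref{eq_iterated_estimator}. First I would exploit optimality of $\bbeta_{t+1}$: since it minimizes $\tilde{\cL}_{1}(\bbeta) + \vartheta_{t+1}\|\bbeta\|_{1}$, comparing its value to that at $\bbeta^\star$ gives
\[
\tilde{\cL}_{1}(\bbeta_{t+1}) - \tilde{\cL}_{1}(\bbeta^\star) \leq \vartheta_{t+1}\bigl(\|\bbeta^\star\|_{1} - \|\bbeta_{t+1}\|_{1}\bigr).
\]
By convexity of $\tilde{\cL}_{1}$, the left side is at least $\nabla\tilde{\cL}_{1}(\bbeta^\star)^{\top}\bdelta$, and by Hölder this is at least $-\|\nabla\tilde{\cL}_{1}(\bbeta^\star)\|_{\infty}\|\bdelta\|_{1}$. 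The crucial observation is that $\nabla\tilde{\cL}_{1}(\bbeta^\star) = \nabla\cL_{1}(\bbeta^\star) - \nabla\cL_{1}(\bbeta_{t}) + \hat{\nabla\cL}(\bbeta_{t})$, which is exactly the quantity controlled by the second hypothesis in~\eqref{eq_deviation_t}, so $\|\nabla\tilde{\cL}_{1}(\bbeta^\star)\|_{\infty} \leq \vartheta_{t+1}/2$. Combining these and splitting $\|\bdelta\|_{1} = \|\bdelta_{\cS_\star}\|_{1} + \|\bdelta_{\cS_\star^c}\|_{1}$ together with $\|\bbeta^\star\|_{1} - \|\bbeta_{t+1}\|_{1} \leq \|\bdelta_{\cS_\star}\|_{1} - \|\bdelta_{\cS_\star^c}\|_{1}$ (triangle inequality on the support) yields the cone membership $\|\bdelta_{\cS_\star^c}\|_{1} \leq 3\|\bdelta_{\cS_\star}\|_{1}$, i.e. $\bdelta_{t+1}\in\cC(\cS_\star,3)$, after absorbing a factor — this is where the constant $3$ (rather than $1$) comes from, and where the first hypothesis $\chi_{t+1}\leq 1/e$ is likely needed to close a quadratic-in-$\|\bdelta\|$ term.

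Next I would upgrade the first-order bound to an $\ell_2$ bound using Assumption~\ref{Assumption_eigenvalue}. The natural route is a second-order Taylor expansion of $\cL_{1}$ around $\bbeta^\star$ (note $\tilde{\cL}_{1}$ and $\cL_{1}$ have the same Hessian, since the GEL correction is linear): there is $\bar\bbeta$ on the segment $[\bbeta^\star,\bbeta_{t+1}]$ with
\[
\tilde{\cL}_{1}(\bbeta_{t+1}) - \tilde{\cL}_{1}(\bbeta^\star) = \nabla\tilde{\cL}_{1}(\bbeta^\star)^{\top}\bdelta + \tfrac{1}{2}\bdelta^{\top}\nabla^2\cL_{1}(\bar\bbeta)\,\bdelta.
\]
Here the real work is to relate $\bdelta^{\top}\nabla^2\cL_{1}(\bar\bbeta)\bdelta$ to $\bdelta^{\top}\nabla^2\cL_{1}(\bbeta^\star)\bdelta \geq \varrho_\star^2\|\bdelta\|_2^2$. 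Using Assumption~\ref{Assumption_covariate} (so $|\bx_i(t)^{\top}\bv| \leq B\|\bv\|_1$ uniformly) one shows the integrand $\bV(\bbeta,t)$ is locally Lipschitz in $\bbeta$ with a multiplicative factor like $\exp(B\|\bbeta-\bbeta^\star\|_1)$; since $\bdelta\in\cC(\cS_\star,3)$ we have $\|\bdelta\|_1 \leq 4\sqrt{|\cS_\star|}\|\bdelta\|_2$, and combined with the running assumption $|\cS_\star|\sqrt{\log(pK)/m}\leq R_0$ and the magnitude bound we are about to derive, $B\|\bdelta\|_1$ stays bounded, so the Hessian at $\bar\bbeta$ is comparable to the one at $\bbeta^\star$ up to a constant. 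Chaining the two displays: $\tfrac{1}{2}\cdot c\,\varrho_\star^2\|\bdelta\|_2^2 \lesssim \vartheta_{t+1}\|\bdelta_{\cS_\star}\|_1 \leq \vartheta_{t+1}\sqrt{|\cS_\star|}\|\bdelta\|_2$, and cancelling one power of $\|\bdelta\|_2$ gives $\|\bdelta\|_2 \lesssim \vartheta_{t+1}\sqrt{|\cS_\star|}/\varrho_\star^2$. Tracking constants carefully (the $12$ in $\chi_{t+1}$, the factors $e$ and $3/2$) should land exactly on $\|\bdelta_{t+1}\|_2 \leq 3\vartheta_{t+1}e\sqrt{|\cS_\star|}/(2\varrho_\star^2)$.

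The main obstacle is the Hessian-comparison / self-bounding step: one cannot simply invoke the restricted eigenvalue at $\bar\bbeta$ because $\bar\bbeta$ is random and the cone condition at $\bar\bbeta$ is not assumed; instead one must control $\exp(B\|\bar\bbeta - \bbeta^\star\|_1)$, which itself depends on $\|\bdelta\|_2$, creating a circular dependence that has to be broken — either by a continuity/localization argument (restrict to a ball, show the minimizer lies strictly inside) or, more cleanly, by first establishing membership in $\cC(\cS_\star,3)$ unconditionally and then using the hypothesis $\chi_{t+1}\leq 1/e$ to guarantee $\|\bdelta\|_1$ is small enough that $e^{B\|\bdelta\|_1}\leq e$, which is presumably precisely why the constant $e$ appears in both the hypothesis and the conclusion. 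A secondary technical point is making sure the linear GEL correction truly drops out of all second-order terms and only enters through $\nabla\tilde{\cL}_{1}(\bbeta^\star)$, which is immediate but worth stating explicitly.
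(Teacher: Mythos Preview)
Your outline is close to the paper's, and the cone-membership step is essentially the same (in fact, once you have the basic inequality $\tfrac{1}{2}\bdelta^{\top}\nabla^{2}\cL_{1}(\bar\bbeta)\bdelta \leq \tfrac{3\vartheta_{t+1}}{2}\|\bdelta_{\cS_\star}\|_{1} - \tfrac{\vartheta_{t+1}}{2}\|\bdelta_{\cS_\star^{c}}\|_{1}$, nonnegativity of the left side already gives $\bdelta\in\cC(\cS_\star,3)$; the hypothesis $\chi_{t+1}\leq e^{-1}$ plays no role there, so your remark that it is ``likely needed to close a quadratic-in-$\|\bdelta\|$ term'' at this stage is off --- it is used only for the $\ell_{2}$ bound).

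The genuine gap is in the self-bounding step. Your Taylor route gives an inequality of the form $a\,e^{-c a}\leq d$ with $a = 8B\sqrt{|\cS_\star|}\|\bdelta\|_{2}$, and such an inequality has two solution branches; you still owe an argument ruling out the large-$a$ branch. Your two suggested fixes are both circular as stated: you cannot conclude $e^{B\|\bdelta\|_{1}}\leq e$ without already knowing $\|\bdelta\|_{1}$ is small, and a ``restrict to a ball'' argument requires exhibiting the ball and showing the minimizer stays strictly inside, which you have not done. The paper resolves this with a clean device you have not identified: it works with the symmetrized Bregman divergence $D_{1}(\bbeta^{\star}+\bdelta,\bbeta^{\star})=\bdelta^{\top}\{\nabla\cL_{1}(\bbeta^{\star}+\bdelta)-\nabla\cL_{1}(\bbeta^{\star})\}$, uses convexity to get $D_{1}(\bbeta^{\star}+\bdelta,\bbeta^{\star})\geq\theta^{-1}D_{1}(\bbeta^{\star}+\theta\bdelta,\bbeta^{\star})$ for every $\theta\in(0,1]$, and then applies Lemma~3.2 of \citet{Huang2013oracle} at the scaled point. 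This yields, for \emph{all} $\theta\in(0,1]$, the inequality $(8\theta B\sqrt{|\cS_\star|}\|\bdelta\|_{2})\exp(-8\theta B\sqrt{|\cS_\star|}\|\bdelta\|_{2})\leq\chi_{t+1}$. Because $x\mapsto xe^{-x}$ is increasing on $[0,1]$ with maximum $e^{-1}$, and the left side sweeps continuously through $(0,8B\sqrt{|\cS_\star|}\|\bdelta\|_{2}]$ as $\theta$ varies, the hypothesis $\chi_{t+1}\leq e^{-1}$ forces $8B\sqrt{|\cS_\star|}\|\bdelta\|_{2}\leq\varpi_{t+1}$, the \emph{smaller} root of $\varpi e^{-\varpi}=\chi_{t+1}$ (necessarily $\varpi_{t+1}\leq 1$). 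Rewriting $\varpi_{t+1}=e^{\varpi_{t+1}}\chi_{t+1}\leq e\,\chi_{t+1}$ then delivers the exact bound $\|\bdelta\|_{2}\leq 3\vartheta_{t+1}e\sqrt{|\cS_\star|}/(2\varrho_\star^{2})$. This is precisely why both $e$ and the threshold $1/e$ appear, and it is the missing idea in your proposal.
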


\begin{assumption}
\label{Assumption_beta_1}
There exists a positive constant $M < \infty$ such that
\begin{align*}
    \max_{i \in [n]} \sup_{t \in [0, \tau]}|\bx_{i}(t)^{\top} \bbeta^{\star}| \leq M. 
\end{align*}
\end{assumption}

\begin{assumption}
\label{Assumption_covariance_matrix_covariate}
Let $\bSigma(t) = \Cov\{\bx(t)\}$ denote the covariance matrix, and let $\Lambda_{0}(t) = \int_{0}^{t} \lambda_{0}(s) ds$ be the cumulative baseline hazard function. There exists a positive constant $R_{1} < \infty$ such that 
\begin{align*}
    \lambda_{\max} \left(\int_{0}^{\tau} \bSigma(t) d\Lambda_{0}(t)\right) \leq R_{1}.
\end{align*}
\end{assumption}

\begin{assumption}
\label{Assumption_at_risk}
There exists a constant $\rho_0 > 0$ such that $\bbP\{Y_{1}(\tau) = 1\} \geq \rho_0$. 
\end{assumption}

Assumption~\ref{Assumption_beta_1} is a natural uniform upper bound on the true hazard~\citep{Fang2017testing}. In view of Assumption~\ref{Assumption_covariate}, a naive choice of $M$ would be $B \|\bbeta^{\star}\|_{1}$. Assumption~\ref{Assumption_covariance_matrix_covariate} imposes a natural upper bound on the maximum eigenvalue of the weighted covariance matrix for the time-dependent covariate vector $\bx(t)$, which is quite mild and commonly applied in the context of distributed learning~\citep{Battey2018distributed, Jordan2019communication, MR4209468}. Assumption~\ref{Assumption_at_risk} is standard and states that there is a non-zero probability that subjects are still at risk at the end of the study. See, for instance, \citet{Andersen1982cox}, \citet{Fan2002variable-cox}, \citet{Lin2013high}, \citet{Fang2017testing} and many others.

For simplicity of notation, we denote
\begin{align*}
     \mathcal{C}_{n, p} = \frac{6\eta_{0}eB}{\varrho_{\star}^{2}}\sqrt{\frac{|\cS_{\star}|\log p}{n}} \enspace \mathrm{and} \enspace \Psi_{m, p}(\bar{\omega}) = \frac{\cA_{1}}{\cA_{2}\exp(2B\bar{\omega})} \sqrt{\frac{|\cS_{\star}| \log (pK)}{m}},
\end{align*}
where $\bar{\omega}$ is defined in~\eqref{eq_cond_omega} below, $\mathcal{A}_{1}$ and $\mathcal{A}_{2}$ are positive constants depending on $B, M, \rho_{0}, R_{0}$ and $R_{1}$, whose explicit expressions are given in~\eqref{eq_A1_A2_definition}.

\begin{theorem}
\label{Theorem_iteration}
Let Assumptions~\ref{Assumption_covariate}--\ref{Assumption_at_risk} hold. Suppose that for each integer $t \geq 0$, 
\begin{align} 
\label{eq_cond_lambda}
    \frac{\vartheta_{t + 1}}{\eta_0} \leq \|\nabla\cL_{1}(\bbeta^{\star}) - \nabla \cL_{1}(\bbeta_{t}) + \hat{\nabla \cL} (\bbeta_{t})\|_{\infty} \leq \frac{\vartheta_{t + 1}}{2}, 
\end{align}
where $\eta_{0} > 2$ is a positive constant. Assume that
\begin{align}
\label{eq_cond_number_centers}
    \mathcal{C}_{1} |\cS_{\star}| \sqrt{\frac{\log (p K)}{m}} \leq \cA_{0} < 1
\end{align}
for some positive $\cA_{0}$, where $\cC_{1} = 3\eta_0 e \cA_{1}/\varrho_{\star}^{2}$. Moreover, we assume that there exists some positive $\bar{\omega} < \infty$ such that $8 B \Psi_{m, p}(\bar{\omega}) \sqrt{|\cS_{\star}|} \leq 1$ and
\begin{align}
\label{eq_cond_initial_estimator}
\Psi_{m, p}(\bar{\omega}) \geq \max\left\{\|\bdelta_{0}\|_{2}, \frac{\mathcal{C}_{n, p}}{1 - \cA_{0}}\right\},
\end{align}
where
\begin{align}
\label{eq_cond_omega}
\bar{\omega} = \max\left\{\|\bdelta_{0}\|_{1}, 8\sqrt{|\cS_{\star}|} \cA_{0}\|\bdelta_{0}\|_{2}, \frac{8\sqrt{|\cS_{\star}|}\mathcal{C}_{n, p}}{1 - \cA_{0}}\right\}.
\end{align}
Then, with probability at least $1 - 2K\exp(-m\rho_0^2/2) - 2p^{-1} - 12.442/(pK^{2})$, we have for all $t \geq 0$, $\bdelta_{t + 1} \in \cC(\cS_{\star}, 3)$ and
\begin{align}
\label{eq_bbeta_bound_iterated}
    \|\bdelta_{t + 1}\|_{2} &\leq \cA_{0}^{t + 1} \|\bdelta_{0}\|_{2} + \frac{6\eta_{0} e B}{(1 - \cA_{0})\varrho_{\star}^{2}} \sqrt{\frac{|\cS_{\star}|\log p}{n}}. 
\end{align}
\end{theorem}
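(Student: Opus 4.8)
The plan is to use Lemma~\ref{Theorem_consistency} as the per-iteration engine and bootstrap it along $t$ on a single high-probability event. Write $\bg_{t} := \nabla\cL_{1}(\bbeta^{\star}) - \nabla\cL_{1}(\bbeta_{t}) + \hat{\nabla \cL}(\bbeta_{t})$, which is precisely the gradient at $\bbeta^{\star}$ of the gradient-enhanced loss anchored at $\bbeta_{t}$. The upper bound in~\eqref{eq_cond_lambda} is the second hypothesis of Lemma~\ref{Theorem_consistency}, and the lower bound gives $\vartheta_{t+1}\leq\eta_{0}\|\bg_{t}\|_{\infty}$; so once I verify $\chi_{t+1}=12\vartheta_{t+1}B|\cS_{\star}|/\varrho_{\star}^{2}\leq e^{-1}$, Lemma~\ref{Theorem_consistency} yields $\bdelta_{t+1}\in\cC(\cS_{\star},3)$ and $\|\bdelta_{t+1}\|_{2}\leq 3\vartheta_{t+1}e\sqrt{|\cS_{\star}|}/(2\varrho_{\star}^{2})\leq 3\eta_{0}e\sqrt{|\cS_{\star}|}\,\|\bg_{t}\|_{\infty}/(2\varrho_{\star}^{2})$. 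Everything thus reduces to an $\ell_{\infty}$ estimate of $\bg_{t}$ in terms of $\|\bdelta_{t}\|_{2}$ plus a noise floor, and then an induction.

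For the $\ell_{\infty}$ estimate I would split $\bg_{t}=\hat{\nabla\cL}(\bbeta^{\star})+\bigl\{[\hat{\nabla\cL}(\bbeta_{t})-\hat{\nabla\cL}(\bbeta^{\star})]-[\nabla\cL_{1}(\bbeta_{t})-\nabla\cL_{1}(\bbeta^{\star})]\bigr\}$. For the first term, using $dN_{i}=dM_{i}+\lambda_{0}(t)Y_{i}(t)e^{\bx_{i}(t)^{\top}\bbeta^{\star}}dt$ and the identity $S_{k}^{(1)}-\cX_{k}S_{k}^{(0)}\equiv0$ shows $\hat{\nabla\cL}(\bbeta^{\star})=-n^{-1}\sum_{k}\sum_{i\in\cI_{k}}\int_{0}^{\tau}\{\bx_{i}(t)-\cX_{k}(\bbeta^{\star},t)\}\,dM_{i}(t)$, a sum of per-center martingale integrals with predictable integrands bounded by $2B$ (as $\cX_{k}$ is a convex combination of the $\bx_{i}$); a Bernstein/Lenglart-type martingale inequality, with predictable variation controlled via Assumption~\ref{Assumption_covariance_matrix_covariate} and the denominators $S_{k}^{(0)}$ kept away from $0$ via Assumption~\ref{Assumption_at_risk}, gives $\|\hat{\nabla\cL}(\bbeta^{\star})\|_{\infty}\leq 4B\sqrt{(\log p)/n}$ off an event of probability $\leq 2p^{-1}+2K\exp(-m\rho_{0}^{2}/2)$. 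For the second term, since $\hat{\nabla\cL}$ and $\nabla\cL_{1}$ are the gradients of $\bar{\cL}:=K^{-1}\sum_{k}\cL_{k}$ and $\cL_{1}$, Taylor's theorem in integral form rewrites it as $\bigl[\int_{0}^{1}\{\nabla^{2}\bar{\cL}(\bbeta^{\star}+s\bdelta_{t})-\nabla^{2}\cL_{1}(\bbeta^{\star}+s\bdelta_{t})\}\,ds\bigr]\bdelta_{t}$, with $\ell_{\infty}$ norm $\leq\|\bdelta_{t}\|_{1}\sup_{s\in[0,1]}\|\nabla^{2}\bar{\cL}(\bbeta^{\star}+s\bdelta_{t})-\nabla^{2}\cL_{1}(\bbeta^{\star}+s\bdelta_{t})\|_{\max}$. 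Splitting the Hessian difference into its value at $\bbeta^{\star}$ plus the two increments along $s\bdelta_{t}$: the value at $\bbeta^{\star}$ is handled by concentration of the processes $S^{(\ell)}(\bbeta^{\star},\cdot)$, $\ell=0,1,2$, around $s^{(\ell)}(\bbeta^{\star},\cdot)$ for both the $\bar{\cL}$- and $\cL_{1}$-versions (Bernstein over roughly $p^{2}K$ entries and $K$ subsamples of size $m$), which is $\lesssim\sqrt{\log(pK)/m}$ with probability $\geq 1-12.442/(pK^{2})$, and then $\bdelta_{t}\in\cC(\cS_{\star},3)$ turns $\|\bdelta_{t}\|_{1}$ into $4\sqrt{|\cS_{\star}|}\|\bdelta_{t}\|_{2}$, producing $\cA_{1}\sqrt{|\cS_{\star}|\log(pK)/m}\,\|\bdelta_{t}\|_{2}$; the two increments are bounded deterministically by the third derivative of the log-partial likelihood, whose entries are $\lesssim B^{3}\exp\{2(M+B\|\bdelta_{t}\|_{1})\}$ by Assumptions~\ref{Assumption_covariate} and~\ref{Assumption_beta_1}, and with $\|\bdelta_{t}\|_{1}\leq\bar{\omega}$ this produces $\cA_{2}\exp(2B\bar{\omega})\|\bdelta_{t}\|_{2}^{2}$. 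Intersecting these events defines $\cE$ with $\bbP(\cE)\geq 1-2K\exp(-m\rho_{0}^{2}/2)-2p^{-1}-12.442/(pK^{2})$, on which $\|\bg_{t}\|_{\infty}\leq\cA_{1}\sqrt{|\cS_{\star}|\log(pK)/m}\,\|\bdelta_{t}\|_{2}+4B\sqrt{(\log p)/n}+\cA_{2}\exp(2B\bar{\omega})\|\bdelta_{t}\|_{2}^{2}$.

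On $\cE$ I would then run an induction on $t\geq0$ with the invariant $\|\bdelta_{t}\|_{2}\leq\Psi_{m,p}(\bar{\omega})$ and $\|\bdelta_{t}\|_{1}\leq\bar{\omega}$ (plus $\bdelta_{t}\in\cC(\cS_{\star},3)$ for $t\geq1$); the base case is exactly~\eqref{eq_cond_initial_estimator}–\eqref{eq_cond_omega}. For the step, the invariant at $t$ together with $\vartheta_{t+1}\leq\eta_{0}\|\bg_{t}\|_{\infty}$ and the $\ell_{\infty}$ bound forces $\chi_{t+1}\leq e^{-1}$ using~\eqref{eq_cond_number_centers}, the condition $8B\Psi_{m,p}(\bar{\omega})\sqrt{|\cS_{\star}|}\leq1$, and $\mathcal{C}_{n,p}/(1-\cA_{0})\leq\Psi_{m,p}(\bar{\omega})$; this is routine-but-fiddly constant bookkeeping against the explicit $\cA_{1},\cA_{2}$. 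Then the defining identity $\cA_{2}\exp(2B\bar{\omega})\Psi_{m,p}(\bar{\omega})=\cA_{1}\sqrt{|\cS_{\star}|\log(pK)/m}$ absorbs the quadratic term into the linear one, giving $\|\bg_{t}\|_{\infty}\leq 2\cA_{1}\sqrt{|\cS_{\star}|\log(pK)/m}\,\|\bdelta_{t}\|_{2}+4B\sqrt{(\log p)/n}$ and hence $\|\bdelta_{t+1}\|_{2}\leq\cC_{1}|\cS_{\star}|\sqrt{\log(pK)/m}\,\|\bdelta_{t}\|_{2}+\mathcal{C}_{n,p}\leq\cA_{0}\|\bdelta_{t}\|_{2}+\mathcal{C}_{n,p}$ by~\eqref{eq_cond_number_centers}. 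Iterating this geometric recursion over $s=0,\dots,t$ yields $\|\bdelta_{t+1}\|_{2}\leq\cA_{0}^{t+1}\|\bdelta_{0}\|_{2}+\mathcal{C}_{n,p}/(1-\cA_{0})$, which is~\eqref{eq_bbeta_bound_iterated}; moreover $\cA_{0}\cdot\max\{\|\bdelta_{0}\|_{2},\mathcal{C}_{n,p}/(1-\cA_{0})\}+\mathcal{C}_{n,p}\leq\max\{\|\bdelta_{0}\|_{2},\mathcal{C}_{n,p}/(1-\cA_{0})\}\leq\Psi_{m,p}(\bar{\omega})$ preserves the $\ell_{2}$ half of the invariant, and $\|\bdelta_{t+1}\|_{1}\leq4\sqrt{|\cS_{\star}|}\|\bdelta_{t+1}\|_{2}\leq\bar{\omega}$ by the choice of $\bar{\omega}$ in~\eqref{eq_cond_omega} (the factor-$8$ terms there absorbing the passage from a sum to a maximum), preserving the $\ell_{1}$ half, which closes the induction.

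The main obstacle is the $\ell_{\infty}$ control of $\bg_{t}$ with explicit constants: the martingale concentration for $\hat{\nabla\cL}(\bbeta^{\star})$ has to accommodate data-dependent integrands $\bx_{i}-\cX_{k}(\bbeta^{\star},\cdot)$ and a random, possibly small denominator $S_{k}^{(0)}$ (whence Assumption~\ref{Assumption_at_risk} and the $2K\exp(-m\rho_{0}^{2}/2)$ term), while the Hessian-difference piece needs concentration of the $S^{(\ell)}$ processes uniformly over a neighborhood of $\bbeta^{\star}$ and careful tracking of the $\exp$-factors to isolate a clean $\exp(2B\bar{\omega})$; these Cox-specific concentration lemmas carry most of the technical weight. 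A secondary subtlety is non-circularity — the bound on $\|\bg_{t}\|_{\infty}$ presupposes $\bdelta_{t}\in\cC(\cS_{\star},3)$ and $\|\bdelta_{t}\|_{1}\leq\bar{\omega}$, which are only available from the already-established control of earlier iterates, so the inductive verification must be ordered exactly as above, with the $t=0$ step relying on the initial estimator being well-behaved as forced by~\eqref{eq_cond_initial_estimator}–\eqref{eq_cond_omega}.
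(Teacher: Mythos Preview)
Your plan is correct and follows the paper's proof essentially step for step: invoke Lemma~\ref{Theorem_consistency} via~\eqref{eq_cond_lambda}, decompose $\bg_{t}$ into $\hat{\nabla\cL}(\bbeta^{\star})$ plus a Hessian-difference piece, control the former by a martingale inequality (this is Lemma~\ref{Lemma_3.3_Huang}) and the latter by concentration at $\bbeta^{\star}$ (Lemma~\ref{Lemma_Hessian}) plus a second-order remainder (Lemma~\ref{Lemma_HB}), then run the induction of Remark~\ref{Remark_sufficient_condition} to turn the quadratic recursion into the linear contraction $\|\bdelta_{t+1}\|_{2}\leq\cA_{0}\|\bdelta_{t}\|_{2}+\cC_{n,p}$.

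One misplacement worth flagging: Assumption~\ref{Assumption_covariance_matrix_covariate} is not needed for the gradient term $\|\hat{\nabla\cL}(\bbeta^{\star})\|_{\infty}$ (Lemma~\ref{Lemma_3.3_Huang} uses only bounded predictable integrands, so Assumption~\ref{Assumption_covariate} suffices and the probability cost is just $2/p$). Where Assumption~\ref{Assumption_covariance_matrix_covariate} \emph{is} needed is precisely the step you called ``deterministic'': a crude entrywise third-derivative bound only delivers $\|\bdelta_{t}\|_{1}^{2}$, not $\|\bdelta_{t}\|_{2}^{2}$, and the exact constant $\cA_{2}$ in the theorem depends on $R_{1}$. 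The paper's Lemma~\ref{Lemma_HB} extracts the $\|\bdelta\|_{2}^{2}$ dependence by routing through the population quantity $\lambda_{\max}\bigl(\int_{0}^{\tau}\bSigma(t)\,d\Lambda_{0}(t)\bigr)$; this is also where the $2K\exp(-m\rho_{0}^{2}/2)$ contribution enters (via the lower bound on $S_{k}^{(0)}$), rather than in the gradient bound. With that reattribution the constants and the probability budget line up exactly with the paper's.
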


\begin{remark}
\label{Remark_sufficient_condition}
It is worth mentioning that Theorem~\ref{Theorem_iteration} ensures $\sup_{t \geq 0} \|\bdelta_{t}\|_{2} \leq \Psi_{m, p}(\bar{\omega})$. Indeed, the base case with $t = 0$ is satisfied by~\eqref{eq_cond_initial_estimator}, and by mathematical induction, if $\|\bdelta_{t}\|_{2} \leq \Psi_{m, p}(\bar{\omega})$ for some $t \geq 0$, then it follows from this induction hypothesis together with~\eqref{eq_cond_initial_estimator} and the upper bound on $\|\bdelta_{t}\|_{2}$ in~\eqref{eq_bbeta_bound} that 
\begin{align*}
    \|\bdelta_{t + 1}\|_{2} \leq \cA_{0} \|\bdelta_{t}\|_{2} + \cC_{n, p} \leq \cA_{0} \Psi_{m, p}(\bar{\omega}) + (1 - \cA_{0}) \Psi_{m, p}(\bar{\omega}) = \Psi_{m, p}(\bar{\omega}),
\end{align*}
so that by mathematical induction, we have $\sup_{t\geq 0} \|\bdelta_{t}\|_{2} \leq \Psi_{m, p}(\bar{\omega})$. Combined with the fact that $\delta_{t} \in \cC(\cS_{\star}, 3)$ for all $t \geq 1$ and the definition~\eqref{eq_cond_omega} of $\bar{\omega}$, it is also satisfied that $\sup_{t \geq 0} \|\bdelta_{t}\|_{1} \leq \bar{\omega}$, which facilitates the analysis of the error bound for the sequence $\{\bdelta_{t}\}_{t \geq 1}$.
To make sure that condition~\eqref{eq_cond_initial_estimator} holds, we can enlarge $\cA_{1}$ if needed, as $\Psi_{m, p}(\cdot)$ is increasing in $\cA_{1}$. Our results will hold as long as the corresponding condition~\eqref{eq_cond_number_centers} is satisfied. One can take the initial estimator $\bbeta_{0}$ to be the local $\ell_{1}$-regularized maximum partial likelihood estimator of the principal center $\mathcal{I}_{1}$, that is, 
\begin{align}
\label{eq_beta0_LASSO}
    \bbeta_{0} = \underset{\bbeta \in \mathbb{R}^{p}}{\argmin} \{\cL_{1}(\bbeta) + \vartheta_{0} \|\bbeta\|_{1}\},
\end{align}
where $\vartheta_{0} > 0$ is a tuning parameter. Specifically, taking $\vartheta_{0} = c_{0} B \sqrt{(\log p)/m}$ for some constant $c_{0} \geq 8$, Lemma~\ref{Theorem_consistency}
ensures that with probability at least $1 - 2 p^{-1}$,
\begin{align*}
    \|\bdelta_{0}\|_{2} \leq \frac{3 c_{0} e B}{2\varrho_{\star}^{2}} \sqrt{\frac{|\cS_{\star}|\log p}{m}}. 
\end{align*}
Consequently, Theorem~\ref{Theorem_iteration} ensures that for any iterate $\bbeta_{t}$ with $t \geq \lceil (\log K)/(2\log (1/\cA_{0}))\rceil$, we have the following high-probability bound  
\begin{align*}
    \|\bdelta_{t}\|_{2} \leq \left\{\frac{3 c_{0} e B}{2\varrho_{\star}^{2}} + \frac{6\eta_{0} e B}{(1 - \cA_{0})\varrho_{\star}^{2}}\right\}\sqrt{\frac{|\cS_{\star}|\log p}{n}},
\end{align*}
which has the same convergence rate as the full-sample estimator $\hbbeta_{\vartheta}$ in~\eqref{eq_hbbeta_lambda}. With this initial estimator $\bbeta_{0}$, Theorem~\ref{Theorem_iteration} shows that the convergence rate of the one-step estimator $\bbeta_{1}$ is
\begin{align*}
    \|\bdelta_{1}\|_{2} = O_{\bbP}\left(\frac{|\cS_{\star}|^{3/2}K \sqrt{(\log p) \log (pK)} }{n} + \sqrt{\frac{|\cS_{\star}| \log p}{n}}\right) = O_{\bbP}\left(\sqrt{\frac{|\cS_{\star}|\log p}{n}}\right),
\end{align*}
where the last equation follows if the number of centers $K$ satisfies
\begin{align}
\label{eq_K_cond_1}
    |\cS_{\star}| \sqrt{\frac{K \log(p K)}{m}} \lesssim 1.
\end{align}
This is the same assumption used in~\citet{Jordan2019communication} to obtain the optimal rate of convergence for the one-step estimator under the conventional generalized linear model. Compared with~\eqref{eq_K_cond_1}, to achieve the same convergence rate, we only require condition~\eqref{eq_cond_number_centers} on the number of centers, which is a much weaker condition.  
\qed 
\end{remark}

\begin{corollary}
\label{Corollary_beta0_LASSO}
Let $\bbeta_{0}$ be defined in~\eqref{eq_beta0_LASSO} with $\vartheta_0 = c_{0} B \sqrt{(\log p)/m}$ for some constant $c_{0} \geq 8$.
Assume that~\eqref{eq_cond_lambda} is satisfied and $|\cS_{\star}|\sqrt{\log (pK)/m} \leq \cA^{\diamond}$, where $\cA^{\diamond}$ is a positive constant depending only on $c_{0}, \varrho_{\star}, \eta_{0}, \cA_{0}, \cA_{1}$ and $\cA_{2}$. Then, under the conditions of Theorem~\ref{Theorem_iteration}, with probability at least $1 - 2K\exp(-m\rho_0^2/2) - 4/p - 12.442/(pK^{2})$, we have 
\begin{align*}
    \sup_{t \geq \lceil \frac{\log K}{2\log(1/\cA_{0})}\rceil}\|\bdelta_{t}\|_{2} \leq \mathcal{C}_{0}\sqrt{\frac{|\cS_{\star}| \log p}{n}},
\end{align*}
where $\cC_{0}$ is a positive constant depending only on $c_{0}, B, \varrho_{\star}, \eta_{0}$ and $\cA_{0}$, and its explicit expression is given in~\eqref{eq_A1_A2_definition}.
\end{corollary}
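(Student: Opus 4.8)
The plan is to use the local $\ell_1$-regularized estimator $\bbeta_{0}$ of~\eqref{eq_beta0_LASSO} as a warm start, bound its $\ell_{2}$ error via Lemma~\ref{Theorem_consistency}, verify that this choice of $\bbeta_{0}$ together with the scaling $|\cS_{\star}|\sqrt{\log(pK)/m}\le\cA^{\diamond}$ makes every hypothesis of Theorem~\ref{Theorem_iteration} hold, and then let the geometric contraction in~\eqref{eq_bbeta_bound_iterated} wash out the initialization term once $t$ exceeds $\lceil \log K/(2\log(1/\cA_{0}))\rceil$.

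\textbf{Step 1 (the warm start).} Problem~\eqref{eq_beta0_LASSO} is the instance of~\eqref{eq_iterated_estimator} in which the gradient enhancement vanishes, so condition~\eqref{eq_deviation_t} reduces to $\chi_{0}=12\vartheta_{0}B|\cS_{\star}|/\varrho_{\star}^{2}\le e^{-1}$ together with $\|\nabla\cL_{1}(\bbeta^{\star})\|_{\infty}\le\vartheta_{0}/2$. With $\vartheta_{0}=c_{0}B\sqrt{(\log p)/m}$, the first inequality is implied by $|\cS_{\star}|\sqrt{\log(pK)/m}\le\cA^{\diamond}$ once $\cA^{\diamond}\le\varrho_{\star}^{2}/(12ec_{0}B^{2})$, and the second — each coordinate of $\nabla\cL_{1}(\bbeta^{\star})$ being a bounded martingale transform under Assumptions~\ref{Assumption_covariate}--\ref{Assumption_at_risk} — holds on an event of probability at least $1-2p^{-1}$. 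Lemma~\ref{Theorem_consistency} then delivers $\bdelta_{0}\in\cC(\cS_{\star},3)$ and $\|\bdelta_{0}\|_{2}\le\frac{3c_{0}eB}{2\varrho_{\star}^{2}}\sqrt{|\cS_{\star}|\log p/m}$ on that event.

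\textbf{Step 2 (checking the hypotheses of Theorem~\ref{Theorem_iteration}).} From $\bdelta_{0}\in\cC(\cS_{\star},3)$ we get $\|\bdelta_{0}\|_{1}\le 4\sqrt{|\cS_{\star}|}\|\bdelta_{0}\|_{2}$, and since $n\ge m$ and $\log p\le\log(pK)$, each of the three terms in~\eqref{eq_cond_omega} is bounded by a constant $C$ (depending only on $c_{0},B,\varrho_{\star},\eta_{0},\cA_{0}$) times $|\cS_{\star}|\sqrt{\log(pK)/m}\le\cA^{\diamond}$; hence $\bar{\omega}\le C\cA^{\diamond}$ and, for $\cA^{\diamond}\le 1$, $\exp(2B\bar{\omega})$ is bounded, so $\Psi_{m,p}(\bar{\omega})\ge c'\cA_{1}\sqrt{|\cS_{\star}|\log(pK)/m}$ with $c'$ independent of $\cA_{1}$. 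Enlarging $\cA_{1}$ — which keeps Theorem~\ref{Theorem_iteration} valid and only inflates $\Psi_{m,p}(\bar{\omega})$ — we secure $\Psi_{m,p}(\bar{\omega})\ge\max\{\|\bdelta_{0}\|_{2},\mathcal{C}_{n,p}/(1-\cA_{0})\}$, i.e.~\eqref{eq_cond_initial_estimator}; then choosing $\cA^{\diamond}$ small (below $\cA_{0}/\cC_{1}$, below $\cA_{2}/(8B\cA_{1})$, and below the Step-1 bound, all depending only on $c_{0},\varrho_{\star},\eta_{0},\cA_{0},\cA_{1},\cA_{2}$) makes $\cC_{1}|\cS_{\star}|\sqrt{\log(pK)/m}\le\cA_{0}$ (so~\eqref{eq_cond_number_centers} holds), $8B\Psi_{m,p}(\bar{\omega})\sqrt{|\cS_{\star}|}\le 1$, and the consistency of~\eqref{eq_cond_omega} with $\bar{\omega}\le C\cA^{\diamond}$, all at once; condition~\eqref{eq_cond_lambda} is assumed.

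\textbf{Step 3 (iterate, take $t$ large, tally probabilities).} Theorem~\ref{Theorem_iteration} now applies and gives, with probability at least $1-2K\exp(-m\rho_{0}^{2}/2)-2p^{-1}-12.442/(pK^{2})$ and for all $t\ge 0$, $\|\bdelta_{t+1}\|_{2}\le\cA_{0}^{t+1}\|\bdelta_{0}\|_{2}+\frac{6\eta_{0}eB}{(1-\cA_{0})\varrho_{\star}^{2}}\sqrt{|\cS_{\star}|\log p/n}$. For $t\ge\lceil\log K/(2\log(1/\cA_{0}))\rceil$ one has $\cA_{0}^{t}\le K^{-1/2}=\sqrt{m/n}$, so $\cA_{0}^{t}\|\bdelta_{0}\|_{2}\le\frac{3c_{0}eB}{2\varrho_{\star}^{2}}\sqrt{|\cS_{\star}|\log p/n}$; since the geometric term is decreasing in $t$, the supremum over all such $t$ is controlled and adding the two pieces yields the claimed bound with $\cC_{0}=\frac{3c_{0}eB}{2\varrho_{\star}^{2}}+\frac{6\eta_{0}eB}{(1-\cA_{0})\varrho_{\star}^{2}}$. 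Intersecting with the Step-1 event and a union bound replace $2p^{-1}$ by $4/p$, which is the stated probability. The one genuinely delicate point is Step 2: the definition~\eqref{eq_cond_omega} of $\bar{\omega}$ is implicit and feeds back into $\Psi_{m,p}(\bar{\omega})$ through $\exp(2B\bar{\omega})$, so one must rule out circularity — this is handled by first noting that $\bar{\omega}$ is bounded by an absolute constant times $\cA^{\diamond}$ irrespective of $\cA_{1}$, and only afterwards enlarging $\cA_{1}$ and shrinking $\cA^{\diamond}$, in that order.
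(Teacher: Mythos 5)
Your proof is correct and follows the same route the paper takes (the paper does not formalize a separate proof of Corollary~\ref{Corollary_beta0_LASSO}, but its Remark~\ref{Remark_sufficient_condition} contains precisely this chain of reasoning: warm-start via Lemma~\ref{Theorem_consistency} with a local LASSO, verify the hypotheses of Theorem~\ref{Theorem_iteration} by noting that $\bar{\omega}$ is $\cA_{1}$-free so one may enlarge $\cA_{1}$ and then shrink $\cA^{\diamond}$, and finally let the $\cA_{0}^{t}\le K^{-1/2}$ contraction absorb the initialization error). Your probability bookkeeping — adding the independent $2/p$ for the event $\|\nabla\cL_{1}(\bbeta^{\star})\|_{\infty}\le\vartheta_{0}/2$ to the $2/p$ already present in Theorem~\ref{Theorem_iteration} for $\|\hat{\nabla\cL}(\bbeta^{\star})\|_{\infty}$ — and your identification of $\cC_{0}$ both match~\eqref{eq_A1_A2_definition}.
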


\section{Distributed Statistical Inference}
\label{sec_inference}
In this section, we consider the statistical inference of the parameter vector $\bbeta^{\star}$ in the distributed setting and propose two communication-efficient inference procedures: the first constructs confidence interval for the linear functional $\bc^{\top} \bbeta^{\star}$ via a novel distributed bias-corrected $\ell_{1}$-regularized estimator, where $\bc$ represents a $p$-dimensional loading vector of interest; the second focuses on testing for any coordinate element of $\bbeta^{\star}$ based on a distributed decorrelated score test.

\subsection{Inference for linear functional}
We begin with the construction of a distributed bias-corrected estimator for $\bc^{\top}\bbeta^{\star}$. Recall that $\tilde{\bbeta} = \bbeta_{T}$. Throughout this section, we use $\hat{\bbeta} = \bbeta_{T + 1}$, defined in~\eqref{eq_iterated_estimator} with $t = T$, as the estimator for $\bbeta^{\star}$. Then, for the $k$-th center, we compute  
\begin{align}
\label{eq_estimation_omega}
    \hat{\bomega}_{k} = \underset{\bomega \in \bbR^{p}}{\argmin} \left\{\bomega^{\top} \nabla^{2} \cL_{k}(\hat{\bbeta})\bomega - 2\bc^{\top}\bomega + \vartheta_{k}^{\diamond} \|\bomega\|_{1}\right\}, 
\end{align}
where $\vartheta_{k}^{\diamond} > 0$ is a regularization parameter. Motivated by~\citet{vandeGeer2014asymptotically}, our bias-corrected estimator for $\bc^{\top}\bbeta^{\star}$ under the distributed setting is defined as  
\begin{align}
\label{eq_Debias_Linear_Functional}
    \tilde{\bc^{\top}\bbeta^{\star}} = \bc^{\top}\hat{\bbeta} + \frac{1}{K} \sum_{k = 1}^{K} \hat{\bomega}_{k}^{\top} \left\{\nabla \cL_{k}(\tilde{\bbeta}) - \nabla \cL_{k} (\hat{\bbeta}) - \hat{\nabla\cL}(\tilde{\bbeta})\right\}, 
\end{align}
in view of~\eqref{eq_iterated_estimator}. Let $\bomega^{\star} = (\omega_{1}^{\star}, \ldots, \omega_{p}^{\star})^{\top} \in \bbR^{p}$ denote the population version of $\hat{\bomega}_k$, which is defined by $\cH^{\star}\bomega^{\star} = \bc$, where $\cH^{\star}$ is the population Hessian matrix defined by
\begin{align}
\label{eq_L_star}
    \cH^{\star} = \bbE \left[\int_{0}^{\tau} Y_{1}(t) \exp\{\bx_{1}(t)^\top \bbeta^{\star}\} \{\bx_{1}(t) - \be(\bbeta^{\star}, t)\}^{\otimes 2} d\Lambda_0(t)\right],
\end{align}
and $\be(\bbeta^{\star}, t) = s^{(1)}(\bbeta^{\star}, t)/s^{(0)}(\bbeta^{\star}, t)$. Under the high-dimensional setting, we assume that $\bomega^{\star}$ is sparse. Let $\cS_{\diamond} = \{j \in [p] : \omega_{j}^{\star} \neq 0\}$ denote the support of $\bomega^{\star}$ and $|\cS_{\diamond}| = \sum_{j = 1}^{p} \mathbb{I}\{\omega_{j}^{\star} \neq 0\}$ be its cardinality. Similar to Assumption~\ref{Assumption_covariate} and Assumption~\ref{Assumption_eigenvalue}, we impose the following regularity conditions to study the consistency of $\hat{\bomega}_{k}$, $k = 1, \ldots, K$.

\begin{assumption}
\label{Assumption_cM_bomega}
There exists a positive constant $\cM_{\diamond} < \infty$ such that 
\begin{align*}
    \max_{i \in [n]} \sup_{t \in [0, \tau]} |\bx_{i}(t)^{\top} \bomega^{\star}| \leq \cM_{\diamond}.
\end{align*}
\end{assumption}

\begin{assumption}
\label{Assumption_eigenvalue_uniform_bomega} 
There exists a constant $\varrho_{\diamond} > 0$ such that 
\begin{align*}
    \min_{k \in [K]} \min_{0 \neq \bv \in \cC(\cS_{\diamond}, 4)} \frac{\bv^\top \nabla^2 \cL_{k}(\bbeta^{\star}) \bv}{\|\bv\|_{2}^{2}} \geq \varrho_{\diamond}^{2},
\end{align*}
\end{assumption}

\begin{lemma}
\label{Lemma_bv_consistency}
Let Assumptions~\ref{Assumption_covariate}--\ref{Assumption_eigenvalue_uniform_bomega} hold. Take $\vartheta_{k}^{\diamond} = \cI_{0} \cA_{3, \diamond} \sqrt{(\log p)/m}$ for some constant $\cI_{0} \geq 4$, 
where $\cA_{3, \diamond} < \infty$ is a positive constant whose explicit expression is given in~\eqref{eq_A1_A2_definition}. Then, for each $k \in [K]$, we have
\begin{align}\label{eq_bound_bomegak}
    \|\hat{\bomega}_{k} - \bomega^{\star}\|_{1} = O_{\bbP}\left(|\cS_{\star}|\sqrt{\frac{\log p}{nK}} + |\cS_{\diamond}|\sqrt{\frac{\log p}{m}}\right). 
\end{align}
\end{lemma}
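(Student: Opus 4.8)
The plan is to run a standard Lasso-type analysis for the quadratic program defining $\hat{\bomega}_k$, with the twist that the ``design'' is the local Hessian $\nabla^{2}\cL_{k}(\hat{\bbeta})$ evaluated at the estimated $\hat{\bbeta}$ rather than at $\bbeta^{\star}$. First I would write down the basic inequality from optimality: since $\hat{\bomega}_k$ minimizes $\bomega\mapsto \bomega^{\top}\nabla^{2}\cL_{k}(\hat{\bbeta})\bomega - 2\bc^{\top}\bomega + \vartheta_{k}^{\diamond}\|\bomega\|_{1}$ and $\bomega^{\star}$ is feasible, setting $\hat{\bnu}_{k} = \hat{\bomega}_{k} - \bomega^{\star}$ and rearranging gives
$$\hat{\bnu}_{k}^{\top}\nabla^{2}\cL_{k}(\hat{\bbeta})\hat{\bnu}_{k} \le 2\bigl|\langle \bg_{k},\hat{\bnu}_{k}\rangle\bigr| + \vartheta_{k}^{\diamond}\bigl(\|\hat{\bnu}_{k,\cS_{\diamond}}\|_{1} - \|\hat{\bnu}_{k,\cS_{\diamond}^{c}}\|_{1}\bigr), \qquad \bg_{k} := \nabla^{2}\cL_{k}(\hat{\bbeta})\bomega^{\star} - \bc.$$
Using $\bc = \cH^{\star}\bomega^{\star}$ I split $\bg_{k} = \bg_{k}^{(1)} + \bg_{k}^{(2)}$, with $\bg_{k}^{(1)} = [\nabla^{2}\cL_{k}(\hat{\bbeta}) - \nabla^{2}\cL_{k}(\bbeta^{\star})]\bomega^{\star}$ and $\bg_{k}^{(2)} = [\nabla^{2}\cL_{k}(\bbeta^{\star}) - \cH^{\star}]\bomega^{\star}$.

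Next I would control the two pieces. For $\bg_{k}^{(2)}$, a center-level concentration bound gives $\|\bg_{k}^{(2)}\|_{\infty} = O_{\bbP}(\sqrt{(\log p)/m})$: this follows from a Bernstein/empirical-process argument over the $m$ i.i.d.\ samples in center $k$, where Assumptions~\ref{Assumption_covariate}, \ref{Assumption_beta_1} and~\ref{Assumption_at_risk} keep the summands bounded and $S^{(0)}(\bbeta^{\star},\cdot)$ uniformly away from zero, and Assumption~\ref{Assumption_cM_bomega} keeps $\bx_{i}(t)^{\top}\bomega^{\star}$ bounded so the relevant variance-type increments are $O(1)$; this is exactly what motivates the choice $\vartheta_{k}^{\diamond} = \cI_{0}\cA_{3,\diamond}\sqrt{(\log p)/m}$ with $\cI_{0}\ge 4$, which then ensures $\vartheta_{k}^{\diamond} \ge 2\|\bg_{k}^{(2)}\|_{\infty}$ with high probability. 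For $\bg_{k}^{(1)}$, I would invoke a self-bounding continuity property of the Cox Hessian: because $\nabla^{2}\cL_{k}(\bbeta)$ is a $\bar{N}_{k}$-weighted average of variance matrices whose exponential weights change by a factor at most $e^{\pm B\|\hat{\bbeta}-\bbeta^{\star}\|_{1}}$ under Assumption~\ref{Assumption_covariate}, one gets both $\hat{\bnu}_{k}^{\top}\nabla^{2}\cL_{k}(\hat{\bbeta})\hat{\bnu}_{k} \ge e^{-2B\|\hat{\bbeta}-\bbeta^{\star}\|_{1}}\,\hat{\bnu}_{k}^{\top}\nabla^{2}\cL_{k}(\bbeta^{\star})\hat{\bnu}_{k}$ and
$$\bigl|\langle \bg_{k}^{(1)},\hat{\bnu}_{k}\rangle\bigr| \le \bigl(e^{2B\|\hat{\bbeta}-\bbeta^{\star}\|_{1}}-1\bigr)\sqrt{\hat{\bnu}_{k}^{\top}\nabla^{2}\cL_{k}(\bbeta^{\star})\hat{\bnu}_{k}}\;\sqrt{\bomega^{\star\top}\nabla^{2}\cL_{k}(\bbeta^{\star})\bomega^{\star}},$$
where the last factor is $\le \cM_{\diamond}$ by Assumption~\ref{Assumption_cM_bomega}. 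Then I plug in $\|\hat{\bbeta}-\bbeta^{\star}\|_{1} = O_{\bbP}(|\cS_{\star}|\sqrt{(\log p)/n})$, which follows from Corollary~\ref{Corollary_beta0_LASSO} applied to $\hat{\bbeta}=\bbeta_{T+1}$ together with $\bdelta_{T+1}\in\cC(\cS_{\star},3)$ from Theorem~\ref{Theorem_iteration} (so that $\|\bdelta_{T+1}\|_{1}\lesssim\sqrt{|\cS_{\star}|}\,\|\bdelta_{T+1}\|_{2}$), whence $e^{2B\|\hat{\bbeta}-\bbeta^{\star}\|_{1}}-1 \lesssim \|\hat{\bbeta}-\bbeta^{\star}\|_{1}$.

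Finally I would run the Lasso algebra. Combining the displays with the lower bound $\hat{\bnu}_{k}^{\top}\nabla^{2}\cL_{k}(\hat{\bbeta})\hat{\bnu}_{k} \ge \tfrac12\hat{\bnu}_{k}^{\top}\nabla^{2}\cL_{k}(\bbeta^{\star})\hat{\bnu}_{k}$ (valid once $B\|\hat{\bbeta}-\bbeta^{\star}\|_{1}$ is small, which holds under the sparsity/sample-size condition $|\cS_{\star}|\sqrt{\log(pK)/m}\le\cA^{\diamond}$), with $\|\bg_{k}^{(2)}\|_{\infty}\le\vartheta_{k}^{\diamond}/2$, and with a completion-of-the-square step for the $\bg_{k}^{(1)}$ term, I get that $\hat{\bnu}_{k}$ lies (up to an additive slack of order $\|\hat{\bbeta}-\bbeta^{\star}\|_{1}^{2}/\vartheta_{k}^{\diamond}$) in the cone $\cC(\cS_{\diamond},4)$, apply the restricted-eigenvalue Assumption~\ref{Assumption_eigenvalue_uniform_bomega} to get $\hat{\bnu}_{k}^{\top}\nabla^{2}\cL_{k}(\bbeta^{\star})\hat{\bnu}_{k}\ge\varrho_{\diamond}^{2}\|\hat{\bnu}_{k}\|_{2}^{2}$, and solve the resulting quadratic inequality to obtain $\|\hat{\bnu}_{k}\|_{2}\lesssim(\sqrt{|\cS_{\diamond}|}\,\vartheta_{k}^{\diamond} + \|\hat{\bbeta}-\bbeta^{\star}\|_{1})/\varrho_{\diamond}^{2}$; then $\|\hat{\bnu}_{k}\|_{1}\le 4\sqrt{|\cS_{\diamond}|}\,\|\hat{\bnu}_{k}\|_{2}$ up to the slack, and substituting $\vartheta_{k}^{\diamond}\asymp\sqrt{(\log p)/m}$ and $\|\hat{\bbeta}-\bbeta^{\star}\|_{1}\asymp|\cS_{\star}|\sqrt{(\log p)/n}$, then using $n=mK$ and the sparsity condition to absorb the lower-order cross terms, yields $\|\hat{\bomega}_{k}-\bomega^{\star}\|_{1} = O_{\bbP}\bigl(|\cS_{\star}|\sqrt{(\log p)/(nK)} + |\cS_{\diamond}|\sqrt{(\log p)/m}\bigr)$.

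The main obstacle is the control of $\bg_{k}^{(1)}$ and of the left-hand quadratic form, i.e.\ transferring the local Cox Hessian from the data-dependent point $\hat{\bbeta}$ back to $\bbeta^{\star}$: $\hat{\bbeta}$ is correlated with center $k$'s own data (it depends on all $K$ centers through $\hat{\nabla\cL}$), so one cannot condition and treat center $k$ as a fresh sample --- the \emph{deterministic} self-bounding inequality for the Cox Hessian circumvents this, but one must ensure it costs only the factor $\cM_{\diamond}$ rather than $\|\bomega^{\star}\|_{1}$ (which would destroy the rate), which is precisely where Assumption~\ref{Assumption_cM_bomega} is used, and that the restricted-eigenvalue transfer does not degrade the constant $\varrho_{\diamond}^{2}$. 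The remaining ingredients --- the center-level Bernstein bound for $\bg_{k}^{(2)}$ and the lower bound on $S^{(0)}$ via Assumption~\ref{Assumption_at_risk} --- are more routine but still require a uniform-in-$t$ empirical-process argument, and they fix the intrinsic scale $|\cS_{\diamond}|\sqrt{(\log p)/m}$ of the estimation error.
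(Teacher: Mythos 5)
Your overall architecture is the one the paper uses: basic optimality inequality, decompose the bias into a noise-at-$\bbeta^{\star}$ piece ($\bg_k^{(2)}$) and a Hessian-shift piece ($\bg_k^{(1)}$), concentrate the former at scale $\sqrt{(\log p)/m}$ to justify $\vartheta_k^{\diamond}$, control the latter deterministically, and finish with a cone/restricted-eigenvalue argument. The gap is in the control of the Hessian-shift piece $\bg_k^{(1)}$, and it is quantitative: your ``self-bounding continuity'' PSD-sandwich bound
\begin{align*}
\bigl|\hat{\bnu}_k^{\top}\{\nabla^{2}\cL_k(\hbbeta)-\nabla^{2}\cL_k(\bbeta^{\star})\}\bomega^{\star}\bigr|
\;\le\; \bigl(e^{c B\|\hbbeta-\bbeta^{\star}\|_{1}}-1\bigr)\sqrt{\hat{\bnu}_k^{\top}\nabla^{2}\cL_k(\bbeta^{\star})\hat{\bnu}_k}\;\cdot\;\cM_\diamond
\end{align*}
carries the prefactor $e^{cB\|\hbbeta-\bbeta^{\star}\|_{1}}-1 \asymp \|\hbbeta-\bbeta^{\star}\|_{1} \asymp |\cS_{\star}|\sqrt{(\log p)/n}$, because the sandwich of Lemma~\ref{Lemma_3.2_Huang} only sees the perturbation through the sup-norm quantity $\eta_{\bdelta}\lesssim B\|\bdelta\|_{1}$. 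The paper's Lemma~\ref{Lemma_bw_LASSO} instead rewrites this term via the mean value theorem as $\frac{1}{m}\int_0^\tau\sum_i \omega_i(\bbeta^{\circ},t)\,(\hat{\bnu}_k^\top\tilde\bx_i)(\bomega^{\star\top}\tilde\bx_i)(\tilde\bx_i^\top\bdelta)\,d\bar N_k(t)$, pulls out $|\bomega^{\star\top}\tilde\bx_i|\le 2\cM_\diamond$, and then applies Cauchy--Schwarz to the remaining two factors, yielding the prefactor $\sqrt{\bdelta^{\top}\nabla^{2}\cL_k(\bbeta^{\circ})\bdelta}\asymp\sqrt{|\cS_{\star}|(\log p)/n}$. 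Your prefactor is larger by $\sqrt{|\cS_{\star}|}$; since this quantity enters the $\ell_{1}$ bound through a term of the form $\cT_n^{2}/\vartheta_k^{\diamond}$ (your ``slack'' $\|\hbbeta-\bbeta^{\star}\|_{1}^{2}/\vartheta_k^{\diamond}$), the loss squares and you end up with $|\cS_{\star}|^{2}\sqrt{(\log p)/(nK)}$ in place of the stated $|\cS_{\star}|\sqrt{(\log p)/(nK)}$.

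Concretely, the input you never use is the quadratic-form control on $\bdelta=\hbbeta-\bbeta^{\star}$, namely $\bdelta^{\top}\nabla^{2}\cL_1(\bbeta^{\star})\bdelta = O_{\bbP}(|\cS_{\star}|(\log p)/n)$, which the paper extracts from Corollary~\ref{Corollary_beta0_LASSO} and then transports to each $\nabla^{2}\cL_k(\bbeta^{\star})$ via Lemma~\ref{Lemma_Hessian} and $\|\bdelta\|_{1}^{2}$. This quantity is smaller than $\|\bdelta\|_{1}^{2}$ by a factor of $|\cS_{\star}|$ and is precisely what makes the cross term $|\cS_{\star}|\sqrt{(\log p)/(nK)}$ rather than $|\cS_{\star}|^{2}\sqrt{(\log p)/(nK)}$. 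You correctly identify that the difficulty is the transfer $\hbbeta\to\bbeta^{\star}$ inside the local Hessian and that $\cM_\diamond$ (not $\|\bomega^{\star}\|_{1}$) must appear, but the deterministic matrix sandwich is too blunt an instrument: it cannot see that the Hessian shift is ``quadratic-form small'' in $\bdelta$ rather than merely ``$\ell_{1}$ small''. To close the gap, replace the sandwich bound on $\langle\bg_k^{(1)},\hat{\bnu}_k\rangle$ by the MVT plus Cauchy--Schwarz argument from the proof of Lemma~\ref{Lemma_bw_LASSO}, keeping the two quadratic forms $\hat{\bnu}_k^{\top}\nabla^{2}\cL_k(\bbeta^{\circ})\hat{\bnu}_k$ and $\bdelta^{\top}\nabla^{2}\cL_k(\bbeta^{\circ})\bdelta$ factored, and only then pass from $\bbeta^{\circ}$ to $\hbbeta$ and $\bbeta^{\star}$ via Lemma~\ref{Lemma_3.2_Huang} at constant cost.
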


Lemma~\ref{Lemma_bv_consistency} demonstrates that the convergence rate of the estimator $\hat{\bomega}_{k}$ consists of two parts. The second term shares the same convergence rate as the optimal rate for the $\ell_{1}$-regularized quadratic minimization problem, and the first term corresponds to the cost induced by the estimation of $\bbeta^{\star}$.

\begin{theorem}
\label{Theorem_distributed_CLT}
Let Assumptions~\ref{Assumption_covariate}--\ref{Assumption_eigenvalue_uniform_bomega} hold. Assume that 
\begin{align*}
    \frac{\cM_{\diamond}}{m\sqrt{\bc^{\top} \bomega^{\star}}} \to 0 \enspace \mathrm{and} \enspace \frac{|\cS_{\star}| \log p}{\sqrt{m}} \vee \frac{|\cS_{\diamond}| \log p}{\sqrt{m}} \to 0. 
\end{align*}
Then, we have 
\begin{align}\label{eq_central_limit_theorem_c}
    \sup_{z\in\bbR}\left|\bbP\left\{\frac{\sqrt{n}\left(\tilde{\bc^{\top}\bbeta^{\star}} - \bc^{\top}\bbeta^{\star}\right)}{\sqrt{\bc^{\top} \bomega^{\star}}} \leq z\right\} - \Phi(z)\right| \to 0.
\end{align}
\end{theorem}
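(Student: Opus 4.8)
The plan is to reduce $\tilde{\bc^{\top}\bbeta^{\star}}-\bc^{\top}\bbeta^{\star}$ to a single martingale integral plus asymptotically negligible remainders, and then invoke a martingale central limit theorem. Write $\bg_k=\nabla\cL_k(\tilde\bbeta)-\nabla\cL_k(\hat\bbeta)-\widehat{\nabla\cL}(\tilde\bbeta)$ for the bracketed quantity in~\eqref{eq_Debias_Linear_Functional}. A direct computation gives the identity $\frac{1}{K}\sum_{k=1}^{K}\bg_k=-\widehat{\nabla\cL}(\hat\bbeta)$, so that, decomposing $\hat\bomega_k=\bomega^{\star}+(\hat\bomega_k-\bomega^{\star})$,
\[
\frac{1}{K}\sum_{k=1}^{K}\hat\bomega_k^{\top}\bg_k=-\bomega^{\star\top}\widehat{\nabla\cL}(\hat\bbeta)+\frac{1}{K}\sum_{k=1}^{K}(\hat\bomega_k-\bomega^{\star})^{\top}\bg_k .
\]
Taylor-expanding $\widehat{\nabla\cL}(\hat\bbeta)$ around $\bbeta^{\star}$, setting $\hat\bdelta=\hat\bbeta-\bbeta^{\star}$, and using the defining relation $\bc=\cH^{\star}\bomega^{\star}$ together with the symmetry of $\nabla^{2}\cL_k$, the plan is to arrive at the decomposition
\[
\tilde{\bc^{\top}\bbeta^{\star}}-\bc^{\top}\bbeta^{\star}=-\bomega^{\star\top}\widehat{\nabla\cL}(\bbeta^{\star})+E_1+E_2+E_3,
\]
with $E_1=\frac{1}{K}\sum_k(\hat\bomega_k-\bomega^{\star})^{\top}\bg_k$, $E_2=\bomega^{\star\top}\{\cH^{\star}-\frac{1}{K}\sum_k\nabla^{2}\cL_k(\bbeta^{\star})\}\hat\bdelta$, and $E_3=-\bomega^{\star\top}\frac{1}{K}\sum_k\{\nabla\cL_k(\hat\bbeta)-\nabla\cL_k(\bbeta^{\star})-\nabla^{2}\cL_k(\bbeta^{\star})\hat\bdelta\}$. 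It then suffices to show $\sqrt{n}\,(-\bomega^{\star\top}\widehat{\nabla\cL}(\bbeta^{\star}))/\sqrt{\bc^{\top}\bomega^{\star}}\todist\cN(0,1)$ and $\sqrt{n}\,(E_1+E_2+E_3)=o_{\bbP}(\sqrt{\bc^{\top}\bomega^{\star}})$; uniform convergence in $z$ then follows from P\'olya's theorem since the limit is continuous.

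For the leading term I would first observe that, since $\cX_k(\bbeta^{\star},t)=S^{(1)}_k(\bbeta^{\star},t)/S^{(0)}_k(\bbeta^{\star},t)$ (using $S^{(\ell)}_k,\cX_k,\bV_k$ for the subsample analogues of $S^{(\ell)},\cX,\bV$ built from $\cI_k$), the compensator parts of the $dN_i$ cancel within each center, so $\nabla\cL_k(\bbeta^{\star})=-\frac{1}{m}\sum_{i\in\cI_k}\int_0^{\tau}\{\bx_i(t)-\cX_k(\bbeta^{\star},t)\}\,dM_i(t)$. Hence $-\sqrt{n}\,\bomega^{\star\top}\widehat{\nabla\cL}(\bbeta^{\star})=\sum_{k=1}^{K}U_k$ with $U_k=\frac{1}{\sqrt{n}}\sum_{i\in\cI_k}\int_0^{\tau}\bomega^{\star\top}\{\bx_i(t)-\cX_k(\bbeta^{\star},t)\}\,dM_i(t)$; each $U_k$ is a mean-zero martingale integral with respect to the filtration generated by center $k$, and the $U_k$ are mutually independent. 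Its predictable quadratic variation is $\frac{1}{K}\sum_k\int_0^{\tau}S^{(0)}_k(\bbeta^{\star},t)\,\bomega^{\star\top}\bV_k(\bbeta^{\star},t)\bomega^{\star}\,d\Lambda_0(t)$, which by the uniform law of large numbers for the $S^{(\ell)}_k$ and the identity $\cH^{\star}=\int_0^{\tau}\{s^{(2)}(\bbeta^{\star},t)-s^{(1)}(\bbeta^{\star},t)^{\otimes 2}/s^{(0)}(\bbeta^{\star},t)\}\,d\Lambda_0(t)$ I expect to converge in probability to $\bomega^{\star\top}\cH^{\star}\bomega^{\star}=\bc^{\top}\bomega^{\star}$; the conditional Lindeberg (vanishing-jump) condition should follow from $|\bomega^{\star\top}\{\bx_i(t)-\cX_k(\bbeta^{\star},t)\}|\lesssim\cM_{\diamond}$ and $\cM_{\diamond}/(m\sqrt{\bc^{\top}\bomega^{\star}})\to0$. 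Rebolledo's martingale central limit theorem then delivers the normal limit for the leading term.

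Next I would bound the three remainders. For $E_2$, H\"older's inequality gives $|E_2|\le\|\bomega^{\star}\|_1\,\|\cH^{\star}-\frac{1}{K}\sum_k\nabla^{2}\cL_k(\bbeta^{\star})\|_{\max}\,\|\hat\bdelta\|_1$, where the middle factor is $O_{\bbP}(\sqrt{(\log p)/n})$ by entrywise concentration of the fully averaged Hessian toward $\cH^{\star}$ and $\|\hat\bdelta\|_1=O_{\bbP}(|\cS_{\star}|\sqrt{(\log p)/n})$ by Corollary~\ref{Corollary_beta0_LASSO} together with $\hat\bdelta\in\cC(\cS_{\star},3)$; the rate conditions $|\cS_{\star}|(\log p)/\sqrt{m}\vee|\cS_{\diamond}|(\log p)/\sqrt{m}\to0$ then make $\sqrt{n}\,E_2$ negligible. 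For $E_3$, I would write the per-center remainder in integral form, $\bomega^{\star\top}\{\nabla\cL_k(\hat\bbeta)-\nabla\cL_k(\bbeta^{\star})-\nabla^{2}\cL_k(\bbeta^{\star})\hat\bdelta\}=\int_0^{1}\bomega^{\star\top}\{\nabla^{2}\cL_k(\bbeta^{\star}+s\hat\bdelta)-\nabla^{2}\cL_k(\bbeta^{\star})\}\hat\bdelta\,ds$, and bound it by a constant multiple of $\|\hat\bdelta\|_2^{2}$ using the generalized self-concordance of the Cox log-partial-likelihood (bounded covariates, $\hat\bdelta$ in the cone), which is again negligible after multiplication by $\sqrt{n}$. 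For $E_1$, $|E_1|\le\max_k\|\hat\bomega_k-\bomega^{\star}\|_1\cdot\max_k\|\bg_k\|_{\infty}$: the first maximum is controlled by Lemma~\ref{Lemma_bv_consistency}, and for the second I would rewrite $\bg_k=\{\nabla\cL_k(\tilde\bbeta)-\widehat{\nabla\cL}(\tilde\bbeta)\}-\{\nabla\cL_k(\hat\bbeta)-\widehat{\nabla\cL}(\hat\bbeta)\}-\widehat{\nabla\cL}(\hat\bbeta)$ and use the per-center deviation bound $\sup_k\|\nabla\cL_k(\bbeta)-\widehat{\nabla\cL}(\bbeta)\|_{\infty}\lesssim\sqrt{\log(pK)/m}$ for $\bbeta$ near $\bbeta^{\star}$, together with $\|\hat\bbeta-\tilde\bbeta\|_1\lesssim|\cS_{\star}|\sqrt{(\log p)/n}$ and $\|\widehat{\nabla\cL}(\bbeta^{\star})\|_{\infty}\lesssim\sqrt{(\log p)/n}$, to obtain $\max_k\|\bg_k\|_{\infty}=O_{\bbP}(\sqrt{\log(pK)/m}+|\cS_{\star}|\sqrt{(\log p)/n})$ and hence $\sqrt{n}\,E_1$ negligible under the rate conditions.

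The two steps I expect to be most delicate are the following. First, the variance-convergence step for $\sum_kU_k$: one must verify that the $O(1/m)$ bias incurred by using the empirical $\cX_k,\bV_k$ in place of their population limits is negligible relative to $\bc^{\top}\bomega^{\star}$, which is precisely what $\cM_{\diamond}/(m\sqrt{\bc^{\top}\bomega^{\star}})\to0$ buys. Second, the control of $E_1$ is the genuinely distributed part of the argument: unlike the one-shot debiased Cox estimator, here $\bg_k$ for $k\neq1$ is \emph{not} of the small order $\vartheta_{T+1}$, since it contains the non-vanishing aggregated gradient $\widehat{\nabla\cL}(\tilde\bbeta)$; one therefore cannot bound $\|\bg_k\|_{\infty}$ by the regularization level and must instead exploit the cancellation in $\frac{1}{K}\sum_k\bg_k$ together with the per-center concentration at rate $\sqrt{\log(pK)/m}$. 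The remaining pieces ($E_2$, $E_3$, and the appeal to P\'olya's theorem) are routine given the earlier consistency results.
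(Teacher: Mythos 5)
Your decomposition is algebraically correct: with $\bg_k=-\nabla\tilde\cL_k(\hat\bbeta)$, the identity $\tilde{\bc^\top\bbeta^\star}-\bc^\top\bbeta^\star=-\bomega^{\star\top}\hat{\nabla\cL}(\bbeta^\star)+E_1+E_2+E_3$ holds and, after a mean-value expansion $\nabla\tilde\cL_k(\hat\bbeta)=\nabla\tilde\cL_k(\bbeta^\star)+\nabla^2\cL_k(\bbeta^\circ)\hat\bdelta$, it coincides with the paper's decomposition up to regrouping ($E_1$ lumps together the paper's $\Delta_c^\diamond$ and $\Delta_{c,2}^\circ$, while $E_2$ and $E_3$ are exactly $\Delta_{c,1}^\circ$ and $\Delta_{c,3}^\circ$). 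The treatment of the leading term is also sound and parallels the paper's Lemma~\ref{Lemma_gradient_clt}.

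The gap is in the control of $E_1$. You bound $|E_1|\le\max_k\|\hat\bomega_k-\bomega^\star\|_1\cdot\max_k\|\bg_k\|_\infty$, but $\bg_k=-\nabla\tilde\cL_k(\hat\bbeta)$ contains the linear piece $-\nabla^2\cL_k(\bbeta^\circ)\hat\bdelta$, whose sup-norm is $O_{\bbP}\bigl(\|\hat\bdelta\|_1\bigr)=O_{\bbP}\bigl(|\cS_\star|\sqrt{(\log p)/n}\bigr)$; even the sharper estimate (which already improves on your claimed $\sqrt{\log(pK)/m}$ via the cancellation $\nabla\cL_k(\tilde\bbeta)-\nabla\cL_k(\hat\bbeta)=\nabla^2\cL_k(\cdot)(\tilde\bbeta-\hat\bbeta)$) gives $\max_k\|\bg_k\|_\infty=O_{\bbP}\bigl(|\cS_\star|\sqrt{(\log p)/n}\bigr)$, not $O_{\bbP}\bigl(\sqrt{(\log p)/n}\bigr)$. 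Pairing this with $\max_k\|\hat\bomega_k-\bomega^\star\|_1=O_{\bbP}\bigl(|\cS_\diamond|\sqrt{(\log p)/m}\bigr)$ yields $\sqrt n\,|E_1|\lesssim |\cS_\star|\,|\cS_\diamond|\,(\log p)/\sqrt m$, which the hypothesis $\bigl(|\cS_\star|\vee|\cS_\diamond|\bigr)(\log p)/\sqrt m\to0$ does not control. The paper avoids this loss by first splitting $\nabla\tilde\cL_k(\hat\bbeta)=\nabla\tilde\cL_k(\bbeta^\star)+\nabla^2\cL_k(\bbeta^\circ)\hat\bdelta$: it applies H\"older only to the first term, where $\max_k\|\nabla\tilde\cL_k(\bbeta^\star)\|_\infty=O_{\bbP}\bigl(\sqrt{(\log p)/n}\bigr)$ by the GEL cancellation together with Lemma~\ref{Lemma_HB} and Lemma~\ref{Lemma_Hessian}, and it bounds the second term (the paper's $\Delta_{c,2}^\circ$) by Cauchy--Schwarz in the Hessian inner product, $\bigl(\bvarphi_k^\top\nabla^2\cL_k\bvarphi_k\bigr)^{1/2}\bigl(\hat\bdelta^\top\nabla^2\cL_k\hat\bdelta\bigr)^{1/2}$, which produces $\sqrt{|\cS_\star|\,|\cS_\diamond|}\,(\log p)/\sqrt m$ rather than $|\cS_\star|\,|\cS_\diamond|\,(\log p)/\sqrt m$. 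This Cauchy--Schwarz step is the crucial device that your H\"older bound misses.

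A secondary issue: your bounds for $E_2$ and $E_3$ introduce $\|\bomega^\star\|_1$ (or an unqualified constant) as a multiplier, which is not controlled by the assumptions. The paper instead uses the loading-specific bound $\cM_\diamond\ge\sup_t|\bx(t)^\top\bomega^\star|$ in Lemma~\ref{Lemma_Hessian} and Remark~\ref{remark_HB} to directly bound $\|\{\cH^\star-\hat{\nabla^2\cL}(\bbeta^\star)\}\bomega^\star\|_\infty$ and $|\bomega^{\star\top}\{\hat{\nabla^2\cL}(\bbeta^\star)-\hat{\nabla^2\cL}(\bbeta^\circ)\}\hat\bdelta|$, without ever producing a factor of $\|\bomega^\star\|_1$. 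You should replace your H\"older bound on $E_2$ by the paper's $\|\bc-\hat{\nabla^2\cL}(\bbeta^\star)\bomega^\star\|_\infty\|\hat\bdelta\|_1$, and make explicit that the constant multiplier in your $E_3$ bound is $\cM_\diamond$ rather than an absolute constant.
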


Practically the asymptotic variance $\bc^{\top} \bomega^{\star}$ is typically unknown. Here we propose a communication-efficient feasible distributed estimator for $\bc^{\top}\bomega^{\star}$. Recall that $\hat{\bomega}_{k}$ is a $\ell_{1}$-regularized estimator for $\bomega^{\star}$ given in~\eqref{eq_estimation_omega} for each $k \in [K]$. Inspired by the similar idea of debiasing the linear functional in~\eqref{eq_Debias_Linear_Functional}, we define a distributed debiased estimator for $\bc^{\top}\bomega^{\star}$ based on $\hat{\bomega}_{1}, \ldots, \hat{\bomega}_{K}$ as  
\begin{align}
\label{eq_variance_estimation_linear}
    \hat{\bc^{\top}\bomega^{\star}} = \frac{1}{K} \sum_{k=1}^K \left\{2\bc^{\top}\hat{\bomega}_{k} - \hat{\bomega}_{k}\nabla^{2} \mathcal{L}_{k}(\hbbeta)\hat{\bomega}_{k}\right\}. 
\end{align}

\begin{lemma}
\label{Lemma_consistency_sigma}
Under the conditions of Lemma~\ref{Lemma_bv_consistency}, we have 
\begin{align}\label{eq_sigma_consistency_c}
    \left|\frac{\hat{\bc^{\top}\bomega^{\star}}}{\bc^{\top} \bomega^{\star}} - 1\right| = O_{\bbP}\left(\sqrt{\frac{|\cS_{\star}|\log p}{n}} + \frac{|\cS_{\diamond}|\log p}{m}\right). 
\end{align}
\end{lemma}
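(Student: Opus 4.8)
The plan is to reduce the claim to an exact algebraic decomposition and then estimate the pieces with rates already in hand. Write $\hat{\bDelta}_k = \hat{\bomega}_k - \bomega^{\star}$, $\hat{\cH}_k = \nabla^2\cL_k(\hbbeta)$ and $\bar{\cH} = K^{-1}\sum_{k=1}^K \hat{\cH}_k$. Expanding the quadratic $\bomega \mapsto 2\bc^{\top}\bomega - \bomega^{\top}\hat{\cH}_k\bomega$ around $\bomega^{\star}$ — a purely algebraic identity valid for any symmetric matrix — then averaging over $k$ and subtracting $\bc^{\top}\bomega^{\star} = (\bomega^{\star})^{\top}\cH^{\star}\bomega^{\star}$ (which holds because $\cH^{\star}\bomega^{\star} = \bc$), one gets $\hat{\bc^{\top}\bomega^{\star}} - \bc^{\top}\bomega^{\star} = \mathrm{I} + \mathrm{II} + \mathrm{III}$, where $\mathrm{I} = (\bomega^{\star})^{\top}(\cH^{\star} - \bar{\cH})\bomega^{\star}$, $\mathrm{II} = -K^{-1}\sum_k \hat{\bDelta}_k^{\top}\hat{\cH}_k\hat{\bDelta}_k$ and $\mathrm{III} = 2K^{-1}\sum_k (\bc - \hat{\cH}_k\bomega^{\star})^{\top}\hat{\bDelta}_k$. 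Since $\bc^{\top}\bomega^{\star} = (\bomega^{\star})^{\top}\cH^{\star}\bomega^{\star}$ is comparable to $\|\bomega^{\star}\|_2^2$ and bounded away from zero under Assumption~\ref{Assumption_eigenvalue_uniform_bomega} for the loading vectors of interest, it suffices to show that $\mathrm{I},\mathrm{II},\mathrm{III}$ are each $O_{\bbP}(\sqrt{|\cS_{\star}|\log p / n} + |\cS_{\diamond}|\log p / m)$.

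For $\mathrm{I}$, decompose $\cH^{\star} - \hat{\cH}_k = \{\cH^{\star} - \nabla^2\cL_k(\bbeta^{\star})\} + \{\nabla^2\cL_k(\bbeta^{\star}) - \nabla^2\cL_k(\hbbeta)\}$. The first piece, contracted with $\bomega^{\star}$ on both sides, is a scalar that is uniformly bounded because Assumption~\ref{Assumption_cM_bomega} forces $(\bomega^{\star})^{\top}\bV_k(\bbeta^{\star},t)\bomega^{\star}\le \cM_{\diamond}^2$; these scalars are i.i.d.\ across the $K$ centers with only an $O(1/m)$ Andersen--Gill bias, so a Bernstein bound for their $K$-fold average gives $O_{\bbP}(1/\sqrt{n} + 1/m)$ — the averaging over centers is essential here. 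For the second piece, Taylor-expand $\bbeta \mapsto (\bomega^{\star})^{\top}\nabla^2\cL_k(\bbeta)\bomega^{\star}$ at $\bbeta^{\star}$; its directional derivative along $\hbbeta - \bbeta^{\star}$ is an at-risk-weighted average of $(\bx^{\top}\bomega^{\star})^2\,\bx^{\top}(\hbbeta - \bbeta^{\star})$, hence at most $\cM_{\diamond}^2$ times an at-risk-weighted average of $|\bx^{\top}(\hbbeta - \bbeta^{\star})|$ by Assumption~\ref{Assumption_cM_bomega}, and by Cauchy--Schwarz and Assumption~\ref{Assumption_covariance_matrix_covariate} this is $O_{\bbP}(\|\hbbeta - \bbeta^{\star}\|_2)$ on the cone $\cC(\cS_{\star},3)$ guaranteed by Theorem~\ref{Theorem_iteration}; Corollary~\ref{Corollary_beta0_LASSO} then gives $\|\hbbeta - \bbeta^{\star}\|_2 = O_{\bbP}(\sqrt{|\cS_{\star}|\log p/n})$. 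The crucial point is that the plug-in error $\hbbeta - \bbeta^{\star}$ must enter through its $\ell_2$ norm, not its $\ell_1$ norm, and the bound $|\bx^{\top}\bomega^{\star}|\le\cM_{\diamond}$ together with the cone structure is exactly what makes this possible.

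For $\mathrm{II}$ and $\mathrm{III}$ I use the KKT condition of~\eqref{eq_estimation_omega}: there is a subgradient $\hat{\bz}_k$ with $\|\hat{\bz}_k\|_{\infty}\le 1$ such that $\hat{\cH}_k\hat{\bomega}_k - \bc + (\vartheta_k^{\diamond}/2)\hat{\bz}_k = 0$, hence $\bc - \hat{\cH}_k\bomega^{\star} = (\vartheta_k^{\diamond}/2)\hat{\bz}_k + \hat{\cH}_k\hat{\bDelta}_k$. Substituting gives $\mathrm{II} + \mathrm{III} = K^{-1}\sum_k \vartheta_k^{\diamond}\hat{\bz}_k^{\top}\hat{\bDelta}_k + K^{-1}\sum_k \hat{\bDelta}_k^{\top}\hat{\cH}_k\hat{\bDelta}_k$. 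The first average is bounded by $(\max_k\vartheta_k^{\diamond})(\max_k\|\hat{\bDelta}_k\|_1)$; with $\vartheta_k^{\diamond}\asymp\sqrt{(\log p)/m}$ and Lemma~\ref{Lemma_bv_consistency}'s bound $\|\hat{\bDelta}_k\|_1 = O_{\bbP}(|\cS_{\star}|\sqrt{(\log p)/(nK)} + |\cS_{\diamond}|\sqrt{(\log p)/m})$ this is $O_{\bbP}(|\cS_{\star}|(\log p)/n + |\cS_{\diamond}|(\log p)/m)$ (using $m = n/K$). For the quadratic-form average, bound $\hat{\bDelta}_k^{\top}\hat{\cH}_k\hat{\bDelta}_k \le \|\hat{\cH}_k\hat{\bDelta}_k\|_{\infty}\|\hat{\bDelta}_k\|_1$, and from the KKT relation $\hat{\cH}_k\hat{\bDelta}_k = (\bc - \hat{\cH}_k\bomega^{\star}) - (\vartheta_k^{\diamond}/2)\hat{\bz}_k$, so $\|\hat{\cH}_k\hat{\bDelta}_k\|_{\infty} \le \|\bc - \hat{\cH}_k\bomega^{\star}\|_{\infty} + \vartheta_k^{\diamond}/2$; by the same decompose-at-$\bbeta^{\star}$ argument as for $\mathrm{I}$ (coordinatewise martingale concentration using $|\bx^{\top}\bomega^{\star}|\le\cM_{\diamond}$, plus a third-derivative remainder converted to $\|\hbbeta - \bbeta^{\star}\|_2$ through the cone), one gets $\|\bc - \hat{\cH}_k\bomega^{\star}\|_{\infty} = O_{\bbP}(\sqrt{\log(pK)/m} + \sqrt{|\cS_{\star}|\log p/n})$ uniformly in $k$. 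Multiplying by the $\ell_1$ rate of $\hat{\bDelta}_k$ and collecting terms — all cross-products being dominated under the standing condition $|\cS_{\star}|\sqrt{\log(pK)/m}\le R_0$ and the analogous control on $|\cS_{\diamond}|\sqrt{(\log p)/m}$ implicit in Assumption~\ref{Assumption_eigenvalue_uniform_bomega} — yields the advertised rate.

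The main obstacle is controlling the influence of the plug-in estimator $\hbbeta$ inside the empirical Hessians $\nabla^2\cL_k(\hbbeta)$, both in $\mathrm{I}$ and inside $\|\bc - \hat{\cH}_k\bomega^{\star}\|_{\infty}$: a naive Lipschitz-in-$\bbeta$ estimate produces only $\|\hbbeta - \bbeta^{\star}\|_1 = O_{\bbP}(|\cS_{\star}|\sqrt{(\log p)/n})$, which is too large by a factor $\sqrt{|\cS_{\star}|}$, so one has to go through a second-order Taylor expansion and pass to $\|\hbbeta - \bbeta^{\star}\|_2$ via the uniform bounds (Assumptions~\ref{Assumption_covariate}, \ref{Assumption_beta_1}, \ref{Assumption_cM_bomega}), Assumption~\ref{Assumption_covariance_matrix_covariate}, and the restricted cone. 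A secondary technical point is that $\nabla^2\cL_k(\bbeta^{\star})$ is not an i.i.d.\ sample average — the weights $S^{(\ell)}(\bbeta^{\star},t)$ themselves depend on the center-$k$ data — so every concentration step should be carried out through the counting-process martingale $\bar{M}_k(t)$ set up in Section~\ref{sec_setup}, exactly as in the proofs of Lemma~\ref{Theorem_consistency} and Theorem~\ref{Theorem_iteration}.
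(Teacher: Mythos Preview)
Your decomposition $\mathrm{I}+\mathrm{II}+\mathrm{III}$ is algebraically the same as the paper's four-term split $\Delta_1^\sigma+\Delta_2^\sigma+\Delta_3^\sigma+\Delta_4^\sigma$ (your $\mathrm{I}=\Delta_3^\sigma+\Delta_4^\sigma$ and your $\mathrm{II}+\mathrm{III}=\Delta_1^\sigma+\Delta_2^\sigma$), and your treatment of $\mathrm{I}$---concentration of $\cH^{\star}-\nabla^2\cL_k(\bbeta^{\star})$ plus a Cauchy--Schwarz/third-derivative bound for the plug-in---mirrors the paper's use of Lemma~\ref{Lemma_Hessian} and the argument of~\eqref{eq_bound_cR2}. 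The one substantive difference is how you bound the quadratic form $\hat{\bDelta}_k^{\top}\hat{\cH}_k\hat{\bDelta}_k$. The paper simply invokes the second inequality in~\eqref{eq_bound_Q} of Lemma~\ref{Lemma_bw_LASSO}, which was already established while proving Lemma~\ref{Lemma_bv_consistency} and gives $\bvarphi_k^{\top}\nabla^2\cL_k(\hbbeta)\bvarphi_k=O_{\bbP}(|\cS_{\star}|\log p/n+|\cS_{\diamond}|\log p/m)$ directly. Your KKT-based route $\hat{\bDelta}_k^{\top}\hat{\cH}_k\hat{\bDelta}_k\le\|\hat{\cH}_k\hat{\bDelta}_k\|_{\infty}\|\hat{\bDelta}_k\|_1$ works, but it generates the cross term $\sqrt{|\cS_{\star}|\log p/n}\cdot|\cS_{\diamond}|\sqrt{(\log p)/m}$, which is dominated by the target rate only under the side condition $|\cS_{\diamond}|\sqrt{(\log p)/m}=O(1)$; this is not actually delivered by Assumption~\ref{Assumption_eigenvalue_uniform_bomega} (a restricted eigenvalue condition, not a sparsity bound). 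The paper's route sidesteps this issue entirely, so it is both shorter and requires no extra hypothesis.
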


Theorem~\ref{Theorem_distributed_CLT} and Lemma~\ref{Lemma_consistency_sigma} together yield
\begin{align*}
    \sup_{z\in\bbR}\left|\bbP\left\{\frac{\sqrt{n}\left(\tilde{\bc^{\top}\bbeta^{\star}} - \bc^{\top}\bbeta^{\star}\right)}{\sqrt{\hat{\bc^{\top} \bomega^{\star}}}} \leq z\right\} - \Phi(z)\right| \to 0.
\end{align*}
Consequently, for any significance level $\alpha \in (0, 1)$, a $100(1 - \alpha)\%$ confidence interval for the linear functional $\bc^{\top} \bbeta^{\star}$ is defined by 
\begin{align*}
    \mathbb{CI}_{1 - \alpha}\left(\bc^{\top}\bbeta^{\star}\right) = \left[\tilde{\bc^{\top}\bbeta^{\star}} - z_{1 - \alpha/2}\sqrt{\frac{\hat{\bc^{\top}\bomega^{\star}}}{n}},\ \tilde{\bc^{\top}\bbeta^{\star}} + z_{1 - \alpha/2}\sqrt{\frac{\hat{\bc^{\top}\bomega^{\star}}}{n}}\right],
\end{align*}
where $z_{1 - \alpha/2}$ is the $(1 - \alpha/2)$-th quantile of the standard normal distribution. In particular, by taking $\bc$ to be a canonical basis vector, our methodology yields distributed confidence intervals for the individual parameters $\beta_{j}^{\star}$, $j = 1, \ldots, p$.

\subsection{Decorrelated score test}
Denote $\bbeta^{\star} = (\nu^{\star}, \bgamma^{\star\top})^{\top} \in \bbR^{p}$, where $\nu^{\star} \in \bbR$ and $\bgamma^{\star} \in \bbR^{p - 1}$. Without loss of generality, we consider testing the hypothesis
\begin{align}
\label{eq_hypothesis_testing}
    H_{0} : \nu^{\star} = 0 \enspace \mathrm{versus} \enspace H_{1} : \nu^{\star} \neq 0.   
\end{align}
In the low-dimensional setting, we can utilize the traditional score test. Specifically, let $\hat{\bgamma}_{0} = \arg\min_{\bgamma \in \bbR^{p - 1}} \cL(0, \bgamma)$ denote the maximum partial likelihood estimate for $\bgamma^{\star}$ under the constraint that $\nu = 0$. Under regularity conditions and the null hypothesis, it can be shown that 
\begin{align}
\label{eq_Traditional_Score_Test_Null_Distribution}
    \frac{\sqrt{n}\nabla_{\nu} \cL(0, \hat{\bgamma}_{0})}{\sigma_{\nu}} \todist \cN(0, 1), \enspace \mathrm{as} \enspace n \to \infty,  
\end{align}
where $\nabla_{\nu}$ means taking partial derivative with respect to the variable $\nu$ and $\sigma_{\nu}^{2} = \cH_{\nu\nu}^{\star} - \cH_{\bgamma\nu}^{\star\top} \cH_{\bgamma\bgamma}^{\star-1} \cH_{\bgamma\nu}^{\star}$ is the asymptotic variance. However, in high dimensions, the asymptotic distribution in~\eqref{eq_Traditional_Score_Test_Null_Distribution} becomes intractable with $\hat{\bgamma}_{0}$ replaced by some commonly-used regularized estimator for $\bgamma^{\star}$~\citep{MR3611489, Fang2017testing}. To resolve this issue, \citet{Fang2017testing} introduced the following decorrelated score test statistic for $\bbeta = (\nu, \bgamma^{\top})^{\top}$, 
\begin{align}
\label{eq_Population_Decorrelated_Score_Test_Statistic}
    \pi^{\star}(\nu, \bgamma) = \nabla_{\nu} \cL(\nu, \bgamma) - \bw^{\star \top} \nabla_{\bgamma} \cL(\nu, \bgamma), 
\end{align}
where $\bw^{\star}\in\bbR^{p - 1}$ is the solution of $\cH_{\bgamma\bgamma}^{\star}\bw^{\star} = \cH_{\bgamma\nu}^{\star}$. In the distributed setting, the primary goal of this section is to propose a distributed decorrelated score test for~\eqref{eq_hypothesis_testing}.

In what follows, we write $\hat{\bbeta} = (\hat{\nu}, \hat{\bgamma}^{\top})^{\top}$ and assume that $\bw^{\star}$ is sparse and denote its support by $\cS_{\sharp}$. Then, for the $k$-th center, we estimate $\bw^{\star}$ via 
\begin{align}
\label{eq_bw_LASSO}
    \hat{\bw}_{k} = \underset{\bw\in\bbR^{p - 1}}{\argmin}  \left\{\bw^{\top} \nabla_{\bgamma \bgamma}^2 \cL_{k}(\hbbeta)\bw - 2\bw^{\top} \nabla_{\nu \bgamma}^{2} \cL_{k}(\hbbeta) + \vartheta_{k}^{\sharp} \|\bw\|_{1}\right\},
\end{align}
where $\vartheta_{k}^{\sharp} > 0$ is a regularization parameter. Motivated by~\eqref{eq_Population_Decorrelated_Score_Test_Statistic}, the distributed decorrelated score test statistic for~\eqref{eq_hypothesis_testing} is then defined by 
\begin{align*}
    \bar{\pi}(0, \hat{\bgamma}) = \frac{1}{K}\sum_{k = 1}^{K} \left\{\nabla_{\nu} \tilde{\cL}_{k}(0, \hat{\bgamma}) - \hat{\bw}_{k}^{\top} \nabla_{\bgamma} \tilde{\cL}_{k}(0, \hat{\bgamma})\right\}, 
\end{align*}
where $\tilde{\cL}_{k}(\bbeta) = \cL_{k}(\bbeta) - \{\nabla \cL_{k} (\tilde{\bbeta}) - \hat{\nabla \cL} (\tilde{\bbeta})\}^{\top} \bbeta$, $k = 1, \ldots, K$, are the estimated GEL functions across the centers. Denote $\tilde{\bw}^{\star} = (1, -\bw^{\star\top})^{\top}\in \bbR^{p}$.

\begin{assumption}
\label{Assumption_cM_bw}
There exists a positive constant $\cM_{\sharp} < \infty$ such that
\begin{align*}
    \max_{i \in [n]} \sup_{t\in [0, \tau]} |\bx_{i}(t)^{\top} \tilde{\bw}^{\star}| \leq \cM_{\sharp}.  
\end{align*}
\end{assumption}

\begin{assumption}
\label{Assumption_eigenvalue_uniform_bw}
There exists a constant $\varrho_{\sharp} > 0$ such that 
\begin{align*}
    \min_{k \in [K]} \min_{0 \neq \bv \in \cC(\cS_{\sharp}, 4)} \frac{\bv^\top \nabla^2 \cL_{k}(\bbeta^{\star}) \bv}{\|\bv\|_{2}^{2}} \geq \varrho_{\sharp}^{2},
\end{align*}
\end{assumption}

In the following theorem, we establish the central limit theorem for the proposed test statistic $\bar{\pi}(0, \hat{\bgamma})$. Observe that $\sigma_{\nu}^{2} = \cH_{\nu\nu}^{\star} - \cH_{\bgamma\nu}^{\star\top}\bw^{\star}$ as $\bw^{\star} = \cH_{\bgamma\bgamma}^{\star-1} \cH_{\bgamma\nu}^{\star}$. Similar to~\eqref{eq_variance_estimation_linear}, our distributed estimator for $\sigma_{\nu}^{2}$ is defined by 
\begin{align*}
    \hat{\sigma}_{\nu}^{2} = \frac{1}{K}\sum_{k = 1}^{K} \left\{\nabla_{\nu\nu}^{2} \cL_{k}(\hbbeta) - 2\nabla_{\bgamma\nu}^{2}\cL_{k}(\hbbeta)^\top \hat{\bw}_{k} + \hat{\bw}_{k}^{\top}\nabla_{\bgamma\bgamma}^{2}\cL_{k}(\hbbeta)\hat{\bw}_{k}\right\}. 
\end{align*}

\begin{theorem}
\label{Theorem_test_average}
Let Assumptions~\ref{Assumption_covariate}--\ref{Assumption_at_risk} and~\ref{Assumption_cM_bw}--\ref{Assumption_eigenvalue_uniform_bw} hold. Assume that 
\begin{align}\label{eq_clt_cond}
    \frac{\cM_{\sharp}}{m\sigma_{\nu}} \to 0 \enspace \mathrm{and} \enspace \frac{|\cS_{\star}| \log p}{\sqrt{m}} \vee \frac{|\cS_{\sharp}| \log p}{\sqrt{m}} \to 0.
\end{align}
Then, under the null hypothesis $\nu^{\star} = 0$, we have 
\begin{align}
\label{eq_CLT_pi_nu}
    \sup_{z\in\bbR}\left|\bbP\left\{\frac{\sqrt{n}\bar{\pi}(0, \hat{\bgamma})}{\hat{\sigma}_{\nu}} \leq z\right\} - \Phi(z)\right| \to 0. 
\end{align}
\end{theorem}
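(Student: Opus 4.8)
The plan is to linearize $\bar\pi(0,\hat\bgamma)$ about $\bbeta^{\star}=(0,\bgamma^{\star\top})^{\top}$ (so $\nu^{\star}=0$ under $H_{0}$), reduce it to a single martingale term plus asymptotically negligible remainders, prove a martingale central limit theorem for that term, and establish $\hat\sigma_{\nu}^{2}/\sigma_{\nu}^{2}\overset{\bbP}{\to}1$. Write $\tilde{\bw}_{k}=(1,-\hat{\bw}_{k}^{\top})^{\top}$, $\bar{\bw}=K^{-1}\sum_{k=1}^{K}\tilde{\bw}_{k}$, $\tilde{\bw}^{\star}=(1,-\bw^{\star\top})^{\top}$, and $\hbbeta^{0}=(0,\hat\bgamma^{\top})^{\top}$, so that $\bar\pi(0,\hat\bgamma)=K^{-1}\sum_{k}\tilde{\bw}_{k}^{\top}\nabla\tilde\cL_{k}(\hbbeta^{0})$ with $\nabla\tilde\cL_{k}(\bbeta)=\nabla\cL_{k}(\bbeta)-\nabla\cL_{k}(\tilde\bbeta)+\hat{\nabla\cL}(\tilde\bbeta)$ and $\nabla^{2}\tilde\cL_{k}=\nabla^{2}\cL_{k}$. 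A first-order Taylor expansion of each $\nabla\cL_{k}$ and of $\hat{\nabla\cL}$ about $\bbeta^{\star}$, with integral-form second-order remainders $\bm{r}_{k}$, gives after the cancellation of local gradients built into the GEL construction
\begin{align*}
\nabla\tilde\cL_{k}(\hbbeta^{0})=\hat{\nabla\cL}(\bbeta^{\star})+\nabla^{2}\cL_{k}(\bbeta^{\star})(\hbbeta^{0}-\bbeta^{\star})+\Big\{\tfrac1K\sum\nolimits_{j=1}^{K}\nabla^{2}\cL_{j}(\bbeta^{\star})-\nabla^{2}\cL_{k}(\bbeta^{\star})\Big\}(\tilde\bbeta-\bbeta^{\star})+\bm{r}_{k}.
\end{align*}
Averaging with the weights $\tilde{\bw}_{k}$, writing $\nabla\cL(\bbeta^{\star})=\hat{\nabla\cL}(\bbeta^{\star})-\rho$ with $\rho:=\hat{\nabla\cL}(\bbeta^{\star})-\nabla\cL(\bbeta^{\star})$, and invoking the \emph{decorrelation identity} $\tilde{\bw}^{\star\top}\cH^{\star}(\hbbeta^{0}-\bbeta^{\star})=0$ — which holds because $\cH_{\bgamma\bgamma}^{\star}\bw^{\star}=\cH_{\bgamma\nu}^{\star}$ forces the $\bgamma$-block of $\tilde{\bw}^{\star\top}\cH^{\star}$ to vanish while $\hbbeta^{0}-\bbeta^{\star}$ has zero first coordinate — one obtains $\bar\pi(0,\hat\bgamma)=\tilde{\bw}^{\star\top}\nabla\cL(\bbeta^{\star})+\sum_{j=1}^{6}R_{j}$, with $R_{1}=\tilde{\bw}^{\star\top}\rho$ (center discrepancy), $R_{2}=(\bar{\bw}-\tilde{\bw}^{\star})^{\top}\hat{\nabla\cL}(\bbeta^{\star})$, $R_{3}=-K^{-1}\sum_{k}(\hat{\bw}_{k}-\bw^{\star})^{\top}\cH_{\bgamma\bgamma}^{\star}(\hat\bgamma-\bgamma^{\star})$ (errors in estimating $\bw^{\star}$), $R_{4}=K^{-1}\sum_{k}\tilde{\bw}_{k}^{\top}\{\nabla^{2}\cL_{k}(\bbeta^{\star})-\cH^{\star}\}(\hbbeta^{0}-\bbeta^{\star})$ and $R_{5}=K^{-1}\sum_{k}\tilde{\bw}_{k}^{\top}\{K^{-1}\sum_{j}\nabla^{2}\cL_{j}(\bbeta^{\star})-\nabla^{2}\cL_{k}(\bbeta^{\star})\}(\tilde\bbeta-\bbeta^{\star})$ (Hessian deviations), and $R_{6}=K^{-1}\sum_{k}\tilde{\bw}_{k}^{\top}\bm{r}_{k}$.

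For the leading term, the compensator part of $\nabla\cL(\bbeta^{\star})$ cancels exactly (since $S^{(1)}-\cX S^{(0)}\equiv0$), so
\begin{align*}
\sqrt n\,\tilde{\bw}^{\star\top}\nabla\cL(\bbeta^{\star})=-\frac1{\sqrt n}\sum_{i=1}^{n}\int_{0}^{\tau}\tilde{\bw}^{\star\top}\{\bx_{i}(t)-\cX(\bbeta^{\star},t)\}\,dM_{i}(t),
\end{align*}
a sum of orthogonal square-integrable martingale integrals whose predictable variation $n^{-1}\sum_{i}\int_{0}^{\tau}\{\tilde{\bw}^{\star\top}(\bx_{i}(t)-\cX(\bbeta^{\star},t))\}^{2}Y_{i}(t)e^{\bx_{i}(t)^{\top}\bbeta^{\star}}\lambda_{0}(t)\,dt$ converges in probability, by the uniform law of large numbers $S^{(\ell)}\to s^{(\ell)}$ and Assumptions~\ref{Assumption_covariate}, \ref{Assumption_beta_1}, \ref{Assumption_covariance_matrix_covariate} and~\ref{Assumption_at_risk}, to $\tilde{\bw}^{\star\top}\cH^{\star}\tilde{\bw}^{\star}=\cH_{\nu\nu}^{\star}-\cH_{\bgamma\nu}^{\star\top}\bw^{\star}=\sigma_{\nu}^{2}$. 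Since $|\tilde{\bw}^{\star\top}\bx_{i}(t)|\le\cM_{\sharp}$ (Assumption~\ref{Assumption_cM_bw}) and $\cX(\bbeta^{\star},t)$ is a weighted average of the $\bx_{i}(t)$, each jump of the normalized sum is at most $2\cM_{\sharp}/\sqrt n$, whence the conditional Lindeberg condition holds under the first part of~\eqref{eq_clt_cond}. A martingale CLT (cf.~\citet{Andersen1982cox}) then yields $\sqrt n\,\tilde{\bw}^{\star\top}\nabla\cL(\bbeta^{\star})/\sigma_{\nu}\todist\mathcal{N}(0,1)$ along the triangular array ($\sigma_{\nu}$ and $\cM_{\sharp}$ are allowed to depend on $n$).

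It remains to show $\sqrt n\,R_{j}/\sigma_{\nu}\overset{\bbP}{\to}0$ for $j=1,\dots,6$ and $\hat\sigma_{\nu}^{2}/\sigma_{\nu}^{2}\overset{\bbP}{\to}1$. The ingredients are: $\|\nabla\cL(\bbeta^{\star})\|_{\infty}\vee\|\hat{\nabla\cL}(\bbeta^{\star})\|_{\infty}=O_{\bbP}(\sqrt{(\log p)/n})$ by a Bernstein inequality for martingale integrals; the sharpened bound $\|\rho\|_{\infty}=O_{\bbP}(\sqrt{(\log p)/(nm)})$ (up to logarithmic factors), obtained by writing $\rho=-n^{-1}\sum_{i}\int_{0}^{\tau}\{\cX(\bbeta^{\star},t)-\cX_{k(i)}(\bbeta^{\star},t)\}\,dM_{i}(t)$, with $k(i)$ the center of $i$ and $\cX_{k}$ the center-$k$ analogue of $\cX$, and using $\sup_{t}\|\cX(\bbeta^{\star},t)-\cX_{k}(\bbeta^{\star},t)\|_{\infty}=O_{\bbP}(\sqrt{(\log p)/m})$ inside the martingale integral to buy the extra factor $m^{-1/2}$; $\|\nabla^{2}\cL_{k}(\bbeta^{\star})-\cH^{\star}\|_{\max}$ and $\|K^{-1}\sum_{j}\nabla^{2}\cL_{j}(\bbeta^{\star})-\nabla^{2}\cL_{k}(\bbeta^{\star})\|_{\max}$ are $O_{\bbP}(\sqrt{(\log p)/m})$; the analogue of Lemma~\ref{Lemma_bv_consistency} for $\hat{\bw}_{k}$ (which solves the same $\ell_{1}$-penalized quadratic program), giving the $\ell_{1}$ rate $O_{\bbP}(|\cS_{\sharp}|\sqrt{(\log p)/m}+|\cS_{\star}|\sqrt{(\log p)/(nK)})$ and a matching $\ell_{2}$ rate for $\|\hat{\bw}_{k}-\bw^{\star}\|$; the cone membership $\bdelta_{T+1}\in\cC(\cS_{\star},3)$ from Corollary~\ref{Corollary_beta0_LASSO}, which (using $1\notin\cS_{\star}$ under $H_{0}$, so that zeroing the first coordinate preserves the cone) gives $\|\hbbeta^{0}-\bbeta^{\star}\|_{1}\le4\sqrt{|\cS_{\star}|}\,\|\bdelta_{T+1}\|_{2}=O_{\bbP}(|\cS_{\star}|\sqrt{(\log p)/n})$ and likewise for $\tilde\bbeta-\bbeta^{\star}$; the bounds $\lambda_{\max}(\cH^{\star})=O(1)$ (Assumptions~\ref{Assumption_beta_1},~\ref{Assumption_covariance_matrix_covariate}) and $\|\bw^{\star}\|_{2}=O(1)$; and local Lipschitzness of $\bbeta\mapsto\nabla^{2}\cL_{k}(\bbeta)$ (Assumptions~\ref{Assumption_covariate},~\ref{Assumption_beta_1}) to control $\bm{r}_{k}$. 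Combining these with~\eqref{eq_clt_cond} yields $R_{j}=o_{\bbP}(\sigma_{\nu}/\sqrt n)$ for every $j$ — for instance $R_{3}=O_{\bbP}(\lambda_{\max}(\cH_{\bgamma\bgamma}^{\star})\,\|\hat{\bw}_{k}-\bw^{\star}\|_{2}\,\|\hat\bgamma-\bgamma^{\star}\|_{2})$. The consistency $\hat\sigma_{\nu}^{2}/\sigma_{\nu}^{2}\overset{\bbP}{\to}1$, where $\hat\sigma_{\nu}^{2}=K^{-1}\sum_{k}\tilde{\bw}_{k}^{\top}\nabla^{2}\cL_{k}(\hbbeta)\tilde{\bw}_{k}$, follows by replacing $\nabla^{2}\cL_{k}(\hbbeta)$ by $\cH^{\star}$ and $\tilde{\bw}_{k}$ by $\tilde{\bw}^{\star}$ with the same controls (the decorrelated-score analogue of Lemma~\ref{Lemma_consistency_sigma}), noting $\tilde{\bw}^{\star\top}\cH^{\star}\tilde{\bw}^{\star}=\sigma_{\nu}^{2}$. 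Two applications of Slutsky's theorem give $\sqrt n\,\bar\pi(0,\hat\bgamma)/\hat\sigma_{\nu}\todist\mathcal{N}(0,1)$, and since the limit law is continuous this upgrades to the uniform statement~\eqref{eq_CLT_pi_nu} by P\'olya's theorem.

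The main obstacle is the remainder analysis: one must verify the exactness of the decorrelation cancellation $\tilde{\bw}^{\star\top}\cH^{\star}(\hbbeta^{0}-\bbeta^{\star})=0$, extract the extra $m^{-1/2}$ from the martingale structure of the center-discrepancy term $\rho$ rather than settling for its $\sqrt{(\log p)/m}$ sup-norm bound, and show that the bilinear remainders $R_{3},R_{4}$ and the second-order Taylor remainder $R_{6}$ are $o_{\bbP}(\sigma_{\nu}/\sqrt n)$ under precisely the rate condition~\eqref{eq_clt_cond}, which forces one to treat the relevant quadratic forms via the cone membership, the boundedness $|\bx_{i}(t)^{\top}\tilde{\bw}^{\star}|\le\cM_{\sharp}$, and the restricted-eigenvalue condition, rather than by crude max-norm/$\ell_{1}$ estimates.
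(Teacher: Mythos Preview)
Your proposal is correct and follows essentially the same architecture as the paper: linearize, invoke the decorrelation identity $\tilde{\bw}^{\star\top}\cH^{\star}(\hbbeta^{0}-\bbeta^{\star})=0$, apply a CLT to the leading gradient term, bound the remainders with the same toolbox (rates for $\hat{\bw}_{k}-\bw^{\star}$, for $\hbbeta-\bbeta^{\star}$, Hessian deviations, Lemma~\ref{Lemma_3.2_Huang}), and finish with the analogue of Lemma~\ref{Lemma_consistency_sigma} for $\hat\sigma_{\nu}$.

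The organizational differences are worth noting. The paper centers the expansion on the \emph{distributed} gradient $\tilde{\bw}^{\star\top}\hat{\nabla\cL}(\bbeta^{\star})$ and absorbs the center discrepancy into a dedicated CLT lemma (Lemma~\ref{Lemma_gradient_clt}), which decomposes $\hat{\nabla\cL}(\bbeta^{\star})$ into an i.i.d.\ piece with population centering $\be(\bbeta^{\star},t)$ plus a negligible term whose variance is computed directly. You instead center on the full-sample $\nabla\cL(\bbeta^{\star})$ and peel off $\rho=\hat{\nabla\cL}(\bbeta^{\star})-\nabla\cL(\bbeta^{\star})$ as $R_{1}$; this is equivalent. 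The paper then groups the remaining error into three terms $\Delta_{\pi,1},\Delta_{\pi,2},\Delta_{\pi,3}$ (with $\Delta_{\pi,3}$ split further via the decorrelation identity), whereas your Taylor-expansion bookkeeping produces six. The paper also bounds the bilinear pieces (your $R_{3}$, their $\Delta_{\pi,1}$) by Cauchy--Schwarz in the empirical Hessian form $\nabla^{2}\cL_{k}$ rather than via $\lambda_{\max}(\cH^{\star})$; both routes work here since $\lambda_{\max}(\cH^{\star})\le e^{M}R_{1}$ under the centered-covariate convention and Assumption~\ref{Assumption_covariance_matrix_covariate}.

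One caution on $R_{1}$: your sketched argument for $\|\rho\|_{\infty}=O_{\bbP}(\sqrt{(\log p)/(nm)})$ --- plugging the high-probability bound $\sup_{t}\|\cX-\cX_{k}\|_{\infty}=O_{\bbP}(\sqrt{(\log p)/m})$ ``inside the martingale integral'' --- is heuristic, because the integrand depends on the same data as $M_{i}$. You do not need the $\ell_{\infty}$ statement anyway: for $R_{1}=\tilde{\bw}^{\star\top}\rho$ a direct second-moment computation (exactly the $\Var\{\Delta_{n}(\bphi)\}$ calculation in the paper's Lemma~\ref{Lemma_gradient_clt}) gives $\Var(\sqrt{n}\,R_{1})=O(\cM_{\sharp}^{2}/m)$, which is what the first part of~\eqref{eq_clt_cond} is designed to handle.
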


For any significance level $\alpha \in (0, 1)$, the level-$\alpha$ test for~\eqref{eq_hypothesis_testing} is defined by
\begin{align*}
    \psi_{\alpha} = \mathbb{I}\left\{\left|\frac{\sqrt{n}\bar{\pi}(0, \hat{\bgamma})}{\hat{\sigma}_{\nu}}\right| > z_{1 - \alpha/2}\right\},
\end{align*}
and we reject the null hypothesis $H_{0}$ whenever $\psi_{\alpha} = 1$.

\begin{remark}
    It is worth mentioning that unlike the bias-corrected estimator $\tilde{\bc^{\top}\bbeta^{\star}}$ in~\eqref{eq_Debias_Linear_Functional}, which relies on the KKT condition~\citep{vandeGeer2014asymptotically} of~\eqref{eq_iterated_estimator} with $t = T$, construction of the decorrelated score test statistic $\bar{\pi}(0, \hat{\bgamma})$ is more flexible, allowing for the use of any feasible estimator under the distributed setting. In particular, the distribution theory in Theorem~\ref{Theorem_test_average} remains valid for $\bar{\pi}(0, \bgamma^{\diamond})$ as long as $\bgamma^{\diamond} \in \mathbb{R}^{p - 1}$ exhibits the same convergence rate as that of $\hat{\bgamma}$. The interested readers are referred to~\citet{MR3611489}, \citet{Fang2017testing} and~\citet{ Battey2018distributed} for further discussions.
    \qed
\end{remark}

\section{Numerical Experiments}
\label{sec_experiments}
In this section, we present the results of simulations and real data analysis assessing the performance of our iterative distributed estimation and inference procedures, and their comparison with the full-sample, the one-center, the average and the average-debiased  LASSO estimators, respectively denoted by $\hat{\bbeta}_{\vartheta}$, $\bbeta_0$, $\bbeta_a$ and $\bbeta_d$. We now define these estimators. The full-sample estimator is the benchmark, and uses all the dataset at once. Without loss of generality, the one-center estimator uses only the data from the first center. The average estimator is defined as a simple average of $K$ local estimators. Due to the lack of literature on communication-efficient distributed learning for Cox's model, we construct an estimator corresponding to the idea of averaging the debiased estimators, and we will see that our own procedures have superior performance. More precisely, on the one hand~\citet{Battey2018distributed} proposed distributed inference, but for generalized linear models; and on the other hand~\citet{Yu2021confidence} studied inference for Cox's model using the CLIME estimator~\citep{Cai2011constrained}, but in a non-distributed setting. Note that the computational complexity for this estimator is huge due to the CLIME step. For example, with the setting of Section~\ref{sec_simulations} and $K = 8$ centers, the median computation time for estimation followed by hypothesis testing is 16 seconds on each center for our communication-efficient distributed algorithm, while it is 196 seconds on each center for the average-debiased estimator. The full-sample estimation and hypothesis testing process, using all data at once, has a median computation time of 18 seconds. We merge the approaches from Section 3.1.2 of~\citet{Battey2018distributed} and Section 2.2 of~\citet{Yu2021confidence} to construct the average-debiased estimator for Cox's model.

\subsection{Simulations}
\label{sec_simulations}
To be able to compute the iterated estimators $\bbeta_{t}$ in~\eqref{eq_iterated_estimator}, we cannot simply use~\texttt{glmnet} due to the linear correction term appearing in the gradient-enhanced loss. Instead, we extend a penalized weighted least squares algorithm, inspired by~\citet{Simon2011regularization}, and we incorporate the linear correction.
For our inference algorithms, we also use~\texttt{CVXR}~\citep{Fu2020cvxr} when minimizing the penalized quadratic forms~\eqref{eq_estimation_omega} and~\eqref{eq_bw_LASSO}.

For our experiments based on simulated data, the number of centers is $K \in \{2, 4, 8\}$. The covariates are independently drawn from the $p$-dimensional standard multivariate Gaussian distribution where each entry's absolute value is clipped at one. We use the sample size $n = 1000$, dimension $p = 50$, and a constant baseline hazard function $\lambda_{0}(t) = 1$. Therefore the conditional survival time $T\mid\bx$ is an exponential random variable with parameter $\exp(\bx^{\top} \bbeta^{\star})$ where $\bx$ is the covariate vector. We set the distribution of the conditional censoring time $C\mid \bx$ to be exponential with parameter $\frac{3}{7} \exp(\bx^{\top} \bbeta^{\star})$. This ensures approximately $30\%$ of the samples to be censored on average. We simulate $400$ i.i.d replications. In the Appendix, additional simulated experiments for $n = 240$ with a $50\%$ censoring rate and $p = 300$ demonstrate the performance of our algorithms.

\subsubsection{Estimation}\label{subsubsection-estimation}
We consider $\bbeta^{\star} = (0,2,2,2,{\bf{0}}_{p-4}^{\top})^{\top}$, where ${\bf{0}}_{p-4}$ is the zero vector of dimension $p-4$. In Figure~\ref{fig_estimation}, we display median estimation errors as a function of the iteration $t \in \{0,\dots,10\}$. In Figure~\ref{fig_estimation_iterated}, the decreasing curves are $\|\bbeta_{t} - \bbeta^{\star}\|_2$ for 3 values of $K$, where $K$ represents the number of centers. As expected from our theory, the more we iterate, the lower the estimation error, and the estimator tends to be as good as the full-sample benchmark when sufficiently iterated. In Figure~\ref{fig_estimation_all}, we display the median estimation error of multiple estimators for $K = 8$ centers; our procedure improves upon all.

\newcommand{\subfigfracinest}{0.9}
\begin{figure}[H]
\centering
    \begin{subfigure}{\subfigfracinest\linewidth}
    \includegraphics[width=1\linewidth]{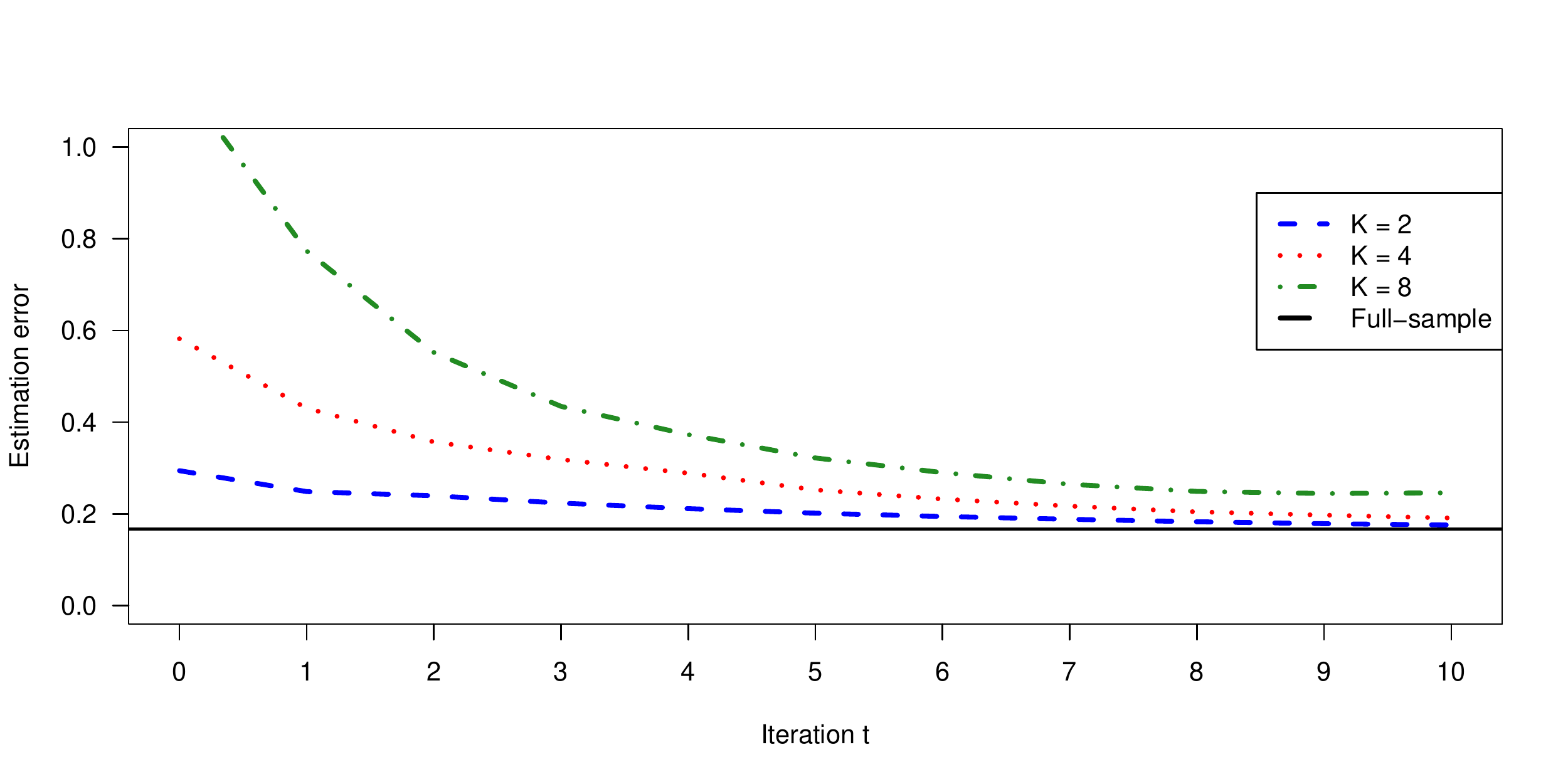}
    \caption{}
    \label{fig_estimation_iterated}
    \end{subfigure}
    
    \begin{subfigure}{\subfigfracinest\linewidth}
    \includegraphics[width=1\linewidth]{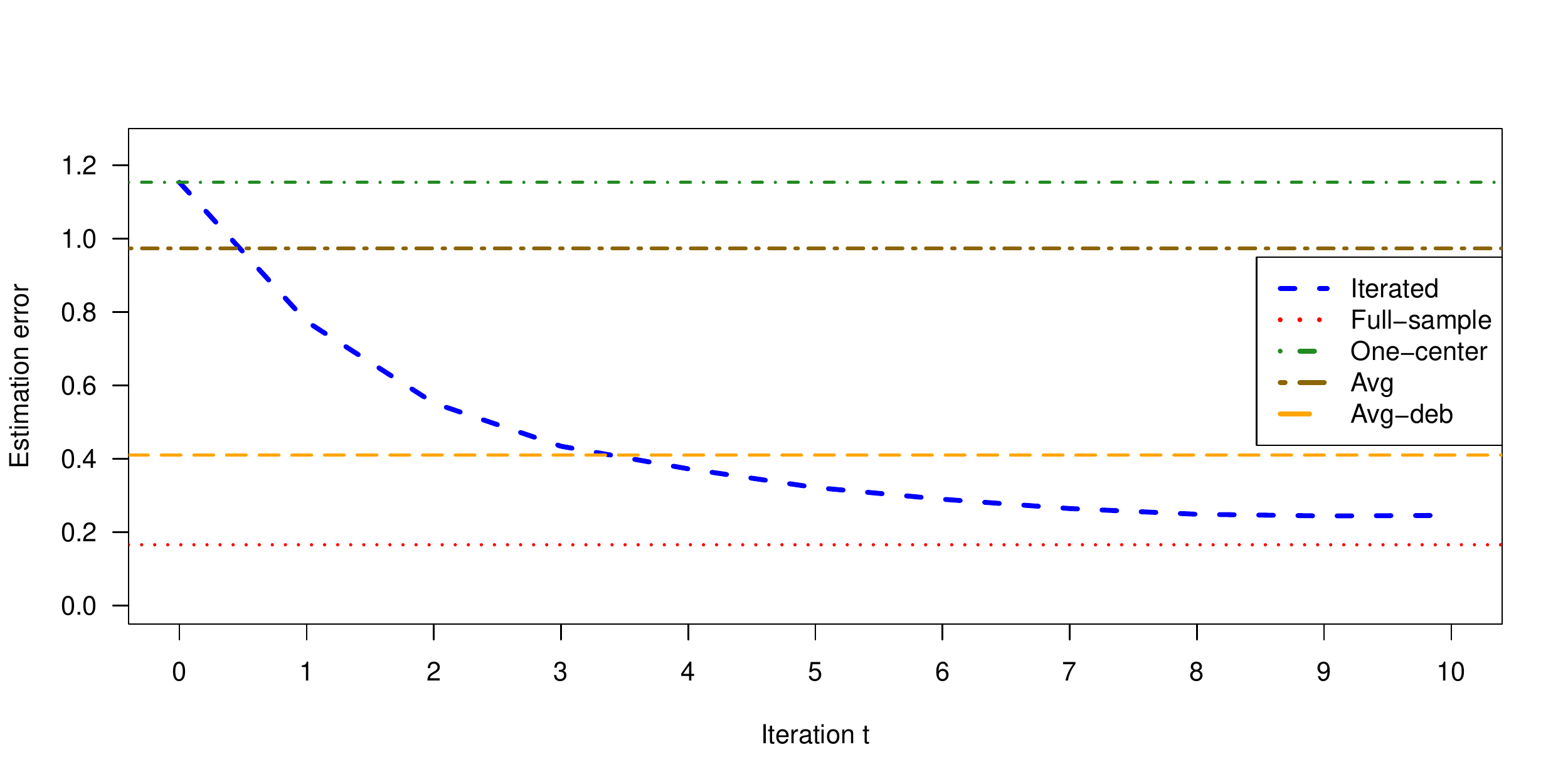}
    \caption{}
    \label{fig_estimation_all}
    \end{subfigure}
    
    \caption{\small{Median estimation error as a function of the iteration $t$. (a) For our iterated estimator, with 3 choices of $K$. The horizontal solid black line is the benchmark $\|\hbbeta_{\vartheta} - \bbeta^{\star}\|_2$, which has access to the full data.} \small{(b) For multiple estimators, with $K = 8$ centers.}}
    \label{fig_estimation}
\end{figure}

\subsubsection{Hypothesis testing}\label{subsubsection-hyptesting}
For the inference part, we take $\bbeta^{\star} = (\nu^{\star}, \bgamma^{\star\top})^{\top}$, where $\nu^{\star} \in \bbR$ and $\bgamma^{\star} = (2,2,2,{\bf{0}}_{p-4}^{\top})^{\top} \in \bbR^{p-1}$. We consider testing the hypothesis
\begin{align*}
    H_0 : \nu^{\star} = 0 \enspace \mathrm{versus} \enspace H_{1} : \nu^{\star} \neq 0.  
\end{align*}
We compare our test with three other tests: the decorrelated score test based on the benchmark estimator $\hbbeta_{\vartheta}$, the decorrelated score test based on the first center estimator $\bbeta_0$ which uses $m = n/K$ points, and the test based on the average-debiased estimator $\bbeta_d$.

\newcommand{\subfigfracin}{0.49}
\newcommand{\imgfrac}{1.0}
\newcommand{\imgfrachalf}{0.5}
\begin{figure}
\centering
    \begin{subfigure}{\subfigfracin\linewidth}
        \includegraphics[width=\imgfrac\linewidth]{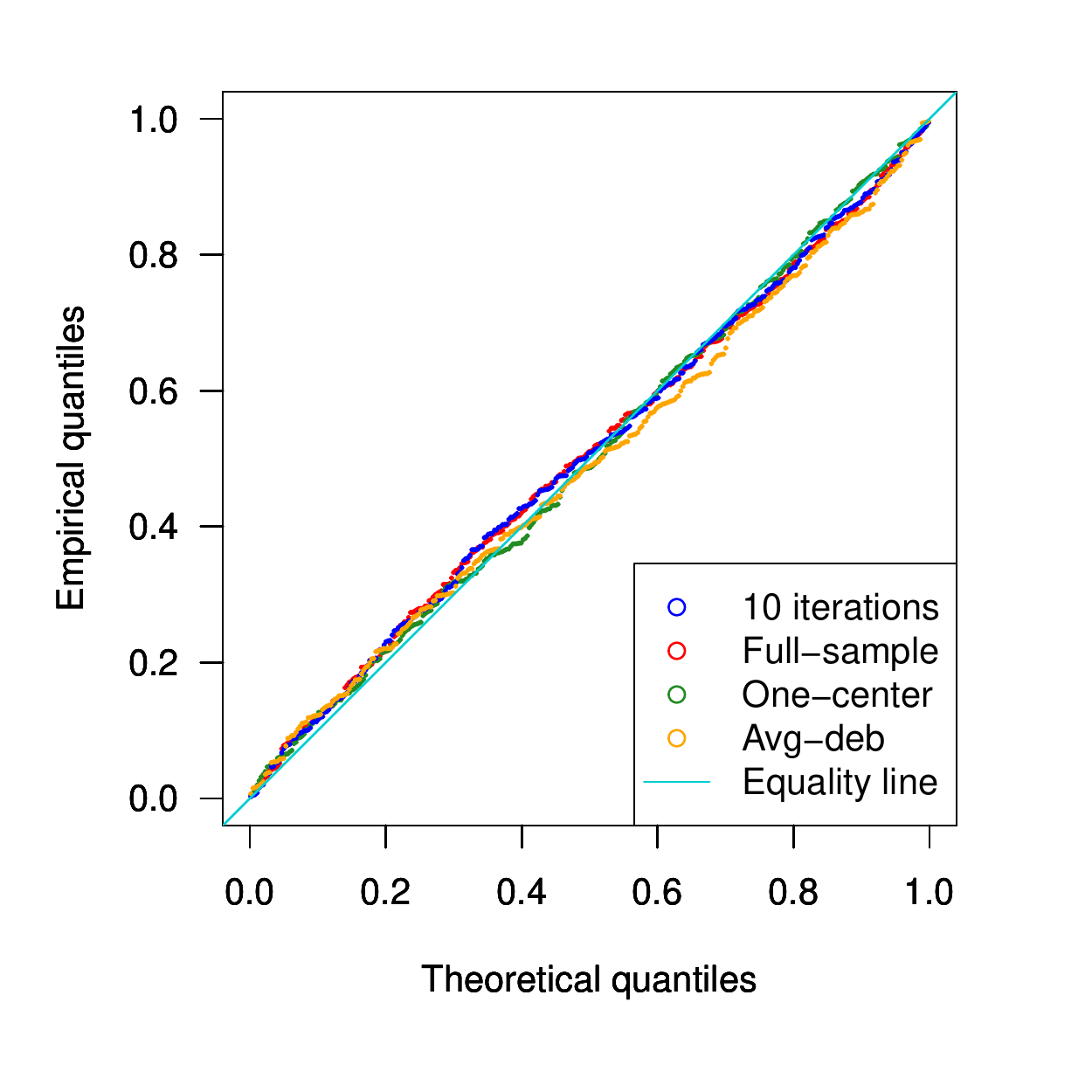}
        \caption{$K = 2$ centers}
    \end{subfigure}\hfill
    \begin{subfigure}{\subfigfracin\linewidth}
         \includegraphics[width=\imgfrac\linewidth]{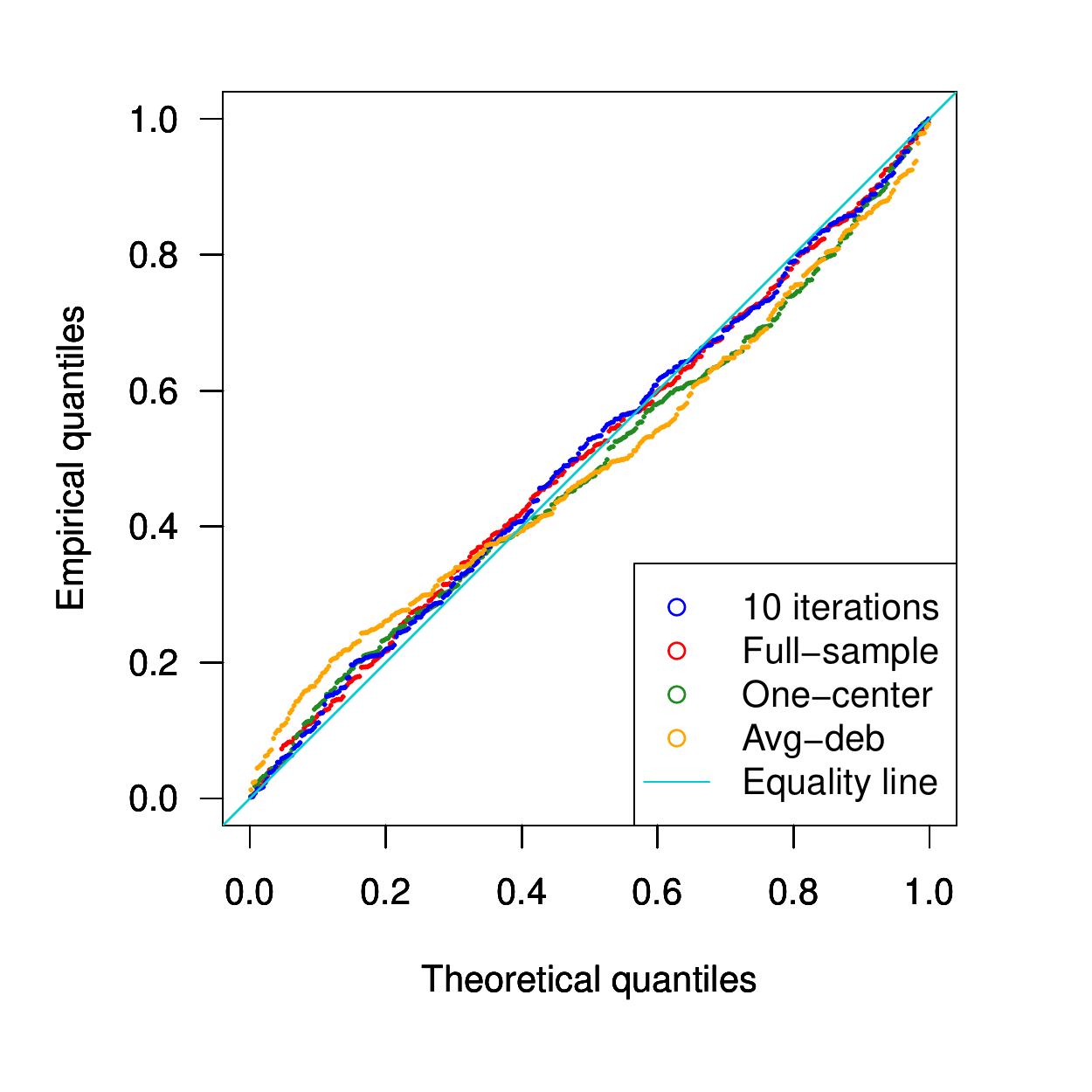}
         \caption{$K = 8$ centers}
    \end{subfigure}\hfill
    \caption{Q-Q plots of the $p$-values under $H_0$. ``$10$ iterations" (respectively ``Full-sample", ``One-center", ``Avg-deb") represents the test based on $\bbeta_{10}$ (respectively $\hbbeta_{\vartheta}$, $\bbeta_0$, $\bbeta_d$).}
    \label{fig_qqplot}
\end{figure}

Under $H_0$, the $p$-values theoretically follow a Uniform $(0,1)$ distribution. Figure~\ref{fig_qqplot} displays Q-Q plots of the $p$-values against their theoretical distribution, for $K 
\in \{2, 8\}$. We see that the distribution is indeed close to uniform.
The $p$-values coming from our testing methodology and from the full-sample estimator $\hbbeta_{\vartheta}$ follow the Uniform $(0,1)$ distribution more closely than the $p$-values coming from the one-center and the average-debiased estimators, and this is more visible when $K$ takes large values.

\begin{figure}
\centering
    \begin{subfigure}{\subfigfracin\linewidth}
        \includegraphics[width=\imgfrac\linewidth]{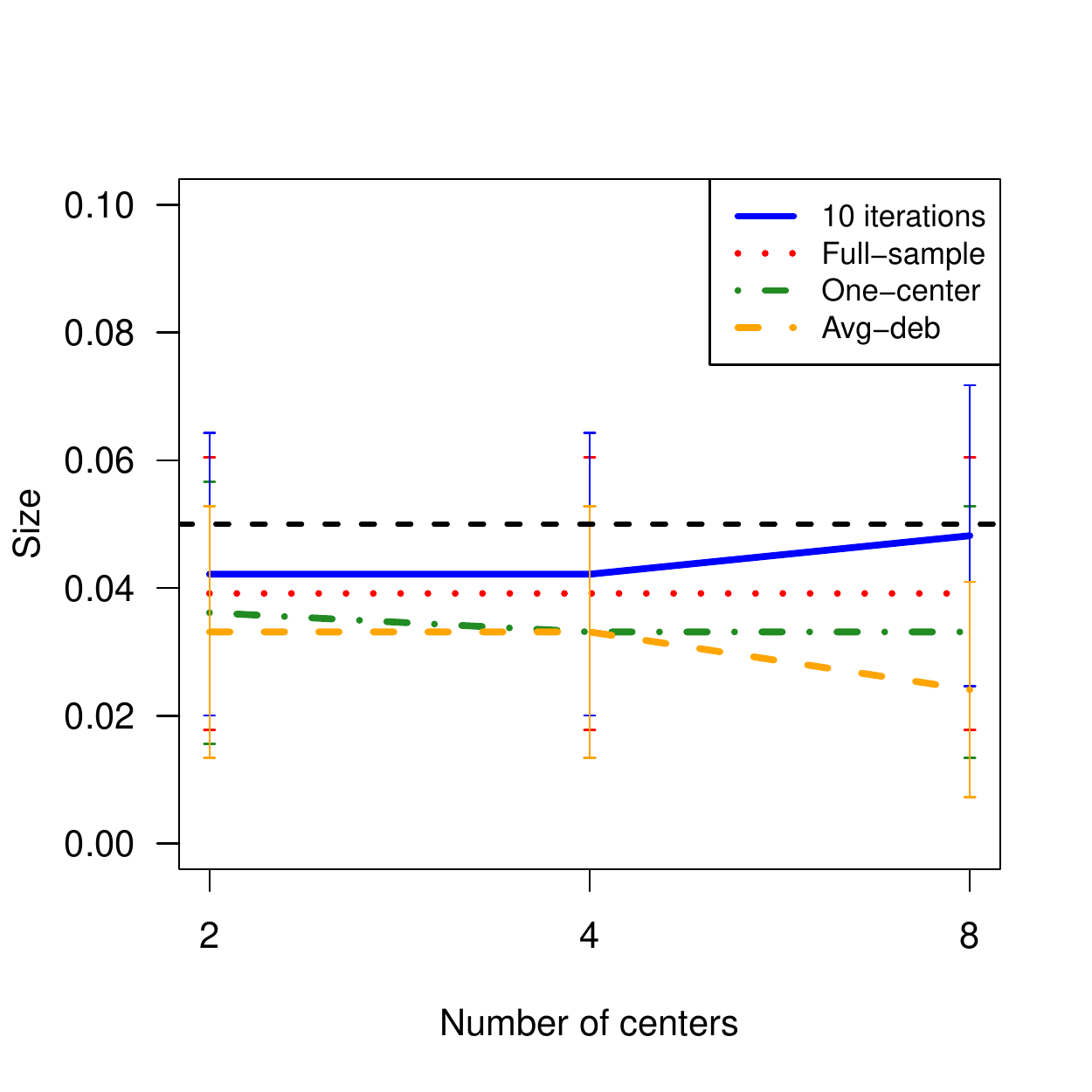}
        \caption{Size}
    \end{subfigure}\hfill
    \begin{subfigure}{\subfigfracin\linewidth}
         \includegraphics[width=\imgfrac\linewidth]{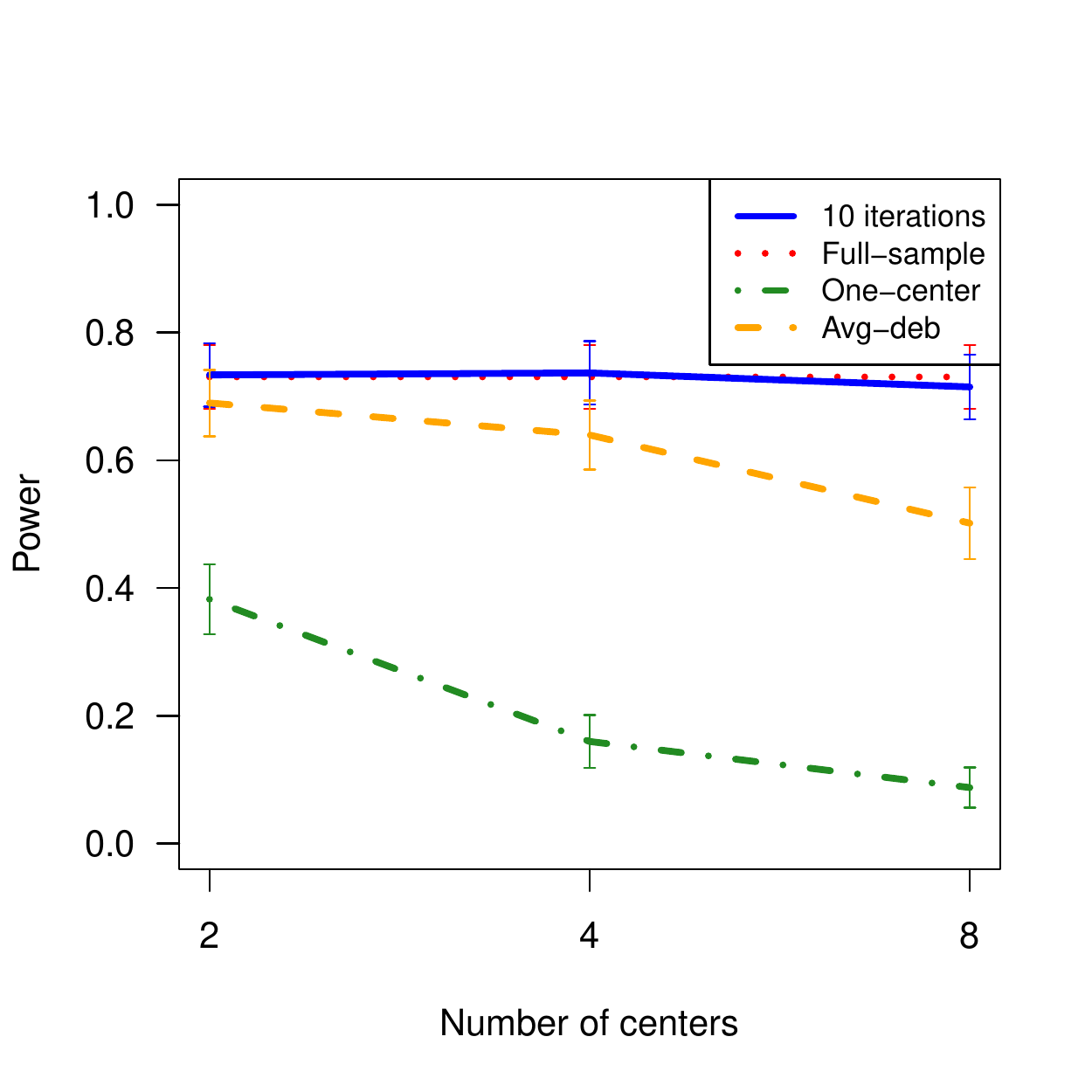}
         \caption{Power}
    \end{subfigure}\hfill
    \caption{Size and power as functions of the number of centers. ``$10$ iterations" (respectively ``Full-sample", ``One-center", ``Avg-deb") represents the test based on  $\bbeta_{10}$ (respectively $\hbbeta_{\vartheta}$, $\bbeta_0$, $\bbeta_d$).}
    \label{fig_inference}
\end{figure}

We now specify the significance level $\alpha = 0.05$. In Figure~\ref{fig_inference}, we show the estimated size and power of the various tests, along with $\pm 2$ standard error bars. The estimated size of all testing procedures are smaller than the nominal level, and our estimated test size is the closest to $\alpha$. To estimate the power of the various tests, we take $\nu^{\star} = 0.15$. For any number of centers, our test is as powerful as the test based on the full-sample estimator $\hbbeta_{\vartheta}$ and significantly more powerful than both the test based on $\bbeta_0$ and the one based on $\bbeta_d$.

\subsubsection{Linear functional}\label{subsubsection-linearfunc}
In this section, we present results obtained on the confidence interval for the linear functional $\bc^{\top} \bbeta^{\star}$. We take $\bbeta^{\star} = (0,2,2,2,{\bf{0}}_{p-4}^{\top})^{\top}$ and $\bc = (1,{\bf{0}}_{p-1}^{\top})^{\top}$, and consider $\alpha = 0.05$, which corresponds to a $95 \%$ confidence interval. The estimated coverage probability and median interval width based on $400$ replications are displayed in Figure~\ref{fig_linear-functional}. We note that all confidence intervals slightly overcover, with our confidence interval having an estimated coverage probability closer to $95 \%$ than the ones based on other estimators. The widths of the confidence intervals based on $\bbeta_{10}$, $\hat{\bbeta}_{\vartheta}$ and $\bbeta_d$ are very close, while $\bbeta_{0}$ delivers a substantially wider confidence interval, which is therefore less informative.

\begin{figure}
\centering
    \begin{subfigure}{\subfigfracin\linewidth}
        \includegraphics[width=\imgfrac\linewidth]{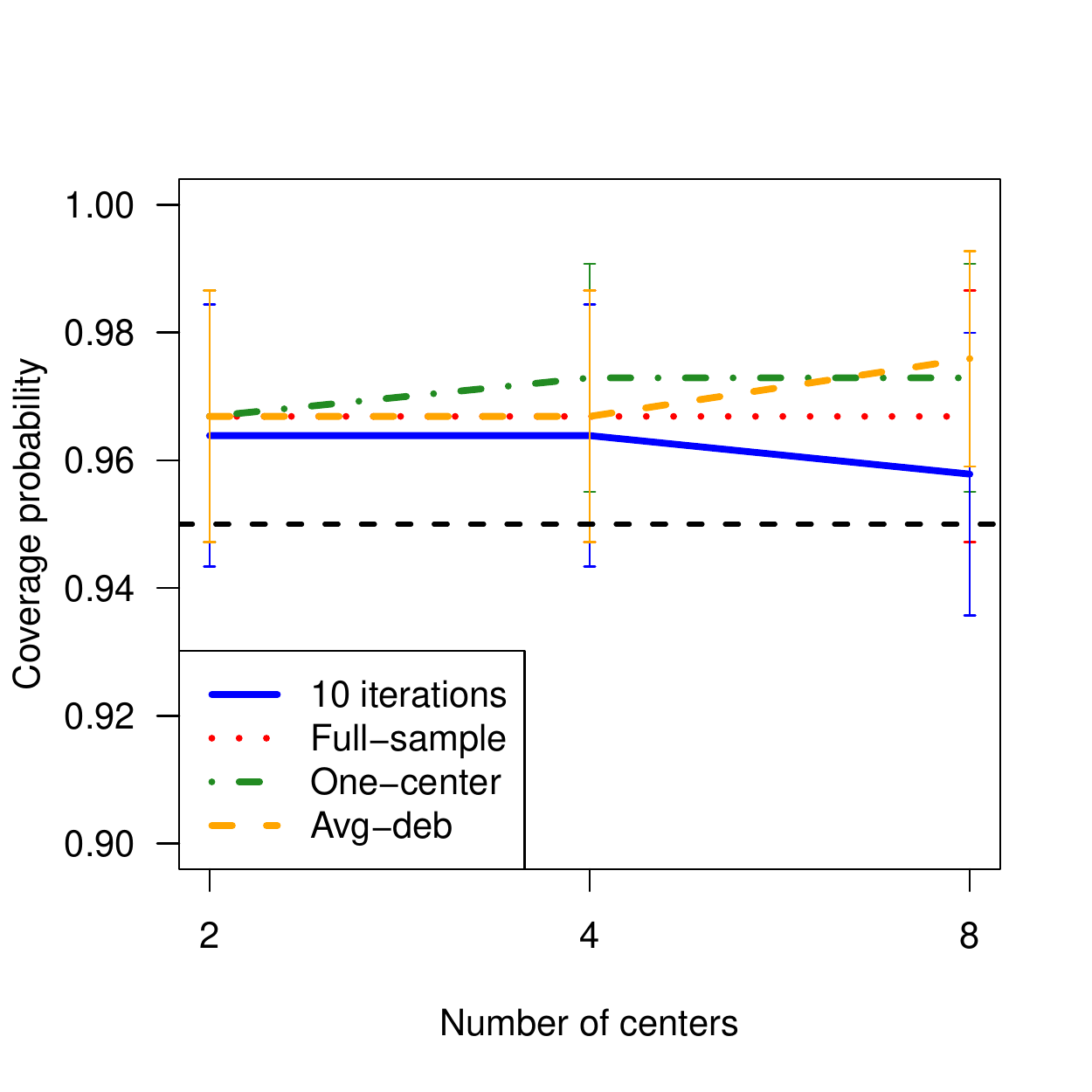}
        \caption{Coverage probability}
    \end{subfigure}\hfill
    \begin{subfigure}{\subfigfracin\linewidth}
        \includegraphics[width=\imgfrac\linewidth]{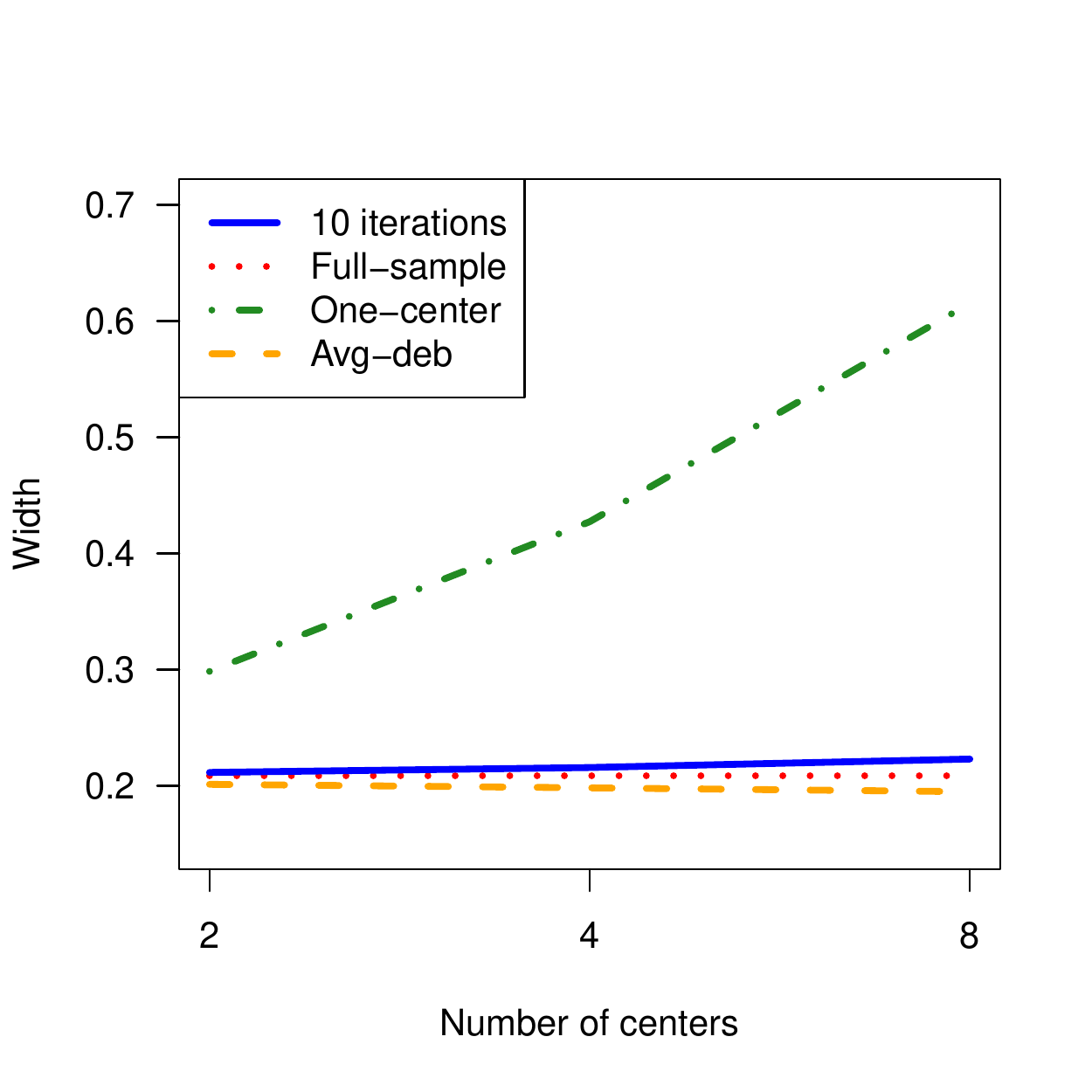}
        \caption{Confidence interval width}
    \end{subfigure}
    \caption{Estimated coverage probability and median interval width of the confidence interval for $\bc^{\top} \bbeta^{\star}$ and $\alpha = 0.05$, as a function of the number of centers. ``$10$ iterations" (respectively ``Full-sample", ``One-center", ``Avg-deb") represents the test based on  $\bbeta_{10}$ (respectively $\hbbeta_{\vartheta}$, $\bbeta_0$, $\bbeta_d$).}
    \label{fig_linear-functional}
\end{figure}

\subsection{Real data analysis}
We now use a real dataset, namely the diffuse large-B-cell lymphoma (DLBCL) dataset\footnote{The dataset is available at \url{https://llmpp.nih.gov/DLBCL/}.} of \citet{Rosenwald2002molecular} to evaluate our methodology. Gene-expression profiles, with a total of $7399$ microarray features, are related to survival time. A sample of $240$ patients with untreated DLBCL is available in this study, of which $138$ died. We preprocess the dataset as follows. The missing values for each covariate are replaced by the median of the observed values for this predictor, and we standardize the data. Moreover, we remove from our study the five patients who have follow-up time equal to zero.

We study the predictive performance of the procedures, using the concordance (C)-index~\citep{Harrell1982evaluating, Harrell1996multivariable} to this end. The C-index is the proportion of all usable patient pairs whose predictions and outcomes are concordant, that is when the patient whose estimated risk is lowest in the pair survives longer. Due to censoring, we use an inverse probability weighting adjustment~\citep{Uno2011OnTC}. C-index values are between $0$ and $1$, and $0.5$ would mean that a fitted model has no predictive discrimination. We randomly split the dataset into training and testing sets in 0.8 and 0.2 proportion. Using the training set, we then perform univariate screening to obtain the top $300$ predictors, and we apply the procedures on 4 centers. The out-of-sample C-index of the fitted models is computed thanks to the testing set. We repeat this whole procedure $400$ times, and the average values along with standard errors are given in Table~\ref{tab_C-index}. We observe that our iterated estimator has a better predictive performance than all other estimators.

\begin{table}[H]
\begin{center}
\caption{\small{Out-of-sample C-index.}}
\label{tab_C-index}
\footnotesize
\vspace{2mm}
\begin{tabular}{ccccc}
\hline\hline
 & {Iterated (3 iterations)} & {Full-sample} & {One-center} & {Average-debiased}
\\\cline{1-5}
{Average} & 0.583 & 0.581 & 0.510 & 0.533\\
{Standard error} & 0.003 & 0.003 & 0.002 & 0.003
\\\hline\hline
\end{tabular}
\end{center}
\end{table}

\section{Conclusion}
We proposed communication-efficient distributed estimation and inference algorithms for Cox's proportional hazards model. Under very mild conditions, we prove that our iterative estimator of the population parameter vector matches the ideal full-sample estimator after only a few rounds of communication. In the context of inference for linear functionals and hypothesis testing in the distributed framework, our central limit theorems and consistent variance estimators yield asymptotically valid confidence intervals and hypothesis tests. In addition to their validity, our intervals have smaller width and our tests are more powerful than alternative methods, and match the performance of the ones based on the full-sample estimator.

\bibliographystyle{apalike}
\bibliography{reference}

\newpage

\appendix

\begin{center}
\textbf{\LARGE{Appendix}}
\end{center}

\section{Additional notation}
We give here the expressions of the constants we defined in Section~\ref{sec_iterative} and Section~\ref{sec_inference}.
\begin{align}
\label{eq_A1_A2_definition}
    \cA_{1} &= 8\sqrt{6}B^2 + 32\sqrt{6}\exp(M)B^2 \Lambda_0(\tau) + \frac{768 R_{0}\exp(3M)B^2 \Lambda_0(\tau)}{\rho_{0}},\cr
    \cA_{2} &= \frac{8\exp(3M)B}{\rho_0} R_{1} + \frac{512\sqrt{6} R_{0}\exp(2M)B^3}{\rho_{0}} \{1 + \exp(M)\Lambda_{0}(\tau)\},\cr
    \cC_{0} &= \frac{3 c_{0} e B}{2\varrho_{\star}^{2}} + \frac{6\eta_{0} e B}{(1 - \cA_{0})\varrho_{\star}^{2}},\cr
    \cA_{3, \diamond} &= 2 B \cM_{\diamond} + 8 \exp(M) B \cM_{\diamond} \Lambda_{0}({\tau}) + \frac{64R_{0}\exp(3M)B\cM_{\diamond} \Lambda_{0}(\tau)}{\rho_{0}},\cr
    \cA_{3, \sharp} &= 2 B \cM_{\sharp} + 8 \exp(M) B \cM_{\sharp} \Lambda_{0}({\tau}) + \frac{64R_{0}\exp(3M)B\cM_{\sharp} \Lambda_{0}(\tau)}{\rho_{0}}.
\end{align}

We remind quantities appearing in $\cL(\bbeta)$, $\nabla \cL(\bbeta)$ and $\nabla^{2} \cL(\bbeta)$, and introduce their population counterparts.
\begin{align*}
    S^{(\ell)}(\bbeta, t) = \frac{1}{n} \sum_{i = 1}^{n} Y_{i}(t) \{\bx_{i}(t)\}^{\otimes \ell} \exp\{\bx_{i}(t)^{\top}\bbeta\} \enspace \mathrm{and} \enspace s^{(\ell)}(\bbeta, t) = \bbE \{S^{(\ell)}(\bbeta, t)\}, \enspace \mathrm{for} \enspace \ell = 0,1,2,
\end{align*}
\begin{align*}
    \cX(\bbeta, t) = \frac{S^{(1)}(\bbeta, t)}{S^{(0)}(\bbeta, t)} \enspace \mathrm{and} \enspace \be(\bbeta, t) = \frac{s^{(1)}(\bbeta, t)}{s^{(0)}(\bbeta, t)},
\end{align*}
\begin{align*}
    \bV(\bbeta, t) = \frac{S^{(2)}(\bbeta, t)}{S^{(0)}(\bbeta, t)} - \cX(\bbeta, t)^{\otimes 2} \enspace \mathrm{and} \enspace \bv(\bbeta, t) = \frac{s^{(2)}(\bbeta, t)}{s^{(0)}(\bbeta, t)} - \be(\bbeta, t)^{\otimes 2}.
\end{align*}

\section{Intermediate results and proofs}

In this section, we present results that will be used in the proofs of both Section~\ref{sec_iterative} and Section~\ref{sec_inference}.

\begin{lemma}[Lemma 3.2 in \citet{Huang2013oracle}]\label{Lemma_3.2_Huang}
For any vectors $\bbeta, \bomega \in \mathbb{R}^p$, let $D_{1}(\bbeta + \bomega, \bbeta) = \bomega^\top \{\nabla \cL_{1}(\bbeta + \bomega) - \nabla \cL_{1}(\bbeta)\}$. We have 
\begin{align*}
    \exp(-\eta_{\bomega}) \bomega^{\top} \nabla^{2} \cL_{1} (\bbeta) \bomega \leq D_{1}(\bbeta + \bomega, \bbeta) \leq \exp(\eta_{\bomega}) \bomega^{\top} \nabla^{2} \cL_{1} (\bbeta) \bomega,
\end{align*}
where $\eta_{\bomega} = \max_{t \in [0, \tau]} \max_{1\leq i, j \leq m} |\bomega^{\top}\{\bx_{i}(t) - \bx_{j}(t)\}|$. Moreover,
\begin{align*}
    \exp(-2\eta_{\bomega}) \nabla^{2} \cL_{1} (\bbeta) \preccurlyeq \nabla^{2} \cL_{1} (\bbeta + \bomega) \preccurlyeq \exp(2\eta_{\bomega}) \nabla^{2} \cL_{1} (\bbeta).
\end{align*}
\end{lemma}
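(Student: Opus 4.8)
The plan is to reduce both inequalities to pointwise-in-$t$ statements about the weighted-covariance structure underlying $\nabla^{2}\cL_{1}$, and then integrate against the nondecreasing process $\bar N_{1}$. Fix $t\in[0,\tau]$ and, writing $w_{i}=Y_{i}(t)\exp\{\bx_{i}(t)^{\top}\bbeta\}$, $p_{i}=w_{i}/\sum_{j\in\cI_{1}}w_{j}$ and $a_{i}=\bomega^{\top}\bx_{i}(t)$ for $i\in\cI_{1}$, note that the subsample-$\cI_{1}$ versions $\cX_{1},\bV_{1}$ of $\cX,\bV$ (so that $\nabla^{2}\cL_{1}(\bbeta)=\tfrac1m\int_{0}^{\tau}\bV_{1}(\bbeta,t)\,d\bar N_{1}(t)$, the center-$\cI_{1}$ analogue of~\eqref{eq_Gradient_Hessian}) satisfy $\bomega^{\top}\cX_{1}(\bbeta,t)=\sum_{i}p_{i}a_{i}$, that replacing $\bbeta$ by $\bbeta+\bomega$ merely replaces the weights $p_{i}$ by the tilted weights $p_{i}'\propto p_{i}e^{a_{i}}$, and that $\bomega^{\top}\bV_{1}(\bbeta,t)\bomega=\sum_{i}p_{i}a_{i}^{2}-(\sum_{i}p_{i}a_{i})^{2}=\tfrac12\sum_{i,j}p_{i}p_{j}(a_{i}-a_{j})^{2}$. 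The only quantitative input needed is that $\max_{i,j\in\cI_{1}}|a_{i}-a_{j}|\le\eta_{\bomega}$ for every $t$, which is the definition of $\eta_{\bomega}$.

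For the scalar bound, I would first record the identity $D_{1}(\bbeta+\bomega,\bbeta)=\tfrac1m\int_{0}^{\tau}\{\bomega^{\top}\cX_{1}(\bbeta+\bomega,t)-\bomega^{\top}\cX_{1}(\bbeta,t)\}\,d\bar N_{1}(t)$, which follows by expanding $\nabla\cL_{1}$ as in~\eqref{eq_Gradient_Hessian} and observing that the $\bx_{i}(t)$ terms cancel. A short symmetrization shows that, pointwise in $t$, the integrand equals $\big(\tfrac12\sum_{i,j}p_{i}p_{j}(a_{i}-a_{j})(e^{a_{i}}-e^{a_{j}})\big)\big/\big(\sum_{j}p_{j}e^{a_{j}}\big)$. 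Using the elementary inequalities $x\le e^{x}-1\le xe^{x}$ for $x\ge0$, any pair with $a_{i}\ge a_{j}$ obeys $(a_{i}-a_{j})^{2}e^{a_{j}}\le(a_{i}-a_{j})(e^{a_{i}}-e^{a_{j}})\le(a_{i}-a_{j})^{2}e^{a_{i}}$; bounding the exponentials in the numerator by $e^{a_{\min}}$ and $e^{a_{\max}}$ and noting $e^{a_{\min}}\le\sum_{j}p_{j}e^{a_{j}}\le e^{a_{\max}}$ yields $e^{-(a_{\max}-a_{\min})}\bomega^{\top}\bV_{1}(\bbeta,t)\bomega\le\bomega^{\top}\cX_{1}(\bbeta+\bomega,t)-\bomega^{\top}\cX_{1}(\bbeta,t)\le e^{a_{\max}-a_{\min}}\bomega^{\top}\bV_{1}(\bbeta,t)\bomega$. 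Since $a_{\max}-a_{\min}\le\eta_{\bomega}$, integrating against the nonnegative measure $\tfrac1m\,d\bar N_{1}(t)$ gives the claimed two-sided bound on $D_{1}(\bbeta+\bomega,\bbeta)$.

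For the matrix inequality I would run the same argument with an arbitrary test direction $\bu$, setting $b_{i}=\bu^{\top}\bx_{i}(t)$, so that $\bu^{\top}\bV_{1}(\bbeta+\bomega,t)\bu=\tfrac12\sum_{i,j}p_{i}'p_{j}'(b_{i}-b_{j})^{2}$ and $\bu^{\top}\bV_{1}(\bbeta,t)\bu=\tfrac12\sum_{i,j}p_{i}p_{j}(b_{i}-b_{j})^{2}$. Here $p_{i}'p_{j}'/(p_{i}p_{j})=e^{a_{i}+a_{j}}/(\sum_{k}p_{k}e^{a_{k}})^{2}$, and bounding numerator and denominator separately shows this ratio lies in $[e^{-2(a_{\max}-a_{\min})},e^{2(a_{\max}-a_{\min})}]\subseteq[e^{-2\eta_{\bomega}},e^{2\eta_{\bomega}}]$, so $e^{-2\eta_{\bomega}}\bu^{\top}\bV_{1}(\bbeta,t)\bu\le\bu^{\top}\bV_{1}(\bbeta+\bomega,t)\bu\le e^{2\eta_{\bomega}}\bu^{\top}\bV_{1}(\bbeta,t)\bu$ for all $\bu$ and $t$; integrating against $\tfrac1m\,d\bar N_{1}(t)$ yields the PSD ordering. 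The one point to be careful about — and the reason the scalar bound carries the exponent $\eta_{\bomega}$ while the matrix bound carries $2\eta_{\bomega}$ — is that in the scalar case the perturbation and test directions coincide, which lets one apply the sharper inequality $e^{x}-1\le xe^{x}$ to the cross term $(a_{i}-a_{j})(e^{a_{i}}-e^{a_{j}})$ rather than a crude two-sided comparison of the tilted weights; this is really the only nonroutine step.
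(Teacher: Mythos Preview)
Your argument is correct. The paper does not supply its own proof of this lemma: it is stated as ``Lemma~3.2 in \citet{Huang2013oracle}'' and used as a black box throughout, so there is no in-paper proof to compare against. What you have written is a clean, self-contained derivation of exactly the inequality being quoted: the symmetrized representation $\bomega^{\top}\cX_{1}(\bbeta+\bomega,t)-\bomega^{\top}\cX_{1}(\bbeta,t)=\tfrac12\sum_{i,j}p_ip_j(a_i-a_j)(e^{a_i}-e^{a_j})\big/\sum_kp_ke^{a_k}$ is correct, the pairwise sandwich $(a_i-a_j)^2e^{a_j}\le(a_i-a_j)(e^{a_i}-e^{a_j})\le(a_i-a_j)^2e^{a_i}$ (for $a_i\ge a_j$) is exactly the right way to extract the factor $e^{\pm\eta_{\bomega}}$, and for the matrix statement your tilted-weight ratio bound $p_i'p_j'/(p_ip_j)\in[e^{-2\eta_{\bomega}},e^{2\eta_{\bomega}}]$ is the standard route. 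The only implicit point worth making explicit is that the maxima $a_{\max},a_{\min}$ are taken over indices with $Y_i(t)>0$, so that $a_{\max}-a_{\min}\le\eta_{\bomega}$ follows from the unrestricted definition of $\eta_{\bomega}$; this is harmless but you might spell it out.
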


\begin{lemma}[Lemma 3.3 in \citet{Huang2013oracle}]
\label{Lemma_3.3_Huang}
Let $\{\alpha_{i}(s), s \in [0, \tau]\}_{i = 1}^{n}$ be predictable processes with values in $[-1, 1]$ and $f_n(t) = n^{-1} \sum_{i=1}^n \int_{0}^{t} \alpha_{i}(s) d M_{i}(s)$. 
Then, for all $C_0 > 0$,
\begin{align*}
    \bbP\left(\sup_{0\leq t\leq \tau} |f_{n}(t)| > C_{0} x, \ \sum_{i = 1}^{n} \int_{0}^{\tau} Y_{i}(t) d N_{i}(t) \leq C_{0}^{2} n\right) \leq 2 \exp\left(-\frac{n x^{2}}{2}\right).
\end{align*}
\end{lemma}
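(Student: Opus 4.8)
The plan is to derive the bound from a maximal inequality for an exponential supermartingale attached to the purely discontinuous martingale $X_t := n f_n(t) = \sum_{i=1}^{n}\int_0^t \alpha_i(s)\,dM_i(s)$. First I would reduce to a one-sided estimate: writing $\{\sup_{0\le t\le\tau}|f_n(t)|>C_0x\}=\{\sup_{0\le t\le\tau}f_n(t)>C_0x\}\cup\{\sup_{0\le t\le\tau}(-f_n(t))>C_0x\}$ and observing that $-\alpha_i(\cdot)$ still takes values in $[-1,1]$, a union bound reduces the claim to showing $\bbP(\sup_{0\le t\le\tau}f_n(t)>C_0x,\ \sum_i\int_0^\tau Y_i\,dN_i\le C_0^2 n)\le\exp(-nx^2/2)$; this accounts for the prefactor $2$.

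The key step is to convert the random side event into a deterministic bound on quadratic variation. I would introduce the stopping time $\sigma=\inf\{t\le\tau:\sum_i\int_0^t Y_i(s)\,dN_i(s)>C_0^2 n\}$, with $\sigma=\tau$ if the count never exceeds $C_0^2 n$. On the side event the observed count stays below $C_0^2 n$ throughout $[0,\tau]$, so $f_n(\cdot\wedge\sigma)$ agrees with $f_n$ there and it suffices to bound $\bbP(\sup_{t\le\tau}f_n(t\wedge\sigma)>C_0 x)$ over the whole space. The payoff is that, since $\alpha_i^2\le 1$, the measure $dN_i$ is carried by $\{Y_i=1\}$, and the $N_i$ have no common jumps, the optional quadratic variation of $X^\sigma:=X_{\cdot\wedge\sigma}$ obeys $[X^\sigma]_\tau=\sum_i\int_0^{\tau\wedge\sigma}\alpha_i^2\,dN_i\le\sum_i\int_0^{\tau\wedge\sigma}Y_i\,dN_i\le C_0^2 n$ (up to the lone jump at $\sigma$, which a slightly sharper stopping rule removes), while $|\Delta X^\sigma_s|\le 1$.

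Next I would apply the martingale exponential device. For $\theta\in(0,1)$ the Dol\'eans--Dade exponential $\mathcal{E}(\theta X^\sigma)_t=\exp(\theta X^\sigma_t)\prod_{s\le t}(1+\theta\Delta X^\sigma_s)e^{-\theta\Delta X^\sigma_s}$ is a nonnegative local martingale, hence a supermartingale equal to $1$ at $t=0$ (positivity uses $|\Delta X^\sigma_s|\le 1$). Solving for $\exp(\theta X^\sigma_t)$ and bounding the correction term by $\sum_{s\le t}\{\theta\Delta X^\sigma_s-\log(1+\theta\Delta X^\sigma_s)\}\le c\,\theta^2[X^\sigma]_t\le c\,\theta^2 C_0^2 n$ gives $\sup_{t\le\tau}\exp(\theta X^\sigma_t)\le e^{c\theta^2 C_0^2 n}\sup_{t\le\tau}\mathcal{E}(\theta X^\sigma)_t$; Ville's inequality for the nonnegative supermartingale $\mathcal{E}(\theta X^\sigma)$ then yields $\bbP(\sup_{t\le\tau}X^\sigma_t\ge a)\le\exp(c\theta^2 C_0^2 n-\theta a)$. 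Taking $a=nC_0 x$ and optimizing over $\theta$ produces a bound of the asserted Gaussian form $\exp(-nx^2/2)$, which together with the first paragraph completes the argument.

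The part I expect to require the most care is getting the exponent constant exactly equal to $1/2$ rather than merely sub-Gaussian. The elementary inequality $u-\log(1+u)\le u^2/2$ that makes $c=1/2$ holds only for $u\ge 0$, so the negative-jump contributions to the correction term must be handled more delicately; the constant $1/2$ genuinely emerges in the small-tilt regime, which is the relevant one precisely because the threshold $C_0x$ is measured against the variance proxy $C_0^2$ — this $C_0$ versus $C_0^2$ scaling in the statement is exactly what makes the optimization land on the Gaussian rate. A secondary nuisance, easily absorbed into constants, is the extra counting-process jump at $\sigma$, which perturbs only lower-order terms.
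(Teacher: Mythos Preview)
The paper does not supply its own proof of this lemma; it is quoted from \citet{Huang2013oracle} and used as a black box throughout the appendix, so there is nothing in the present paper to compare your argument against.

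Your overall strategy is the standard one and is correctly outlined: reduce to a one-sided bound, localize the optional quadratic variation by stopping at $\sigma=\inf\{t:\sum_i\int_0^t Y_i\,dN_i>C_0^2 n\}$, and apply Ville's inequality to the Dol\'eans--Dade exponential $\mathcal{E}(\theta X^\sigma)$. The difficulty you single out, however, is a genuine gap rather than a cosmetic issue. The inequality $u-\log(1+u)\le u^2/2$ fails for $u<0$; for $|u|\le\theta<1$ the integral form of the Taylor remainder only gives $u-\log(1+u)\le u^2/\{2(1-\theta)^2\}$, so after stopping you obtain $\sum_{s}g(\theta\Delta X^\sigma_s)\le \theta^2[X^\sigma]_\tau/\{2(1-\theta)^2\}$, not $\theta^2[X^\sigma]_\tau/2$. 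Optimizing the resulting exponent yields the constant $1/2$ only in the limit $\theta\to 0$, i.e.\ when the implied optimal tilt $\theta^\ast\asymp x/C_0$ is small. Your heuristic that ``the $C_0$ versus $C_0^2$ scaling makes the optimization land on the Gaussian rate'' is therefore an approximation, not a proof that the exponent equals $-nx^2/2$ for every $x>0$; as written your argument delivers a Bernstein-type tail rather than the stated sub-Gaussian one. To close the gap you would need either to consult the original argument in \citet{Huang2013oracle}---which may carry a slightly different constant or exploit the extra structure that each $N_i$ jumps at most once---or to handle the range $x\gtrsim C_0$ by a separate (cruder) bound. For the uses made of the lemma in this paper, where $x$ is of order $\sqrt{(\log p)/n}$ with $C_0$ bounded, the small-tilt regime is indeed the operative one and your sketch is morally sufficient; but as a proof of the lemma \emph{as stated} it remains incomplete on exactly the point you identify.
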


\begin{lemma}\label{Lemma_HB}
Let $\bDelta(\bdelta) = \nabla^{2} \cL(\bbeta^{\star} + \bdelta) - \nabla^{2} \cL(\bbeta^{\star})$ and $\omega_{0} > 0$. Under Assumptions~\ref{Assumption_covariate}, \ref{Assumption_beta_1} and~\ref{Assumption_at_risk}, we have 
\begin{align}\label{eq_HB_l2}
    \bbP\left\{\sup_{0 < \|\bdelta\|_{1} \leq \omega_{0}}\frac{\|\bDelta(\bdelta) \bdelta\|_{\infty}}{\cD_{1}u\|\bdelta\|_{1}^{2} + \cD_{2}\|\bdelta\|_{2}^{2}} > \frac{4\exp(2M_{\diamond})B}{\rho_{0}}\right\} \leq 4 p^{2} \exp\left(- \frac{nu^{2}}{2}\right) + 2 \exp\left(-\frac{n \rho_{0}^{2}}{2}\right), 
\end{align}
where $M_{\diamond} = M + B \omega_{0}$, 
\begin{align*}
    \cD_{1} = 4 B^{2} \{1 + \exp(M) \Lambda_{0}(\tau)\} \enspace \mathrm{and} \enspace \cD_{2} = \exp(M) \lambda_{\max}\left\{\int_{0}^{\tau} \bSigma(t) d\Lambda_0(t)\right\}. 
\end{align*}
\end{lemma}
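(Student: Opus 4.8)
The plan is to first write $\bDelta(\bdelta)\bdelta$ coordinatewise as an integral of weighted third central moments of the covariates along the segment $[\bbeta^{\star},\bbeta^{\star}+\bdelta]$, then collapse the $\ell_{\infty}$ norm to a single quadratic form of the Hessian at $\bbeta^{\star}$, and finally control that quadratic form by a martingale--compensator split. Write $p_{i}(\bbeta,t)=Y_{i}(t)e^{\bx_{i}(t)^{\top}\bbeta}/\{nS^{(0)}(\bbeta,t)\}$, so that $\cX(\bbeta,t)=\sum_{i}p_{i}(\bbeta,t)\bx_{i}(t)$ and $\bV(\bbeta,t)=\sum_{i}p_{i}(\bbeta,t)\{\bx_{i}(t)-\cX(\bbeta,t)\}^{\otimes2}$ is a weighted covariance. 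A direct differentiation gives $\partial_{\beta_{m}}p_{i}=p_{i}(x_{im}-\cX_{m})$ and $\partial_{\beta_{m}}\cX_{l}=\bV_{ml}$, hence $\partial_{\beta_{m}}\bV_{jk}(\bbeta,t)=\sum_{i}p_{i}(\bbeta,t)(x_{ij}-\cX_{j})(x_{ik}-\cX_{k})(x_{im}-\cX_{m})$. Applying the fundamental theorem of calculus to $s\mapsto\bV(\bbeta^{\star}+s\bdelta,t)$ and using $\nabla^{2}\cL(\bbeta)=n^{-1}\int_{0}^{\tau}\bV(\bbeta,t)\,d\bar N(t)$,
\begin{align*}
    \bigl(\bDelta(\bdelta)\bdelta\bigr)_{j}=\frac1n\int_{0}^{\tau}\int_{0}^{1}\sum_{i}p_{i}(\bbeta^{\star}+s\bdelta,t)\bigl(x_{ij}-\cX_{j}\bigr)\bigl(\bx_{i}(t)^{\top}\bdelta-\cX^{\top}\bdelta\bigr)^{2}\,ds\,d\bar N(t),
\end{align*}
where $\cX=\cX(\bbeta^{\star}+s\bdelta,t)$ inside the inner integral.

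Next, bounding the single linear factor $|x_{ij}-\cX_{j}|\le2B$ by Assumption~\ref{Assumption_covariate} and recognizing $\sum_{i}p_{i}(\bbeta^{\star}+s\bdelta,t)(\bx_{i}(t)^{\top}\bdelta-\cX^{\top}\bdelta)^{2}=\bdelta^{\top}\bV(\bbeta^{\star}+s\bdelta,t)\bdelta$ gives, uniformly in $j$,
\begin{align*}
    \bigl|\bigl(\bDelta(\bdelta)\bdelta\bigr)_{j}\bigr|\le 2B\int_{0}^{1}\frac1n\int_{0}^{\tau}\bdelta^{\top}\bV(\bbeta^{\star}+s\bdelta,t)\bdelta\,d\bar N(t)\,ds.
\end{align*}
Since $p_{i}(\bbeta^{\star}+s\bdelta,t)\propto p_{i}(\bbeta^{\star},t)e^{s\bx_{i}(t)^{\top}\bdelta}$, writing $\bdelta^{\top}\bV(\bbeta^{\star}+s\bdelta,t)\bdelta$ as the minimum over centerings $c$ of $\sum_{i}p_{i}(\bbeta^{\star}+s\bdelta,t)(\bx_{i}(t)^{\top}\bdelta-c)^{2}$ and centering instead at the $\bbeta^{\star}$-weighted mean yields the pointwise-in-$t$ bound $\bdelta^{\top}\bV(\bbeta^{\star}+s\bdelta,t)\bdelta\le e^{\eta}\bdelta^{\top}\bV(\bbeta^{\star},t)\bdelta$ with $\eta\le 2Bs\|\bdelta\|_{1}\le 2B\omega_{0}$ (a pointwise sharpening of Lemma~\ref{Lemma_3.2_Huang}). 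Hence $\|\bDelta(\bdelta)\bdelta\|_{\infty}\le 2Be^{2B\omega_{0}}\,\bdelta^{\top}\nabla^{2}\cL(\bbeta^{\star})\bdelta$, and, importantly, the remaining $\bdelta$-dependence enters only through $\|\bdelta\|_{1}$ and $\|\bdelta\|_{2}$, so no discretization over $\bdelta$ is needed.

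It remains to show $\bdelta^{\top}\nabla^{2}\cL(\bbeta^{\star})\bdelta\le\cD_{1}u\|\bdelta\|_{1}^{2}+\cD_{2}\|\bdelta\|_{2}^{2}$ with the stated probability. Using $M_{i}(t)=N_{i}(t)-\Lambda_{i}(t)$ together with $\sum_{i}d\Lambda_{i}(t)=nS^{(0)}(\bbeta^{\star},t)\,d\Lambda_{0}(t)$,
\begin{align*}
    \bdelta^{\top}\nabla^{2}\cL(\bbeta^{\star})\bdelta=\frac1n\int_{0}^{\tau}\bdelta^{\top}\bV(\bbeta^{\star},t)\bdelta\,d\bar M(t)+\int_{0}^{\tau}\bdelta^{\top}\bigl\{S^{(0)}(\bbeta^{\star},t)\bV(\bbeta^{\star},t)\bigr\}\bdelta\,d\Lambda_{0}(t).
\end{align*}
For the martingale term, apply Lemma~\ref{Lemma_3.3_Huang} entrywise with $\alpha_{i}(t)=\bV_{kl}(\bbeta^{\star},t)/(4B^{2})$ (predictable w.r.t.\ $\cF_{t}$ and valued in $[-1,1]$ since $|\bV_{kl}|\lesssim B^{2}$ by Assumption~\ref{Assumption_covariate}) and $C_{0}=1$ (the constraint $\sum_{i}\int_{0}^{\tau}Y_{i}\,dN_{i}\le n$ holds identically), and union over the $p^{2}$ pairs $(k,l)$; this bounds the term by $\|\bdelta\|_{1}^{2}\max_{k,l}|n^{-1}\int_{0}^{\tau}\bV_{kl}\,d\bar M|\lesssim B^{2}u\|\bdelta\|_{1}^{2}$ off an event of probability at most $2p^{2}e^{-nu^{2}/2}$. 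For the compensator term, the PSD ordering $S^{(0)}\bV=S^{(2)}-S^{(1)}(S^{(1)})^{\top}/S^{(0)}\preccurlyeq S^{(2)}$ gives $\bdelta^{\top}\{S^{(0)}\bV\}\bdelta\le\bdelta^{\top}S^{(2)}(\bbeta^{\star},t)\bdelta$; splitting $S^{(2)}=s^{(2)}+(S^{(2)}-s^{(2)})$, the population part satisfies $\int_{0}^{\tau}\bdelta^{\top}s^{(2)}(\bbeta^{\star},t)\bdelta\,d\Lambda_{0}=\int_{0}^{\tau}\E\{Y_{1}(t)e^{\bx_{1}(t)^{\top}\bbeta^{\star}}(\bx_{1}(t)^{\top}\bdelta)^{2}\}\,d\Lambda_{0}\le e^{M}\bdelta^{\top}\bigl(\int_{0}^{\tau}\bSigma(t)\,d\Lambda_{0}(t)\bigr)\bdelta\le e^{M}R_{1}\|\bdelta\|_{2}^{2}$ by Assumptions~\ref{Assumption_beta_1},~\ref{Assumption_covariance_matrix_covariate} and the centering of the covariates, while the fluctuation part $\int_{0}^{\tau}(S^{(2)}_{kl}-s^{(2)}_{kl})\,d\Lambda_{0}$ is an average of $n$ i.i.d.\ centered terms of range $\le 2e^{M}B^{2}\Lambda_{0}(\tau)$, so by Hoeffding and a $p^{2}$ union bound its contribution is $\lesssim e^{M}B^{2}\Lambda_{0}(\tau)\,u\|\bdelta\|_{1}^{2}$ off an event of probability at most $2p^{2}e^{-nu^{2}/2}$. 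Summing the pieces yields the asserted bound on $\bdelta^{\top}\nabla^{2}\cL(\bbeta^{\star})\bdelta$ with $\cD_{1},\cD_{2}$ as defined; combined with the reduction above, and controlling the $1/S^{(0)}$ factors coarsely via the lower bound $\inf_{t\le\tau}S^{(0)}(\bbeta^{\star},t)\ge e^{-M}\rho_{0}/2$ (which holds with probability $\ge1-2e^{-n\rho_{0}^{2}/2}$ by Hoeffding and Assumption~\ref{Assumption_at_risk}, and produces the $e^{2M}/\rho_{0}$ portion of the stated constant), this gives both $4e^{2M_{\diamond}}B/\rho_{0}$ and the failure probability $4p^{2}e^{-nu^{2}/2}+2e^{-n\rho_{0}^{2}/2}$.

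The main obstacle is this last step: one must split $\bdelta^{\top}\nabla^{2}\cL(\bbeta^{\star})\bdelta$ into a genuinely $\|\bdelta\|_{2}^{2}$-sized deterministic piece---which forces passing through the population second moment $s^{(2)}$ and Assumption~\ref{Assumption_covariance_matrix_covariate} only after discarding the ratio $S^{(1)}(S^{(1)})^{\top}/S^{(0)}$ via the PSD ordering---and a stochastic fluctuation piece that can only be charged to $\|\bdelta\|_{1}^{2}$, all while keeping the union bound at $O(p^{2})$ and verifying the $\cF_{t}$-predictability needed to invoke Lemma~\ref{Lemma_3.3_Huang}. The coordinatewise third-moment identity of the first step and the pointwise sharpening of Lemma~\ref{Lemma_3.2_Huang} in the second are the two ingredients that make the $\ell_{\infty}$ bound reduce cleanly to a single quadratic form, so that no net over $\bdelta$ is ever required.
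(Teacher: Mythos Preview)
Your argument is correct and in fact proves a slightly sharper inequality than the lemma states; it differs from the paper's proof in one key place. Both proofs start from the same third-moment identity for $(\bDelta(\bdelta)\bdelta)_{j}$ and both bound the single factor $|x_{ij}-\cX_{j}|\le 2B$. From there the paper recenters at the \emph{population} mean and bounds the individual weights $\omega_{i}(\bbeta^{\star}+\theta_{j}\bdelta,t)\le 2e^{2M_{\diamond}}/(n\rho_{0})$ on the event $\cE=\{\sum_{i}Y_{i}(\tau)>n\rho_{0}/2\}$, which is what introduces the factor $e^{2M}/\rho_{0}$ and the $2e^{-n\rho_{0}^{2}/2}$ term; the remaining quadratic form is then $\int\bdelta^{\top}\hat\bSigma(t)\bdelta\,d\bar N(t)$, split via $d\bar N=d\bar M+nS^{(0)}d\Lambda_{0}$. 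You instead recenter at the $\bbeta^{\star}$-weighted mean and use the pointwise inequality $\bdelta^{\top}\bV(\bbeta^{\star}+s\bdelta,t)\bdelta\le e^{2B\omega_{0}}\bdelta^{\top}\bV(\bbeta^{\star},t)\bdelta$, reducing directly to $\bdelta^{\top}\nabla^{2}\cL(\bbeta^{\star})\bdelta$; your compensator step $S^{(0)}\bV\preccurlyeq S^{(2)}$ followed by the $s^{(2)}$/Hoeffding split is a clean alternative to the paper's $\hat\bSigma$ route and gives the same $\cD_{2}$ and a $\cD_{1}$ no larger than the stated one.

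The one point to clean up is your last paragraph: in your approach there are no $1/S^{(0)}$ factors to control, so Assumption~\ref{Assumption_at_risk}, the event $\{\inf_{t}S^{(0)}\ge e^{-M}\rho_{0}/2\}$, and the additional $2e^{-n\rho_{0}^{2}/2}$ are not needed at all. Your argument yields $\|\bDelta(\bdelta)\bdelta\|_{\infty}\le 2Be^{2B\omega_{0}}(\cD_{1}u\|\bdelta\|_{1}^{2}+\cD_{2}\|\bdelta\|_{2}^{2})$ with failure probability $4p^{2}e^{-nu^{2}/2}$; since $2Be^{2B\omega_{0}}\le 4Be^{2M_{\diamond}}/\rho_{0}$, the lemma as stated follows a fortiori. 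The sentence attributing the $e^{2M}/\rho_{0}$ portion to ``controlling the $1/S^{(0)}$ factors'' describes the paper's mechanism, not yours, and should be dropped.
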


\begin{remark}
\label{remark_HB}
Under the same conditions as in Lemma~\ref{Lemma_HB}, for any vector $\balpha \in \bbR^{p}$ such that $\sup_{t \in [0, \tau]} |\bx(t)^{\top} \balpha| \leq M_{\balpha}$, we have 
\begin{align}
\label{equation_HB}
    \bbP\left\{\sup_{0 < \|\bdelta\|_{1} \leq \omega_{0}}\frac{\|\balpha^{\top}\bDelta(\bdelta) \bdelta\|_{\infty}}{\cD_{1}u\|\bdelta\|_{1}^{2} + \cD_{2}\|\bdelta\|_{2}^{2}} > \frac{4\exp(2M_{\diamond})M_{\balpha}}{\rho_{0}}\right\} \leq 4 p \exp\left(- \frac{nu^{2}}{2}\right) + 2 \exp\left(-\frac{n \rho_{0}^{2}}{2}\right), 
\end{align}
\qed
\end{remark}

\begin{proof}[Proof of Lemma~\ref{Lemma_HB}]
By Assumptions~\ref{Assumption_covariate} and \ref{Assumption_beta_1}, for any $\bbeta_{\theta} = \bbeta^{\star} + \theta \bdelta$ with $0\leq \theta \leq 1$,
\begin{align}
\label{eq_B_diamond}
    \max_{i \in [n]} |\bx_{i}(t)^\top \bbeta_{\theta}| &\leq \max_{i \in [n]} |\bx_{i}(t)^\top \bbeta^{\star}| + \max_{i \in [n]} \|\bx_{i}(t)\|_{\infty} \|\bbeta_{\theta} - \bbeta^{\star}\|_{1}\cr 
    &\leq M + \theta B \omega_{0} \leq M_{\diamond}. 
\end{align}
For simplicity of notation, we write $\Delta_{\infty}(\bdelta) = \|\{\nabla^2 \cL(\bbeta^{\star} + \bdelta) - \nabla^2 \cL(\bbeta^{\star})\}\bdelta\|_{\infty}$ and 
\begin{align*}
    \omega_{i}(\bbeta, t) = \frac{Y_{i}(t)\exp\{\bx_{i}(t)^\top \bbeta\}}{\sum_{\ell = 1}^{n} Y_{\ell}(t) \exp\{\bx_{\ell}(t)^\top \bbeta\}}, \enspace i = 1, \ldots, n. 
\end{align*}
With this notation, we have 
\begin{align*}
    \Delta_{\infty}(\bdelta) = \left\|\frac{1}{n}\int_{0}^{\tau} \{\bV(\bbeta^{\star} + \bdelta, t) - \bV(\bbeta^{\star}, t)\}\bdelta d\bar{N}(t)\right\|_{\infty},
\end{align*}
where for any $\bbeta\in\bbR^{p}$ and $t \in [0, \tau]$, 
\begin{align}\label{eq_bV}
    \bV(\bbeta, t) &= \sum_{i = 1}^{n} \omega_{i}(\bbeta, t) \bx_{i}(t)\bx_{i}(t)^{\top} - \left\{\sum_{i = 1}^{n} \omega_{i}(\bbeta, t) \bx_{i}(t)\right\}\left\{\sum_{\ell = 1}^{n} \omega_{\ell}(\bbeta, t) \bx_{\ell}(t)^\top\right\}\cr
    &= \sum_{i = 1}^{n} \omega_{i}(\bbeta, t)\left\{\bx_{i}(t) - \sum_{\ell = 1}^{n} \omega_{\ell}(\bbeta, t) \bx_{\ell}(t)\right\}^{\otimes 2}\cr 
    &=: \sum_{i = 1}^{n} \omega_{i}(\bbeta, t) \tilde{\bx}_{i}(\bbeta, t)^{\otimes 2}. 
\end{align}
Elementary calculations imply that the gradient of $\omega_{i}(\bbeta, t)$ with respect to $\bbeta$ is 
\begin{align}\label{eq_nabla_vartheta}
    \nabla \omega_{i}(\bbeta, t) = \omega_{i}(\bbeta, t) \left\{\bx_{i}(t) - \sum_{\ell = 1}^{n} \omega_{\ell}(\bbeta, t)\bx_{\ell}(t)\right\} = \omega_{i}(\bbeta, t) \tilde{\bx}_{i}(\bbeta, t).  
\end{align}
Write $\{\bV(\bbeta^{\star} + \bdelta, t) - \bV(\bbeta^{\star}, t)\}\bdelta = \{\varepsilon_{1}(t), \ldots, \varepsilon_{p}(t)\}^{\top}$. Then, for each $j \in [p]$, by the mean value theorem, 
\begin{align*}
    \varepsilon_{j}(t) = \sum_{k = 1}^{p} \{V_{jk}(\bbeta^{\star} + \bdelta, t) - V_{jk}(\bbeta^{\star}, t)\}\delta_{k} = \sum_{k = 1}^{p} \bdelta^{\top} \nabla V_{jk}(\bbeta^{\star} + \theta_{j}\bdelta, t) \delta_{k}
\end{align*}
for some $\theta_{j}\in [0, 1]$. In view of~\eqref{eq_bV}, we have 
\begin{align*}
    \varepsilon_{j}(t) &= \sum_{k = 1}^{p} \bdelta^{\top} \left\{\sum_{i = 1}^{n} \nabla \omega_{i}(\bbeta^{\star} + \theta_{j} \bdelta, t) \tilde{x}_{ij}(\bbeta^{\star} + \theta_j \bdelta, t)\tilde{x}_{ik}(\bbeta^{\star} + \theta_j \bdelta, t)\right\} \delta_{k}\cr
    &+\sum_{k = 1}^{p} \bdelta^\top \left\{\sum_{i = 1}^{n} \omega_{i}(\bbeta^{\star} + \theta_j \bdelta, t) \nabla \tilde{x}_{ij}(\bbeta^{\star} + \theta_{j} \bdelta, t)\tilde{x}_{ik}(\bbeta^{\star} + \theta_{j} \bdelta, t)\right\} \delta_{k}\cr
    &+\sum_{k = 1}^{p} \bdelta^\top \left\{\sum_{i = 1}^{n} \omega_{i}(\bbeta^{\star} + \theta_j \bdelta, t) \tilde{x}_{ij}(\bbeta^{\star} + \theta_j \bdelta, t) \nabla \tilde{x}_{ik}(\bbeta^{\star} + \theta_j \bdelta, t)\right\} \delta_{k}.
\end{align*}
For each $j \in [p]$, we observe that  
\begin{align}\label{eq_weighted_mean_0}
    \sum_{i = 1}^{n} \omega_{i}(\bbeta^{\star} + \theta_{j} \bdelta, t) \tilde{x}_{ij}(\bbeta^{\star} + \theta_j \bdelta, t) = 0, 
\end{align}
and that the gradient vector  
\begin{align*}
    \nabla \tilde{x}_{ij}(\bbeta^{\star} + \theta_j \bdelta, t) &= \nabla \left\{x_{ij}(t) - \sum_{\ell = 1}^{n} \omega_{\ell}(\bbeta^{\star} + \theta_{j} \bdelta, t) x_{\ell j}(t)\right\}\cr 
    &= -\sum_{\ell = 1}^{n} \nabla \omega_{\ell}(\bbeta^{\star} + \theta_j \bdelta, t) x_{\ell j}(t)
\end{align*}
is independent of the index $i \in [n]$. Hence we have 
\begin{align*}
    &\sum_{k = 1}^{p} \bdelta^\top \left\{\sum_{i = 1}^{n} \omega_{i}(\bbeta^{\star} + \theta_j \bdelta, t) \nabla \tilde{x}_{ij}(\bbeta^{\star} + \theta_j \bdelta, t)\tilde{x}_{ik}(\bbeta^{\star} + \theta_j \bdelta, t)\right\} \delta_{k} = 0,\cr
    &\sum_{k = 1}^{p} \bdelta^\top \left\{\sum_{i = 1}^{n} \omega_{i}(\bbeta^{\star} + \theta_j \bdelta, t) \tilde{x}_{ij}(\bbeta^{\star} + \theta_j \bdelta, t) \nabla \tilde{x}_{ik}(\bbeta^{\star} + \theta_j \bdelta, t)\right\} \delta_{k} = 0.
\end{align*}
Consequently, by \eqref{eq_nabla_vartheta}, we obtain
\begin{align}\label{eq_xi_j}
    \varepsilon_{j}(t) &= \sum_{k = 1}^{p} \bdelta^\top \left\{\sum_{i = 1}^{n} \nabla \omega_{i}(\bbeta^{\star} + \theta_j \bdelta, t) \tilde{x}_{ij}(\bbeta^{\star} + \theta_j \bdelta, t)\tilde{x}_{ik}(\bbeta^{\star} + \theta_j \bdelta, t)\right\} \delta_{k}\cr
    &= \sum_{k = 1}^{p} \bdelta^\top \left\{\sum_{i = 1}^{n} \omega_{i}(\bbeta^{\star} + \theta_j \bdelta, t) \tilde{\bx}_{i}(\bbeta^{\star} + \theta_j \bdelta, t) \tilde{x}_{ij}(\bbeta^{\star} + \theta_j \bdelta, t) \tilde{x}_{ik}(\bbeta^{\star} + \theta_j \bdelta, t)\right\} \delta_{k}\cr
    &= \sum_{i = 1}^{n} \omega_{i}(\bbeta^{\star} + \theta_j \bdelta, t) \tilde{x}_{ij}(\bbeta^{\star} + \theta_j \bdelta, t) \left\{\tilde{\bx}_{i}(\bbeta^{\star} + \theta_j \bdelta, t)^\top \bdelta\right\}^{2}.
\end{align}
Recall that $\bbE_{0}(\cdot) = \cdot - \bbE(\cdot)$. By Assumption~\ref{Assumption_covariate} and \eqref{eq_weighted_mean_0},  
\begin{align*}
    \max_{j \in [p]}|\varepsilon_{j}(t)| &\leq 2B \max_{j \in [p]} \sum_{i = 1}^{n} \omega_{i}(\bbeta^{\star} + \theta_j \bdelta, t) \left\{\tilde{\bx}_{i}(\bbeta^{\star} + \theta_j \bdelta, t)^\top \bdelta\right\}^{2}\cr
    &\leq 2B \max_{j \in [p]} \sum_{i = 1}^{n} \omega_{i}(\bbeta^{\star} + \theta_{j} \bdelta, t) \left[\bbE_{0}\{\bx_{i}(t)\}^\top \bdelta\right]^{2},
\end{align*}
where the second inequality follows from the fact that a weighted average minimizes the empirical weighted mean squared error.
This implies that  
\begin{align*}
    \Delta_{\infty}(\bdelta) &\leq \frac{2B}{n} \int_{0}^{\tau} \max_{j \in [p]} \sum_{i = 1}^{n} \omega_{i}(\bbeta^{\star} + \theta_{j} \bdelta, t) \left[\bbE_{0}\{\bx_{i}(t)\}^\top \bdelta\right]^{2} d\bar{N}(t)\cr
    &\leq \frac{2B}{n} \max_{i \in [n]}\max_{j \in [p]} \sup_{t\in[0, \tau]} \omega_{i}(\bbeta^{\star} + \theta_j \bdelta, t) \int_{0}^{\tau} \sum_{i = 1}^{n} \left[\bbE_{0}\{\bx_{i}(t)\}^\top \bdelta\right]^{2} d \bar{N}(t). 
\end{align*}
We first upper bound $\omega_{i}(\bbeta^{\star} + \theta_{j}\bdelta, t)$. Define  
\begin{align}
\label{eq_event_E}
    \cE = \left\{\sum_{i = 1}^{n} Y_{i}(\tau) > \frac{n\rho_0}{2}\right\}. 
\end{align}
Since $Y_{1}(\tau), \ldots, Y_{n}(\tau)$ are i.i.d.~random variables with $0 \leq Y_{i}(\tau) \leq 1$ and $\bbP(Y_{1}(\tau) = 1) \geq \rho_{0}$, it follows from Hoeffding's inequality that 
\begin{align*}
    \bbP(\cE) \geq 1 - \exp\left(-\frac{n\rho_0^2}{2}\right).
\end{align*}
Under $\cE$, it follows from \eqref{eq_B_diamond} that  
\begin{align*}
    \max_{i \in [n]} \max_{j \in [p]} \sup_{t\in [0, \tau]} \omega_{i}(\bbeta^{\star} + \theta_{j}\bdelta, t) \leq \frac{\exp(2M_{\diamond})}{\sum_{\ell = 1}^{n} Y_{\ell}(\tau)} \leq \frac{2\exp(2M_{\diamond})}{n\rho_{0}}.
\end{align*}
Consequently, we obtain 
\begin{align*}
    \Delta_{\infty}(\bdelta) &\leq \frac{4\exp(2M_{\diamond})B}{n \rho_{0}} \int_{0}^{\tau} \bdelta^\top \hat{\bSigma}(t) \bdelta d\bar{N}(t), \enspace \mathrm{where} \enspace \hat{\bSigma}(t) = \frac{1}{n}\sum_{i = 1}^{n} \bbE_{0}\{\bx_{i}(t)\}\bbE_{0}\{\bx_{i}(t)\}^\top.
\end{align*}
Recall that $d M_i(t) = d N_i(t) - Y_i(t) \exp\{\bx_{i}(t)^\top \bbeta^{\star}\} d\Lambda_0(t)$ and $\bar{M}(t) = \sum_{i = 1}^{n} M_{i}(t)$. Hence $d \bar{N}(t) = d \bar{M}(t) + n S^{(0)}(\bbeta^{\star}, t) d\Lambda_{0}(t)$ and    
\begin{align*}
    \int_{0}^{\tau} \bdelta^\top \hat{\bSigma}(t) \bdelta d\bar{N}(t) &= \int_{0}^{\tau} \bdelta^\top \hat{\bSigma}(t) \bdelta d \bar{M}(t) + n\int_{0}^{\tau} \bdelta^\top \hat{\bSigma}(t) \bdelta S^{(0)}(\bbeta^{\star}, t) d\Lambda_0(t)\cr
    &\leq \left\|\int_{0}^{\tau} \hat{\bSigma}(t) d\bar{M}(t)\right\|_{\max} \|\bdelta\|_{1}^{2} + n\int_{0}^{\tau} \bdelta^\top \hat{\bSigma}(t) \bdelta S^{(0)}(\bbeta^{\star}, t) d\Lambda_0(t)\cr
    &=: \Delta_{1} \|\bdelta\|_{1}^{2} + \Delta_{2}(\bdelta). 
\end{align*}
We first bound $\Delta_{1}$. By Assumption~\ref{Assumption_covariate}, we have $\sup_{t\in [0, \tau]}\|\hat{\bSigma}(t)\|_{\max} \leq 4B^{2}$. Hence, for any $u > 0$, it follows from Lemma~\ref{Lemma_3.3_Huang} that 
\begin{align*}
    \bbP(\Delta_{1} > 4 B^{2} u) \leq 2 p^{2} \exp\left(-\frac{u^{2}}{2n}\right). 
\end{align*}
We now upper bound $\Delta_{2}(\bdelta)$. By Assumptions~\ref{Assumption_covariate} and \ref{Assumption_beta_1}, 
\begin{align*}
    \Delta_{2}(\bdelta) &\leq n\exp(M) \int_{0}^{\tau} \bdelta^\top \hat{\bSigma}(t) \bdelta d\Lambda_0(t) \cr
    &= n\exp(M) \int_{0}^{\tau}\bdelta^\top \left\{\hat{\bSigma}(t) - \bSigma(t)\right\} \bdelta d\Lambda_0(t) + n\exp(M) \int_{0}^{\tau} \bdelta^\top \bSigma(t) \bdelta d \Lambda_0(t)\cr
    &\leq n\exp(M) \left\|\int_{0}^{\tau} \left\{\hat{\bSigma}(t) - \bSigma(t)\right\} d\Lambda_0(t)\right\|_{\max} \|\bdelta\|_{1}^{2}\cr
    &+ n \exp(M) \lambda_{\max}\left(\int_{0}^{\tau} \bSigma(t) d\Lambda_0(t)\right) \|\bdelta\|_{2}^{2}.
\end{align*}
By Assumption~\ref{Assumption_covariate} and Hoeffding's inequality,
\begin{align*}
    \bbP\left(\left\|\int_{0}^{\tau} \left\{\hat{\bSigma}(t) - \bSigma(t)\right\} d\Lambda_0(t)\right\|_{\max} > 4 B^{2} \Lambda_{0}(\tau) u\right) \leq 2 p^{2} \exp\left(- \frac{n u^{2}}{2}\right).  
\end{align*}
Putting all these pieces together, we obtain~\eqref{eq_HB_l2}.
\end{proof}

\begin{lemma}\label{Lemma_Hessian}
Let Assumptions \ref{Assumption_beta_1} and \ref{Assumption_at_risk} hold. For any vectors $\bphi_{1}, \bphi_{2} \in \bbR^{p}$, we assume that there exist $M_{1} > 0$ and $M_{2} > 0$ such that 
\begin{align}
\label{eq_bound_M}
    \max_{i \in [n]} \sup_{t \in [0, \tau]} |\bx_{i}(t)^{\top} \bphi_{1}| \leq M_{1} \enspace \mathrm{and} \enspace \max_{i \in [n]} \sup_{t \in [0, \tau]} |\bx_{i}(t)^{\top} \bphi_{2}| \leq M_{2}. 
\end{align}
For some $\chi_{0} > 0$, we define 
\begin{align*}
    \cE_{0} = \left\{\inf_{t \in [0, \tau]} S^{(0)}(\bbeta^{\star}, t) > \chi_{0}\right\}. 
\end{align*}
Let $\bDelta_{n} = \nabla^{2} \cL(\bbeta^{\star}) - \cH^{\star}$ where $\cH^{\star}$ is defined in~\eqref{eq_L_star}. Then, for any $u, v > 0$,  we have 
\begin{align}\label{eq_Hessian_max}
    \bbP\left(\left\{|\bphi_{1}^{\top}\bDelta_{n} \bphi_{2}| > A_{1} u + A_{2} v\right\} \cap \cE_{0}\right) \leq 4\exp\left(-\frac{n u^{2}}{2}\right) + 4.442 \exp\left(-\frac{nv}{2 + 2\sqrt{v}/3}\right),
\end{align}
where 
\begin{align*}
    A_{1} = M_{1} M_{2} + 4 \exp(M) M_{1} M_{2} \Lambda_{0}(\tau) \enspace \mathrm{and} \enspace A_{2} = \frac{4\exp(2M) M_{1} M_{2} \Lambda_{0}(\tau)}{\chi_{0}}. 
\end{align*}
\end{lemma}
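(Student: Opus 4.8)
The plan is to split $\nabla^{2}\cL(\bbeta^{\star})$ from~\eqref{eq_Gradient_Hessian} through the Doob--Meyer decomposition $d\bar N(t)=d\bar M(t)+nS^{(0)}(\bbeta^{\star},t)\,d\Lambda_{0}(t)$ and to match the predictable part against $\cH^{\star}$ in~\eqref{eq_L_star}. Writing $e_{i}(t)=\exp\{\bx_{i}(t)^{\top}\bbeta^{\star}\}$ and abbreviating $\cX=\cX(\bbeta^{\star},t)$, $\be=\be(\bbeta^{\star},t)$, $S^{(\ell)}=S^{(\ell)}(\bbeta^{\star},t)$, I would first collect three identities: (i) expanding $\{\bx_{1}(t)-\be\}^{\otimes 2}$ and using $s^{(1)}=s^{(0)}\be$ gives $\cH^{\star}=\int_{0}^{\tau}s^{(0)}(\bbeta^{\star},t)\,\bv(\bbeta^{\star},t)\,d\Lambda_{0}(t)$; here Assumption~\ref{Assumption_at_risk} gives $s^{(0)}(\bbeta^{\star},t)\ge\rho_{0}e^{-M}>0$, so $\be$ is well defined and is a genuine $Y_{1}e_{1}$-weighted mean of the $\bx_{1}(t)$'s, whence $|\be^{\top}\bphi_{j}|\le M_{j}$ by~\eqref{eq_bound_M}; (ii) from $\tfrac1n\sum_{i}Y_{i}e_{i}\{\bx_{i}-\be\}=S^{(0)}\{\cX-\be\}$ one obtains $\bV(\bbeta^{\star},t)S^{(0)}=\tfrac1n\sum_{i}Y_{i}e_{i}\{\bx_{i}-\be\}^{\otimes 2}-S^{(0)}\{\cX-\be\}^{\otimes 2}$; and (iii) its exact population analogue $\bv(\bbeta^{\star},t)s^{(0)}=\bbE[Y_{1}e_{1}\{\bx_{1}-\be\}^{\otimes 2}]$, with no correction term because $\be$ is the exact weighted mean. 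Combining (i)--(iii) gives $\bphi_{1}^{\top}\bDelta_{n}\bphi_{2}=I+II_{1}-II_{2}$ with $I=\tfrac1n\int_{0}^{\tau}\bphi_{1}^{\top}\bV(\bbeta^{\star},t)\bphi_{2}\,d\bar M(t)$, $II_{1}=\tfrac1n\sum_{i=1}^{n}(W_{i}-\bbE W_{i})$ where $W_{i}=\int_{0}^{\tau}Y_{i}e_{i}\{(\bx_{i}-\be)^{\top}\bphi_{1}\}\{(\bx_{i}-\be)^{\top}\bphi_{2}\}\,d\Lambda_{0}$, and $II_{2}=\int_{0}^{\tau}S^{(0)}\{(\cX-\be)^{\top}\bphi_{1}\}\{(\cX-\be)^{\top}\bphi_{2}\}\,d\Lambda_{0}(t)$.

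The terms $I$ and $II_{1}$ yield the $A_{1}u$ part, with $A_{1}=M_{1}M_{2}+4e^{M}M_{1}M_{2}\Lambda_{0}(\tau)$. Recalling the weighted-covariance form~\eqref{eq_bV} of $\bV(\bbeta^{\star},t)$ and applying Cauchy--Schwarz against the weights (each weighted variance of $\bx_{i}(t)^{\top}\bphi_{j}$ being at most $M_{j}^{2}$), I get $|\bphi_{1}^{\top}\bV(\bbeta^{\star},t)\bphi_{2}|\le M_{1}M_{2}$; this integrand is predictable and $\sum_{i}\int_{0}^{\tau}Y_{i}(t)\,dN_{i}(t)\le n$ almost surely (each term is in $\{0,1\}$), so Lemma~\ref{Lemma_3.3_Huang} applied to $(M_{1}M_{2})^{-1}\bphi_{1}^{\top}\bV(\bbeta^{\star},\cdot)\bphi_{2}$ with $C_{0}=1$ gives $\bbP(|I|>M_{1}M_{2}u)\le 2\exp(-nu^{2}/2)$. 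For $II_{1}$, the bounds $|(\bx_{i}(t)-\be)^{\top}\bphi_{j}|\le 2M_{j}$ and $Y_{i}e_{i}\le e^{M}$ give $|W_{i}|\le 4e^{M}M_{1}M_{2}\Lambda_{0}(\tau)$, so Hoeffding's inequality gives $\bbP(|II_{1}|>4e^{M}M_{1}M_{2}\Lambda_{0}(\tau)u)\le 2\exp(-nu^{2}/2)$, and adding the two bounds yields $\bbP(|I|+|II_{1}|>A_{1}u)\le 4\exp(-nu^{2}/2)$.

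The heart of the proof, and the step I expect to be the main obstacle, is the nonlinear remainder $II_{2}$, which must be shown to be at most $A_{2}v$ on $\cE_{0}$ with a Bernstein-type tail. Put $g_{i}^{(j)}(t)=Y_{i}(t)e_{i}(t)\{\bx_{i}(t)-\be\}^{\top}\bphi_{j}$; these are i.i.d.\ mean-zero elements of the Hilbert space $L^{2}([0,\tau],d\Lambda_{0})$ (mean zero because $s^{(1)}=s^{(0)}\be$), with $\|g_{i}^{(j)}\|_{L^{2}}^{2}\le 4e^{2M}M_{j}^{2}\Lambda_{0}(\tau)=:b_{j}$ almost surely. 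Since $S^{(0)}\{(\cX-\be)^{\top}\bphi_{j}\}=\tfrac1n\sum_{i}g_{i}^{(j)}(t)$, we may write $II_{2}=\int_{0}^{\tau}(S^{(0)})^{-1}\big(\tfrac1n\sum_{i}g_{i}^{(1)}(t)\big)\big(\tfrac1n\sum_{i}g_{i}^{(2)}(t)\big)\,d\Lambda_{0}(t)$, and on $\cE_{0}$ (where $S^{(0)}>\chi_{0}$ throughout $[0,\tau]$) Cauchy--Schwarz in $L^{2}(\Lambda_{0})$ yields $|II_{2}|\le\chi_{0}^{-1}\,\big\|\tfrac1n\sum_{i}g_{i}^{(1)}\big\|_{L^{2}}\big\|\tfrac1n\sum_{i}g_{i}^{(2)}\big\|_{L^{2}}$. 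It then remains to invoke the Bernstein inequality for sums of independent bounded Hilbert-space-valued random variables: $\bbP\big(\big\|\sum_{i}g_{i}^{(j)}\big\|_{L^{2}}\ge ns\big)\le 2\exp\big(-\tfrac{ns^{2}/2}{\sigma_{j}^{2}+\sqrt{b_{j}}\,s/3}\big)$ with $\sigma_{j}^{2}=\bbE\|g_{1}^{(j)}\|_{L^{2}}^{2}\le b_{j}$; the calibration $s=\sqrt{b_{j}v}$ makes this exponent exactly $-nv/(2+2\sqrt v/3)$ and forces $\big\|\tfrac1n\sum_{i}g_{i}^{(j)}\big\|_{L^{2}}\le\sqrt{b_{j}v}$, so that on these events $|II_{2}|\le\chi_{0}^{-1}\sqrt{b_{1}b_{2}}\,v=\tfrac{4e^{2M}M_{1}M_{2}\Lambda_{0}(\tau)}{\chi_{0}}v=A_{2}v$. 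A union bound over $j\in\{1,2\}$ bounds the corresponding failure probability by $4.442\exp(-nv/(2+2\sqrt v/3))$, the constant absorbing that of the Hilbert-space Bernstein inequality.

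Collecting the two events and intersecting with $\cE_{0}$ gives~\eqref{eq_Hessian_max}. The subtle points to get right are: (a) choosing the first-order/second-order split of $\bV S^{(0)}-\bv s^{(0)}$ in identities (ii)--(iii) so that $II_{2}$ is a bona fide product of squared empirical averages rather than a genuine second-order chaos; (b) noticing that the prescribed tail $\exp(-nv/(2+2\sqrt v/3))$ is precisely the Bernstein bound for $\big\|\tfrac1n\sum_{i}g_{i}^{(j)}\big\|_{L^{2}}^{2}$ under the calibration $s^{2}=b_{j}v$; and (c) using $\cE_{0}$ only to replace the random, time-varying factor $1/S^{(0)}(\bbeta^{\star},t)$ by the deterministic constant $1/\chi_{0}$, while the crude bound $S^{(0)}(\bbeta^{\star},t)\le e^{M}$ supplies the $\Lambda_{0}(\tau)$ factors.
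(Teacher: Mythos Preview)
Your proof is correct and follows the same route as the paper: the identical three-term split (the paper's $\cW_{1}+(\cH_{n}-\cH^{\star})-\cW_{2}$), with Lemma~\ref{Lemma_3.3_Huang} for the martingale term, Hoeffding for the i.i.d.\ term, and Cauchy--Schwarz on $\cE_{0}$ followed by a Bernstein-type bound on each $\big\|\tfrac1n\sum_{i}g_{i}^{(j)}\big\|_{L^{2}}$ for the nonlinear remainder. The only cosmetic difference is that where you invoke a generic Hilbert-space Bernstein inequality, the paper cites Lemma~4.2 of \citet{Huang2013oracle}, whose constant $2.221$ is precisely what produces the $4.442$ in~\eqref{eq_Hessian_max}.
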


\begin{proof}[Proof of Lemma~\ref{Lemma_Hessian}]
For simplicity of notation, write 
\begin{align*}
    \bV_{\diamond}(\bbeta^{\star}, t) = \sum_{i = 1}^{n} \omega_{i}(\bbeta^{\star}, t) \{\bx_{i}(t) - \be(\bbeta^{\star}, t)\}^{\otimes 2}. 
\end{align*}
Recall that $\sum_{i = 1}^{n} \omega_{i}(\bbeta^{\star}, t)\tilde{\bx}_{i}(\bbeta^{\star}, t) = 0$. Hence   
\begin{align*}
    \bV(\bbeta^{\star}, t) = \bV_{\diamond}(\bbeta^{\star}, t) - \left\{\frac{S^{(1)}(\bbeta^{\star}, t)}{S^{(0)}(\bbeta^{\star}, t)} - \be(\bbeta^{\star}, t)\right\}^{\otimes 2}.
\end{align*}
Consequently, we obtain 
\begin{align*}
    \nabla^2 \cL(\bbeta^{\star}) &= \frac{1}{n}\int_{0}^{\tau} \bV(\bbeta^{\star}, t) d\bar{M}(t) + \int_{0}^{\tau} \bV(\bbeta^{\star}, t) S^{(0)}(\bbeta^{\star}, t) d\Lambda_0(t)\cr
    &= \frac{1}{n}\int_{0}^{\tau} \bV(\bbeta^{\star}, t) d\bar{M}(t) + \int_{0}^{\tau} \bV_{\diamond}(\bbeta^{\star}, t) S^{(0)}(\bbeta^{\star}, t) d\Lambda_0(t)\cr
    &- \int_{0}^{\tau} \left\{\frac{S^{(1)}(\bbeta^{\star}, t)}{S^{(0)}(\bbeta^{\star}, t)} - \be(\bbeta^{\star}, t)\right\}^{\otimes 2} S^{(0)}(\bbeta^{\star}, t) d\Lambda_0(t)\cr
    &=: \cW_{1}(\bbeta^{\star}) + \cH_{n}(\bbeta^{\star}) - \cW_{2}(\bbeta^{\star})
\end{align*}
In view of \eqref{eq_bV}, it follows from \eqref{eq_bound_M} that 
\begin{align*}
    \sup_{t \in [0, \tau]}|\bphi_{1}^{\top} \bV (\bbeta^{\star}, t) \bphi_{2}| \leq \sup_{t \in [0, \tau]} \sqrt{\left\{\bphi_{1}^{\top} \bV (\bbeta^{\star}, t) \bphi_{1}\right\}\left\{\bphi_{2}^{\top} \bV (\bbeta^{\star}, t) \bphi_{2}\right\}} \leq M_{1} M_{2}. 
\end{align*}
Then, by Lemma~\ref{Lemma_3.3_Huang}, for any $u > 0$, 
\begin{align*}
    \bbP\left(|\bphi_{1}^{\top} \cW_{1}(\bbeta^{\star}) \bphi_{2}| > M_{1} M_{2} u\right) \leq 2 \exp\left(-\frac{n u^{2}}{2}\right). 
\end{align*}
By \eqref{eq_bound_M} and Assumption~\ref{Assumption_beta_1}, 
\begin{align*}
    \bphi_{1}^{\top} \cH_{n}(\bbeta^{\star}) \bphi_{2} = \frac{1}{n}\sum_{i = 1}^{n}\int_{0}^{\tau} Y_{i}(t) \exp\{\bx_{i}(t)^\top \bbeta^{\star}\} \bphi_{1}^{\top}\{\bx_{i}(t) - \be(\bbeta^{\star}, t)\}^{\otimes 2} \bphi_{2} d\Lambda_0(t)
\end{align*}
is an average of i.i.d.~random variables which are bounded by $4\exp(M) M_{1} M_{2}\Lambda_0(\tau)$. Hence, by Hoeffding's inequality,  
\begin{align*}
    \bbP\left\{|\bphi_{1}^{\top} \cH_{n}(\bbeta^{\star}) \bphi_{2} - \bphi_{1}^{\top} \cH^{\star} \bphi_{2}| > 4\exp(M) M_{1} M_{2} \Lambda_{0}(\tau)u\right\} \leq 2\exp\left(-\frac{nu^{2}}{2}\right).  
\end{align*}
We now bound $|\bphi_{1}^{\top} \cW_{2}(\bbeta^{\star}) \bphi_{2}|$. Under $\cE_{0}$, we have 
\begin{align*}
    \bphi_{1}^{\top} \cW_{2}(\bbeta^{\star}) \bphi_{2} &= \int_{0}^{\tau} \bphi_{1}^{\top} \left\{S^{(1)}(\bbeta^{\star}, t) - S^{(0)}(\bbeta^{\star}, t)\be(\bbeta^{\star}, t)\right\}^{\otimes 2} \bphi_{2} \frac{1}{S^{(0)}(\bbeta^{\star}, t)} d\Lambda_0(t)\cr
    &\leq \frac{1}{\chi_{0}} \int_{0}^{\tau} \bphi_{1}^{\top} \left\{S^{(1)}(\bbeta^{\star}, t) - S^{(0)}(\bbeta^{\star}, t)\be(\bbeta^{\star}, t)\right\}^{\otimes 2} \bphi_{2} d\Lambda_0(t)\cr
    &\leq \frac{1}{\chi_{0}}\sqrt{\int_{0}^{\tau} \bphi_{1}^{\top} \left\{S^{(1)}(\bbeta^{\star}, t) - S^{(0)}(\bbeta^{\star}, t)\be(\bbeta^{\star}, t)\right\}^{\otimes 2} \bphi_{1} d\Lambda_0(t)}\cr
    &\times \sqrt{\int_{0}^{\tau} \bphi_{2}^{\top} \left\{S^{(1)}(\bbeta^{\star}, t) - S^{(0)}(\bbeta^{\star}, t)\be(\bbeta^{\star}, t)\right\}^{\otimes 2} \bphi_{2} d\Lambda_0(t)}\cr
    &=: \frac{1}{\chi_{0}}\sqrt{\Phi_{1} \Phi_{2}}. 
\end{align*}
Observe that
\begin{align*}
    \bphi_{1}^{\top}\left\{S^{(1)}(\bbeta^{\star}, t) - S^{(0)}(\bbeta^{\star}, t)\be(\bbeta^{\star}, t)\right\} = \frac{1}{n}\sum_{i = 1}^{n} Y_i(t) \exp\{\bx_{i}(t)^\top \bbeta^{\star}\}\bphi_{1}^{\top} \{\bx_{i}(t) - \be(\bbeta^{\star}, t)\}
\end{align*}
is an average of i.i.d.~zero-mean random variables with 
\begin{align*}
    |Y_i(t) \exp\{\bx_{i}(t)^\top \bbeta^{\star}\}\bphi_{1}^{\top} \{\bx_{i}(t) - \be(\bbeta^{\star}, t)\}| \leq 2\exp(M)M_{1}. 
\end{align*}
Hence, by Lemma 4.2 in~\citet{Huang2013oracle} and~\eqref{eq_bound_M}, for any $v > 0$, 
\begin{align*}
    \bbP_{\Phi, 1} := \bbP\left(\Phi_{1} > 4\exp(2M)M_{1}^{2}\Lambda_{0}(\tau)v\right) \leq 2.221 \exp\left(-\frac{nv}{2 + 2\sqrt{v}/3}\right). 
\end{align*}
Similarly, we have 
\begin{align*}
    \bbP_{\Phi, 2} := \bbP\left(\Phi_{2} > 4\exp(2M)M_{2}^{2}\Lambda_{0}(\tau)v\right) \leq 2.221 \exp\left(-\frac{nv}{2 + 2\sqrt{v}/3}\right). 
\end{align*}
Consequently, we obtain 
\begin{align*}
    \bbP&\left(|\bphi_{1}^{\top} \cW_{2}(\bbeta^{\star})\bphi_{2}| > \frac{4\exp(2M) M_{1}M_{2}\Lambda_{0}(\tau) v}{\chi_{0}}\right)\cr
    &\leq \bbP_{\Phi, 1} + \bbP_{\Phi, 2} \leq 4.442\exp\left(-\frac{nv}{2 + 2\sqrt{v}/3}\right). 
\end{align*}
Putting all these pieces together, we obtain~\eqref{eq_Hessian_max}. 
\end{proof}

\section{Proofs of results in Section~\ref{sec_iterative}}

\subsection{Proof of Lemma~\ref{Theorem_consistency}}
\begin{proof}[Proof of Lemma~\ref{Theorem_consistency}]
For simplicity of notation, we omit the subscript of $\bdelta_{t + 1}$ in this proof. By Lemma 3.1 in \citet{Huang2013oracle} and~\eqref{eq_deviation_t}, we have 
\begin{align}
\label{eq_D1}
    \max\left\{D_{1}(\bbeta^{\star} + \bdelta, \bbeta^{\star}), \frac{\vartheta_{t + 1}}{2} \|\bdelta_{\cS_{\star}^{c}}\|_{1}\right\} \leq D_{1}(\bbeta^{\star} + \bdelta, \bbeta^{\star}) + \frac{\vartheta_{t + 1}}{2} \|\bdelta_{\cS_{\star}^{c}}\|_{1} \leq \frac{3\vartheta_{t + 1}}{2} \|\bdelta_{\cS_{\star}}\|_{1}.
\end{align}
Therefore $\bdelta \in \cC(\cS_{\star}, 3)$ and $D_{1}(\bbeta^{\star} + \bdelta, \bbeta^{\star}) \leq 3\vartheta_{t + 1}\|\bdelta_{\cS_{\star}}\|_{1}/2 \leq 3\vartheta_{t + 1}\sqrt{|\cS_{\star}|}\|\bdelta\|_{2}/2$. As $\cL_{1}(\bbeta)$ is convex, it follows that $D_{1}(\bbeta^{\star} + \bdelta, \bbeta^{\star}) \geq \theta^{-1} D_{1}(\bbeta_{\theta}, \bbeta^{\star})$ for any $\theta \in (0, 1]$, where $\bbeta_{\theta} = \bbeta^{\star} + \theta \bdelta$. Consequently, by Lemma~\ref{Lemma_3.2_Huang}, Assumptions~\ref{Assumption_covariate} and \ref{Assumption_eigenvalue}, 
\begin{align*}
    \frac{3\vartheta_{t + 1}\sqrt{|\cS_{\star}|}}{2} \|\bdelta\|_{2} &\geq \theta \exp\left(-\theta \max_{t\in [0, \tau]} \max_{1\leq i, j\leq m} |\bdelta^\top \bx_{i}(t) - \bdelta^\top \bx_{j}(t)|\right) \bdelta^{\top} \nabla^{2} \cL_{1} (\bbeta^{\star}) \bdelta\cr
    &\geq \theta\exp\left(-8\theta B\sqrt{|\cS_{\star}|}\|\bdelta\|_{2}\right) \bdelta^{\top} \nabla^{2} \cL_{1}(\bbeta^{\star}) \bdelta\cr
    &\geq \theta\exp\left(-8\theta B\sqrt{|\cS_{\star}|}\|\bdelta\|_{2}\right) \varrho_{\star}^{2} \|\bdelta\|_{2}^{2}. 
\end{align*}
Therefore, for any $\theta \in (0, 1]$, we have
\begin{align*}
    8\theta B \sqrt{|\cS_{\star}|}\|\bdelta\|_2 \exp\left(- 8\theta B \sqrt{|\cS_{\star}|}\|\bdelta\|_2\right) \leq \frac{12\vartheta_{t + 1} |\cS_{\star}| B}{\varrho_{\star}^{2}} = \chi_{t + 1}. 
\end{align*}
As $\chi_{t + 1} < e^{-1}$, it follows that $8B\sqrt{|\cS_{\star}|}\|\bdelta\|_{2} \leq \varpi_{t + 1}$, where $\varpi_{t + 1}$ is the smaller solution of equation $\varpi_{t + 1}\exp(-\varpi_{t + 1}) = \chi_{t + 1}$. Consequently, we obtain 
\begin{align*}
    \|\bdelta\|_2 \leq \frac{\varpi_{t + 1}}{8B\sqrt{|\cS_{\star}|}} = \frac{\exp(\varpi_{t + 1}) \chi_{t + 1}}{8B\sqrt{|\cS_{\star}|}} = \frac{3\exp(\varpi_{t + 1})\vartheta_{t + 1} \sqrt{|\cS_{\star}|}}{2\varrho_{\star}^{2}} \leq \frac{3 \vartheta_{t + 1} e \sqrt{|\cS_{\star}|}}{2\varrho_{\star}^{2}}
\end{align*}
and
\begin{align*}
    D_{1}(\bbeta^{\star} + \bdelta, \bbeta^{\star}) \leq \frac{3\vartheta_{t + 1}\sqrt{|\cS_{\star}|}}{2} \|\bdelta\|_{2} \leq \frac{9 \vartheta_{t + 1}^{2} e |\cS_{\star}|}{4 \varrho_{\star}^{2}}. 
\end{align*}
\end{proof}

\subsection{Proof of Theorem~\ref{Theorem_iteration}}
\begin{proof}[Proof of Theorem \ref{Theorem_iteration}]
For each $t \geq 0$, by \eqref{eq_cond_lambda} and the proof of Lemma~\ref{Theorem_consistency},
\begin{align*}
    \|\bdelta_{t + 1}\|_{2} \leq \frac{3 \eta_0 e \sqrt{|\cS_{\star}|}}{2\varrho_{\star}^{2}} \Upsilon_{t} \enspace \mathrm{and} \enspace D_{1}(\bbeta_{t + 1}, \bbeta^{\star}) \leq \frac{9\eta_{0}^{2} e |\cS_{\star}|}{4\varrho_{\star}^{2}} \Upsilon_{t}^{2}, 
\end{align*}
where $\Upsilon_{t} = \|\nabla\cL_{1}(\bbeta^{\star}) - \nabla \cL_{1}(\bbeta_{t}) + \hat{\nabla \cL}(\bbeta_{t})\|_{\infty}$. By Lemma~\ref{Lemma_HB}, 
with probability at least $1 - 2K\exp(-m\rho_{0}^{2}/2) - 4/(pK^{2})$, we have
\begin{align}
\label{eq_Upsilon_t}
    \Upsilon_{t} &\leq \|\hat{\nabla^{2} \cL} (\bbeta^{\star}) - \nabla^{2} \cL_{1}(\bbeta^{\star})\|_{\max} \|\bdelta_{t}\|_{1} + \|\hat{\nabla \cL}(\bbeta^{\star})\|_{\infty} + \cA_{2} \exp(2B\bar{\omega}) \|\bdelta_{t}\|_{2}^{2},
\end{align}
where $\hat{\nabla^{2} \cL}(\bbeta^{\star}) = K^{-1} \sum_{k = 1}^{K} \nabla^{2} \cL_{k}(\bbeta^{\star})$. By Lemma~\ref{Lemma_Hessian}, with probability at least $1 - 2K\exp(-m\rho_{0}^{2}/2) - 8.442/(pK^{2})$, we have
\begin{align*}
    \|\hat{\nabla^{2} \cL}(\bbeta^{\star}) - \nabla^{2} \cL_{1}(\bbeta^{\star})\|_{\max} \leq 2\max_{k \in [K]}\|\nabla^{2} \cL_{k}(\bbeta^{\star}) - \cH^{\star}\|_{\max} \leq \frac{\cA_{1}}{4}\sqrt{\frac{\log (pK)}{m}}.
\end{align*}
We now bound $\|\hat{\nabla \cL}(\bbeta^{\star})\|_{\infty}$. We have
\begin{align}\label{eq_distributed_gradient}
    \hat{\nabla\cL}(\bbeta^{\star}) = -\frac{1}{n}\sum_{k = 1}^{K} \sum_{i \in \cI_{k}} \int_{0}^{\tau} \{\bx_{i}(t) - \cX_{k}(\bbeta^{\star}, t)\} d M_{i}(t).
\end{align}
By Assumption~\ref{Assumption_covariate} and Lemma~\ref{Lemma_3.3_Huang},
\begin{align}\label{eq_nabla_ell}
    \bbP\left(\|\hat{\nabla \cL}(\bbeta^{\star})\|_{\infty} > 4 B\sqrt{(\log p)/n}\right) \leq \frac{2}{p}.
\end{align}
Putting all these pieces together, we obtain
\begin{align}
\label{eq_bbeta_bound}
    \|\bdelta_{t + 1}\|_{2} \leq \frac{3\eta_0 e \sqrt{|\cS_{\star}|}}{2\varrho_{\star}^{2}} \left(\cA_{1} \sqrt{\frac{|\cS_{\star}|\log (pK)}{m}} \|\bdelta_{t}\|_{2} + 4B\sqrt{\frac{\log p}{n}} + \cA_{2}\exp(2B\bar{\omega}) \|\bdelta_{t}\|_{2}^{2}\right),
\end{align}
Consequently, we have $\|\bdelta_{t + 1}\|_{2} \leq \cA_{0} \|\bdelta_{t}\|_{2} + \cC_{n, p}$ for each $t \geq 0$. Consequently, as $\cA_{0} < 1$, we obtain  
\begin{align*}
    \|\bdelta_{t + 1}\|_{2} &\leq \cA_{0}^{t + 1} \|\bdelta_{0}\|_{2} + \cC_{n, p} \sum_{s \leq t} \cA_{0}^{s} \leq \cA_{0}^{t + 1} \|\bdelta_{0}\|_{2} + \frac{\cC_{n, p}}{1 - \cA_{0}} \enspace \mathrm{and}\cr
    D_{1}(\bbeta_{t + 1}, \bbeta^{\star}) &\leq \frac{1}{e} (\cA_{0} \|\bdelta_{t}\|_{2} + \cC_{n, p})^{2} \leq \frac{1}{e} \left(\cA_{0}^{t + 1} \|\bdelta_{0}\|_{2} + \frac{\cC_{n, p}}{1 - \cA_{0}}\right)^{2}. 
\end{align*}
\end{proof}

\section{Proofs of results in Section~\ref{sec_inference}}
\subsection{Proof of Lemma~\ref{Lemma_bv_consistency}}
We first prove a more general result, which relates bounds on the estimation of $\bomega^{\star}$ to bounds on the estimation of $\bbeta^{\star}$.

\begin{lemma}
\label{Lemma_bw_LASSO}
Let Assumptions~\ref{Assumption_covariate}--\ref{Assumption_cM_bomega} hold. Let $\hat{\bomega} \in \bbR^{p}$ be the solution of  
\begin{align*}
    \hat{\bomega} = \underset{\bomega\in\bbR^{p}}{\argmin} \left\{\bomega^{\top} \nabla^{2} \cL (\hbbeta) \bomega - 2\bc^{\top} \bomega + \vartheta^{\diamond} \|\bomega\|_{1}\right\}, 
\end{align*}
where $\vartheta^{\diamond} > 0$ is a tuning parameter and $\hbbeta\in\bbR^{p}$ is an estimator for $\bbeta^{\star}$ which satisfies that 
\begin{align*}
    \|\hbbeta - \bbeta^{\star}\|_{1} \leq \chi_{n}^{\circ} \enspace \mathrm{and} \enspace (\hbbeta - \bbeta^{\star})^{\top} \nabla^{2} \cL(\bbeta^{\star}) (\hbbeta - \bbeta^{\star}) \leq \chi_{n}^{\diamond}
\end{align*}
for some $\chi_{n}^{\circ} > 0$ and $\chi_{n}^{\diamond} > 0$, respectively. Assume that there exists a positive constant $\varrho_{\diamond}$ such that 
\begin{align}
\label{Assumption_eigenvalue_diamond}
    \min_{0 \neq \bv \in \cC(\cS_{\diamond}, 4)} \frac{\bv^{\top} \nabla^{2} \cL(\bbeta^{\star}) \bv}{\|\bv\|_{2}^{2}} \geq \varrho_{\diamond}^{2}. 
\end{align}
Take $\vartheta^{\diamond} = \cI_{0}\cA_{3}\sqrt{(\log p)/n}$ for some constant $\cI_{0} \geq 4$, where $\cA_{3}$ is defined in~\eqref{eq_A1_A2_definition}. Then, for $\bvarphi = \hat{\bomega} - \bomega^{\star}$, with probability at least $1 - 2\exp(-n\rho_{0}^{2}/2) - 8.442 p^{-1}$, we have  
\begin{align}
\label{eq_bound_Q}
    \|\bvarphi\|_{1} &\leq \frac{20\cT_{n}^{2}}{\vartheta^{\diamond}} + \frac{45\vartheta^{\diamond}|\cS_{\diamond}|\exp(4B\chi_{n}^{\circ})}{\varrho_{\diamond}^{2}} \cr
    \bvarphi^{\top} \nabla^{2} \cL(\hbbeta) \bvarphi &\leq 2\cT_{n}^{2} + \frac{9(\vartheta^{\diamond})^{2}|\cS_{\diamond}|\exp(4B\chi_{n}^{\circ})}{2\varrho_{\diamond}^{2}}, 
\end{align}
where $\cT_{n} = 4 \cM_{\diamond} \exp(4B\chi_{n}^{\circ}) \sqrt{\chi_{n}^{\diamond}}$.
\end{lemma}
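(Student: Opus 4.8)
\emph{Proof sketch.} The plan is to run the standard primal (``basic inequality'') analysis of an $\ell_1$-penalized quadratic program, with the single new ingredient that the quadratic form is built from the plug-in Hessian $\hat{\bH} := \nabla^{2}\cL(\hbbeta)$ rather than from $\nabla^{2}\cL(\bbeta^{\star})$; the cost of this substitution is exactly the quantity $\cT_n$. Write $\bvarphi = \hat{\bomega} - \bomega^{\star}$. Since $\hat{\bomega}$ minimizes the convex map $\bomega \mapsto \bomega^{\top}\hat{\bH}\bomega - 2\bc^{\top}\bomega + \vartheta^{\diamond}\|\bomega\|_1$, comparing its value at $\hat{\bomega}$ and at $\bomega^{\star}$, expanding the quadratic around $\bomega^{\star}$, using the defining identity $\bc = \cH^{\star}\bomega^{\star}$, and invoking the reverse triangle inequality together with the sparsity of $\bomega^{\star}$ first yields the basic inequality
\[
\bvarphi^{\top}\hat{\bH}\bvarphi + \vartheta^{\diamond}\big(\|\bvarphi_{\cS_{\diamond}^{c}}\|_1 - \|\bvarphi_{\cS_{\diamond}}\|_1\big) \;\le\; 2\,\bvarphi^{\top}(\cH^{\star} - \hat{\bH})\bomega^{\star}.
\]

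Next I would control the right-hand side by splitting $(\cH^{\star} - \hat{\bH})\bomega^{\star} = \mathrm{(I)} + \mathrm{(II)}$ with $\mathrm{(I)} = (\cH^{\star} - \nabla^{2}\cL(\bbeta^{\star}))\bomega^{\star}$ and $\mathrm{(II)} = (\nabla^{2}\cL(\bbeta^{\star}) - \nabla^{2}\cL(\hbbeta))\bomega^{\star}$. For $\mathrm{(I)}$ I would apply Lemma~\ref{Lemma_Hessian} coordinate by coordinate, taking $\bphi_1$ to be a canonical basis vector (so $M_1 = B$ by Assumption~\ref{Assumption_covariate}) and $\bphi_2 = \bomega^{\star}$ (so $M_2 = \cM_{\diamond}$ by Assumption~\ref{Assumption_cM_bomega}), union-bounding over the $p$ coordinates on the event $\cE$ of~\eqref{eq_event_E} (where $\inf_t S^{(0)}(\bbeta^{\star},t) \geq e^{-M}\rho_0/2$, using Assumption~\ref{Assumption_at_risk}), with $u \asymp \sqrt{(\log p)/n}$ and $v \asymp (\log p)/n$; recalling the definition of $\cA_3$ in~\eqref{eq_A1_A2_definition} and that $\cI_0 \geq 4$, this gives $\|\mathrm{(I)}\|_{\infty} \leq \vartheta^{\diamond}/4$ with the stated probability, hence $2\,\bvarphi^{\top}\mathrm{(I)} \leq \tfrac{\vartheta^{\diamond}}{2}\|\bvarphi\|_1$. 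For $\mathrm{(II)}$ I would reproduce the mean-value-theorem computation from the proof of Lemma~\ref{Lemma_HB} (equations~\eqref{eq_bV}--\eqref{eq_xi_j}): writing $\bdelta = \hbbeta - \bbeta^{\star}$, the $j$-th entry of $\mathrm{(II)}$ is a $d\bar N(t)$-integral of $\sum_i \omega_i(\bbeta_{\theta},t)\,(\tilde{\bx}_i^{\top}\bdelta)(\tilde{\bx}_i^{\top}\bomega^{\star})\,\tilde{x}_{ij}$, the remaining pieces again vanishing as in~\eqref{eq_weighted_mean_0}. Contracting against $\bvarphi$, bounding $|\tilde{\bx}_i^{\top}\bomega^{\star}| \leq 2\cM_{\diamond}$ (Assumption~\ref{Assumption_cM_bomega}), applying Cauchy--Schwarz first in $i$ and then in the $d\bar N(t)$-integral, and using Lemma~\ref{Lemma_3.2_Huang} (with $\|\bbeta_{\theta} - \bbeta^{\star}\|_1 \leq \chi_n^{\circ}$ and $\|\bbeta_{\theta} - \hbbeta\|_1 \leq \chi_n^{\circ}$) to replace the intermediate Hessians by $\nabla^{2}\cL(\bbeta^{\star})$ and $\hat{\bH}$ at the cost of factors $\exp(4B\chi_n^{\circ})$, I obtain $|2\,\bvarphi^{\top}\mathrm{(II)}| \leq \cT_n\sqrt{\bvarphi^{\top}\hat{\bH}\bvarphi}$, using $(\hbbeta - \bbeta^{\star})^{\top}\nabla^{2}\cL(\bbeta^{\star})(\hbbeta - \bbeta^{\star}) \leq \chi_n^{\diamond}$ and the definition $\cT_n = 4\cM_{\diamond}\exp(4B\chi_n^{\circ})\sqrt{\chi_n^{\diamond}}$.

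Plugging these two bounds into the basic inequality and using $ab \leq \tfrac14 a^{2} + b^{2}$ on $\cT_n\sqrt{\bvarphi^{\top}\hat{\bH}\bvarphi}$ to absorb a quarter of $\bvarphi^{\top}\hat{\bH}\bvarphi$ on the left, one arrives (after cancelling the $\|\bvarphi_{\cS_{\diamond}^{c}}\|_1$ contributed by $\mathrm{(I)}$) at
\[
\tfrac34\,\bvarphi^{\top}\hat{\bH}\bvarphi + \tfrac{\vartheta^{\diamond}}{2}\,\|\bvarphi_{\cS_{\diamond}^{c}}\|_1 \;\le\; \tfrac{3\vartheta^{\diamond}}{2}\,\|\bvarphi_{\cS_{\diamond}}\|_1 + \cT_n^{2}.
\]
I would then distinguish two cases according to whether $\cT_n^{2}$ or $\vartheta^{\diamond}\|\bvarphi_{\cS_{\diamond}}\|_1$ dominates, up to a fixed constant. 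In the ``noise-dominated'' case the last display already gives $\bvarphi^{\top}\hat{\bH}\bvarphi \lesssim \cT_n^{2}$ and $\|\bvarphi\|_1 \lesssim \cT_n^{2}/\vartheta^{\diamond}$, which is stronger than~\eqref{eq_bound_Q}. Otherwise one reads off $\bvarphi \in \cC(\cS_{\diamond},4)$, so the restricted-eigenvalue condition~\eqref{Assumption_eigenvalue_diamond} combined with $\nabla^{2}\cL(\bbeta^{\star}) \preccurlyeq \exp(4B\chi_n^{\circ})\hat{\bH}$ (Lemma~\ref{Lemma_3.2_Huang}, with $\|\hbbeta - \bbeta^{\star}\|_1 \leq \chi_n^{\circ}$) yields $\|\bvarphi\|_2 \leq \varrho_{\diamond}^{-1}\exp(2B\chi_n^{\circ})\sqrt{\bvarphi^{\top}\hat{\bH}\bvarphi}$; feeding this into $\bvarphi^{\top}\hat{\bH}\bvarphi \lesssim \vartheta^{\diamond}\|\bvarphi_{\cS_{\diamond}}\|_1 \leq \vartheta^{\diamond}\sqrt{|\cS_{\diamond}|}\,\|\bvarphi\|_2$, solving the resulting quadratic inequality for $\|\bvarphi\|_2$, and back-substituting (with one more AM--GM step merging the cross-term into the two terms of~\eqref{eq_bound_Q}) gives both asserted bounds. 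Collecting the failure probabilities from this step and from the event $\cE$ gives the stated $1 - 2\exp(-n\rho_0^2/2) - 8.442\,p^{-1}$.

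The step I expect to be the real obstacle is the bound on $\mathrm{(II)}$. Unlike the analogous estimate in Lemma~\ref{Lemma_HB}, the three vectors in play here---$\bvarphi$, $\bdelta = \hbbeta - \bbeta^{\star}$, and $\bomega^{\star}$---are all distinct, so one must genuinely carry out the third-order (mean-value) expansion of the Cox Hessian~\eqref{eq_bV}, verify that the contributions from differentiating the self-normalizing weights $\omega_i(\bbeta,t)$ cancel exactly as in~\eqref{eq_weighted_mean_0}, and split the exponential-tilting factors of Lemma~\ref{Lemma_3.2_Huang} so that the end product is the clean self-bounding term $\cT_n\sqrt{\bvarphi^{\top}\hat{\bH}\bvarphi}$ rather than something contaminated by a stray $\|\bvarphi\|_1$ or $\|\bvarphi\|_2$. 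It is precisely this computation that forces the form $\cT_n = 4\cM_{\diamond}\exp(4B\chi_n^{\circ})\sqrt{\chi_n^{\diamond}}$, i.e., the dependence on the hazard bound $\cM_{\diamond}$, on the $\ell_1$-radius $\chi_n^{\circ}$ (through self-concordance), and on the quadratic-form error $\chi_n^{\diamond}$.
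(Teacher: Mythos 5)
Your proposal is correct and follows essentially the same route as the paper's proof: the same basic inequality, the same decomposition of $\bc - \nabla^{2}\cL(\hbbeta)\bomega^{\star}$ into a $\cH^{\star}$-centering error (bounded via Lemma~\ref{Lemma_Hessian} and the choice $\cI_0 \geq 4$) and a Hessian-perturbation term (bounded via the mean-value/third-order expansion, Cauchy--Schwarz, and Lemma~\ref{Lemma_3.2_Huang}, producing $\cT_n\sqrt{Q(\bvarphi)}$), followed by the cone argument, the restricted-eigenvalue step, and solving the resulting self-bounding quadratic. The only cosmetic difference is that the paper solves the inequality $Q(\bvarphi)\le \cT_n\sqrt{Q(\bvarphi)}+\cdots$ directly by dividing through by $\sqrt{Q(\bvarphi)}$ and restricting to $Q(\bvarphi)>\cT_n^2$, whereas you absorb the cross-term via $ab\le\tfrac14 a^2+b^2$ first; these are equivalent.
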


\begin{proof}[Proof of Lemma~\ref{Lemma_bw_LASSO}]
By the definition of $\hat{\bomega}$, we have   
\begin{align*}
    \hat{\bomega}^\top \nabla^{2} \cL(\hbbeta) \hat{\bomega} - 2 \bc^{\top} \hat{\bomega} + \vartheta^{\diamond} \|\hat{\bomega}\|_{1} \leq \bomega^{\star \top} \nabla^{2} \cL(\hbbeta) \bomega^{\star} - 2 \bc^{\top} \bomega^{\star} + \vartheta^{\diamond} \|\bomega^{\star}\|_{1}.
\end{align*}
Combined with the fact that $\|\bomega^{\star}\|_{1} = \|\bomega_{\cS_{\diamond}}^{\star}\|_{1} \leq \|\hat{\bomega}_{\cS_{\diamond}}\|_{1} + \|\bvarphi_{\cS_{\diamond}}\|_{1}$, we obtain 
\begin{align*}
    Q(\bvarphi) := \bvarphi^{\top} \nabla^{2} \cL(\hbbeta) \bvarphi &\leq 2 \bvarphi^{\top} \bc - 2\bvarphi^{\top} \nabla^{2} \cL(\hbbeta) \bomega^{\star} + \vartheta^{\diamond} \|\bomega^{\star}\|_{1} - \vartheta^{\diamond} \|\hat{\bomega}\|_{1}\cr
    &\leq 2\bvarphi^{\top} \left\{\bc - \nabla^{2}\cL(\hbbeta)\bomega^{\star}\right\} + \vartheta^{\diamond} \|\bvarphi_{\cS_{\diamond}}\|_{1} - \vartheta^{\diamond} \|\bvarphi_{\cS_{\diamond}^{c}}\|_{1}. 
\end{align*}
Decompose
\begin{align}
\label{eq_deviation}
    \bvarphi^{\top}\left\{\bc - \nabla^{2}\cL(\hbbeta)\bomega^{\star}\right\} &= \bvarphi^{\top}\left\{\bc - \nabla^{2}\cL(\bbeta^{\star})\bomega^{\star}\right\} + \bvarphi^{\top}\left\{\nabla^{2}\cL(\bbeta^{\star}) - \nabla^{2}\cL(\hbbeta)\right\}\bomega^{\star} \cr
    &=: \cR_{1} + \cR_{2}. 
\end{align}
Recall that $\cH^{\star}\bomega^{\star} = \bc$. Hence, by Lemma~\ref{Lemma_Hessian}, with probability at least $1 - 2\exp(-n\rho_{0}^{2}/2) - 8.442 p^{-1}$, we have 
\begin{align}
\label{eq_calK_1}
    |\cR_{1}| \leq \|\{\nabla^{2}\cL(\bbeta^{\star}) - \cH^{\star}\}\bomega^{\star}\|_{\infty} \|\bvarphi\|_{1} \leq \cA_{3}\sqrt{\frac{\log p}{n}} \|\bvarphi\|_{1} \leq \frac{\vartheta^{\diamond}}{4} \|\bvarphi\|_{1}, 
\end{align}
which further implies that 
\begin{align*}
    Q(\bvarphi) &\leq 2|\cR_{2}| + \frac{\vartheta^{\diamond}}{2} \|\bvarphi\|_{1} + \vartheta^{\diamond} \|\bvarphi_{\cS_{\diamond}}\|_{1} - \vartheta^{\diamond} \|\bvarphi_{\cS_{\diamond}^{c}}\|_{1} \cr
    &= 2|\cR_{2}| + \frac{3\vartheta^{\diamond}}{2} \|\bvarphi_{\cS_{\diamond}}\|_{1} - \frac{\vartheta^{\diamond}}{2} \|\bvarphi_{\cS_{\diamond}^{c}}\|_{1}. 
\end{align*}
Let $\bdelta = \hbbeta - \bbeta^{\star} \in \bbR^{p}$. By a similar argument as~\eqref{eq_xi_j},  
\begin{align*}
    \cR_{2} = \frac{1}{n}\int_{0}^{\tau} \sum_{i = 1}^{n} \omega_{i}(\bbeta^{\circ}, t) \bvarphi^{\top} \tilde{\bx}_{i}(\bbeta^{\circ}, t) \bomega^{\star\top} \tilde{\bx}_{i}(\bbeta^{\circ}, t) \tilde{\bx}_{i}(\bbeta^{\circ}, t)^{\top}\bdelta d\bar{N}(t),
\end{align*}
where $\bbeta^{\circ} = \bbeta^{\star} + \varpi (\hbbeta - \bbeta^{\star})$ for some $\varpi \in [0, 1]$. By the Cauchy-Schwarz inequality and Assumption~\ref{Assumption_cM_bomega}, we have  
\begin{align}\label{eq_bound_cR2}
    |\cR_{2}| &\leq \frac{1}{n} \int_{0}^{\tau} \sqrt{\sum_{i = 1}^{n} \omega_{i}(\bbeta^{\circ}, t)\left\{\bvarphi^{\top} \tilde{\bx}_{i}(\bbeta^{\circ}, t)\right\}^{2}} \sqrt{\sum_{i = 1}^{n} \omega_{i}(\bbeta^{\circ}, t)\left\{\bomega^{\star\top} \tilde{\bx}_{i}(\bbeta^{\circ}, t) \tilde{\bx}_{i}(\bbeta^{\circ}, t)^\top \bdelta\right\}^{2}} d \bar{N}(t)\cr
    &\leq \frac{2\cM_{\diamond}}{n} \int_{0}^{\tau} \sqrt{\sum_{i = 1}^{n} \omega_{i}(\bbeta^{\circ}, t)\left\{\bvarphi^{\top} \tilde{\bx}_{i}(\bbeta^{\circ}, t)\right\}^{2}} \sqrt{\sum_{i = 1}^{n} \omega_{i}(\bbeta^{\circ}, t)\left\{\tilde{\bx}_{i}(\bbeta^{\circ}, t)^\top \bdelta\right\}^{2}} d \bar{N}(t)\cr
    &\leq \frac{2\cM_{\diamond}}{n} \sqrt{\int_{0}^{\tau}\sum_{i = 1}^{n} \omega_{i}(\bbeta^{\circ}, t)\left\{\bvarphi^{\top} \tilde{\bx}_{i}(\bbeta^{\circ}, t)\right\}^{2} d\bar{N}(t)} \sqrt{\int_{0}^{\tau} \sum_{i = 1}^{n} \omega_{i}(\bbeta^{\circ}, t)\left\{\tilde{\bx}_{i}(\bbeta^{\circ}, t)^\top \bdelta\right\}^{2} d \bar{N}(t)}\cr
    &= 2 \cM_{\diamond} \sqrt{\bvarphi^{\top} \nabla^2 \cL(\bbeta^{\circ}) \bvarphi} \sqrt{\bdelta^{\top} \nabla^{2} \cL(\bbeta^{\circ}) \bdelta},
\end{align}
where the last equality follows from the definition that for any $\bbeta \in \bbR^{p}$, 
\begin{align*}
    \nabla^{2}\cL (\bbeta) = \frac{1}{n} \int_{0}^{\tau}\sum_{i = 1}^{n} \omega_{i}(\bbeta, t) \tilde{\bx}_{i}(\bbeta, t)^{\otimes 2} d\bar{N}(t).
\end{align*}
Note that $\|\hbbeta - \bbeta^{\circ}\|_{1} \leq \|\bdelta\|_{1} \leq \chi_{n}^{\circ}$. Hence, it follows from Lemma~\ref{Lemma_3.2_Huang} and Assumption~\ref{Assumption_covariate} that
\begin{align*}
    \bvarphi^{\top} \nabla^{2} \cL (\bbeta^{\circ}) \bvarphi &\leq \exp\left(2\sup_{t\in[0, \tau]}\max_{i, i' \in [n]} |(\hbbeta - \bbeta^{\circ})^{\top} \{\bx_{i}(t) - \bx_{i'}(t)\}|\right) \bvarphi^\top \nabla^{2} \cL (\hbbeta) \bvarphi\cr
    &\leq \exp\left(4\|\hbbeta - \bbeta^{\circ}\|_{1} \sup_{t\in [0, \tau]} \max_{i \in [n]} \|\bx_{i}(t)\|_{\infty}\right) \bvarphi^{\top} \nabla^{2} \cL (\hbbeta) \bvarphi\cr 
    &\leq \exp(4B\chi_{n}^{\circ}) Q(\bvarphi).
\end{align*}
Similarly, we have 
\begin{align*}
    \bdelta^{\top} \nabla^{2} \cL(\bbeta^{\circ}) \bdelta \leq \exp(4B\chi_{n}^{\circ}) \bdelta^{\top} \nabla^{2} \cL(\bbeta^{\star}) \bdelta \leq \exp(4B\chi_{n}^{\circ})\chi_{n}^{\diamond}. 
\end{align*}
Consequently, we obtain
\begin{align}
\label{eq_Q_bphi}
    Q(\bvarphi) \leq \cT_{n} \sqrt{Q(\bvarphi)} + \frac{3\vartheta^{\diamond}}{2} \|\bvarphi_{\cS_{\diamond}}\|_{1} - \frac{\vartheta^{\diamond}}{2} \|\bvarphi_{\cS_{\diamond}^{c}}\|_{1}.
\end{align}
It suffices to consider the case where $Q(\bvarphi) > \cT_{n}^{2}$. In view of~\eqref{eq_Q_bphi}, we have $\bvarphi \in \cC(\cS_{\diamond}, 3)$.   
By~\eqref{Assumption_eigenvalue_diamond} and Lemma~\ref{Lemma_3.2_Huang}, it follows that   
\begin{align*}
    \|\bvarphi_{\cS_{\diamond}}\|_{1} \leq \sqrt{|\cS_{\diamond}|} \|\bvarphi\|_{2} \leq \frac{\sqrt{|\cS_{\diamond}|}\exp(2B\chi_{n}^{\circ})}{\varrho_{\diamond}} \sqrt{Q(\bvarphi)}.
\end{align*}
Substituting this into \eqref{eq_Q_bphi}, we obtain \eqref{eq_bound_Q} in view of 
\begin{align*}
    Q(\bvarphi) \leq \cT_{n} \sqrt{Q(\bvarphi)} + \frac{3\vartheta^{\diamond}\sqrt{|\cS_{\diamond}|}\exp(2B\chi_{n}^{\circ})}{2\varrho_{\diamond}} \sqrt{Q(\bvarphi)}. 
\end{align*}
We now upper bound $\|\bvarphi\|_{1}$. If $\|\bvarphi_{\cS_{\diamond}^{c}}\|_{1} \leq 4 \|\bvarphi_{\cS_{\diamond}}\|_{1}$, it follows from Assumption~\ref{Assumption_eigenvalue_diamond} that 
\begin{align*}
    \|\bvarphi\|_{1} \leq 5\|\bvarphi_{\cS_{\diamond}}\|_{1} \leq \frac{5\sqrt{|\cS_{\diamond}|}\exp(2B\chi_{n}^{\circ})}{\varrho_{\diamond}} \sqrt{Q(\bvarphi)}. 
\end{align*}
In the other case where $\|\bvarphi_{\cS_{\diamond}^{c}}\|_{1} > 4 \|\bvarphi_{\cS_{\diamond}}\|_{1}$, since $Q(\bvarphi) \geq 0$, it follows from \eqref{eq_Q_bphi} that 
\begin{align*}
    \vartheta^{\diamond} \|\bvarphi_{\cS_{\diamond}}\|_{1} \leq \vartheta^{\diamond} \|\bvarphi_{\cS_{\diamond}^{c}}\|_{1} - 3 \vartheta^{\diamond} \|\bvarphi_{\cS_{\diamond}}\|_{1} \leq 2 \cT_{n} \sqrt{Q(\bvarphi)}. 
\end{align*}
Consequently, we obtain 
\begin{align*}
    \|\bvarphi\|_{1} = \|\bvarphi_{\cS_{\diamond}}\|_{1} + \|\bvarphi_{\cS_{\diamond}^{c}}\|_{1} \leq 4 \|\bvarphi_{\cS_{\diamond}}\|_{1} + \frac{2\cT_{n}\sqrt{Q(\bvarphi)}}{\vartheta^{\diamond}} \leq \frac{10\cT_{n}\sqrt{Q(\bvarphi)}}{\vartheta^{\diamond}}. 
\end{align*}
\end{proof}

\begin{proof}[Proof of Lemma~\ref{Lemma_bv_consistency}]
Let $\bdelta = \hbbeta - \bbeta^{\star}$. By the definition of $\hbbeta$ and Corollary~\ref{Corollary_beta0_LASSO},   
\begin{align*}
    \|\bdelta\|_{1} = O_{\bbP}\left(|\cS_{\star}|\sqrt{\frac{\log p}{n}}\right) \enspace \mathrm{and} \enspace \bdelta^{\top} \nabla^{2} \cL_{1}(\bbeta^{\star}) \bdelta = O_{\bbP}\left(\frac{|\cS_{\star}| \log p}{n}\right). 
\end{align*}
Then it follows from Lemma~\ref{Lemma_Hessian} that
\begin{align*}
    \max_{k \in [K]} \bdelta^{\top} \nabla^{2} \cL_{k}(\bbeta^{\star}) \bdelta &\leq \bdelta^{\top} \nabla^{2} \cL_{1}(\bbeta^{\star}) \bdelta + \max_{2\leq k\leq K} \|\nabla^{2} \cL_{k}(\bbeta^{\star}) - \nabla^{2} \cL_{1}(\bbeta^{\star})\|_{\max} \|\bdelta\|_{1}^{2}\cr
    &= O_{\bbP}\left(\frac{|\cS_{\star}|\log p}{n} + \sqrt{\frac{\log p}{m}} |\cS_{\star}|^{2} \frac{\log p}{n}\right) = O_{\bbP}\left(\frac{|\cS_{\star}|\log p}{n}\right). 
\end{align*}
Consequently, we obtain~\eqref{eq_bound_bomegak} by applying Lemma~\ref{Lemma_bw_LASSO}.   
\end{proof}

\subsection{Proof of Theorem~\ref{Theorem_distributed_CLT}}

We first prove the following central limit theorem on $\bphi^{\top} \hat{\nabla \cL}(\bbeta^{\star})$ for any vector $\bphi$.

\begin{lemma}
\label{Lemma_gradient_clt}
Let Assumptions~\ref{Assumption_beta_1} and~\ref{Assumption_at_risk} hold. Assume that 
\begin{align}
\label{eq_Mv}
    \max_{i \in [n]} \sup_{t \in [0, \tau]} |\bx(t)^{\top} \bphi| \leq M(\bphi) \enspace \mathrm{and} \enspace \frac{M(\bphi)}{m\sigma(\bphi)} \to 0,  
\end{align}
where $\sigma(\bphi)^{2} = \bphi^{\top} \cH^{\star} \bphi$. Then we have 
\begin{align*}
    \sup_{z\in\bbR} \left|\bbP\left\{\frac{\sqrt{n}\bphi^{\top} \hat{\nabla \cL}(\bbeta^{\star})}{\sigma(\bphi)} \leq z\right\} - \Phi(z)\right| \to 0.
\end{align*}
\end{lemma}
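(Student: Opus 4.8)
The plan is to split $\hat{\nabla\cL}(\bbeta^{\star})$ into a population-centered leading term, which is an i.i.d.\ average of martingale integrals, plus a remainder created by replacing the population ratio $\be(\bbeta^{\star},t)$ by the local empirical ratios $\cX_{k}(\bbeta^{\star},t)$. Starting from~\eqref{eq_distributed_gradient} and writing $\bx_{i}(t)-\cX_{k}(\bbeta^{\star},t)=\{\bx_{i}(t)-\be(\bbeta^{\star},t)\}-\{\cX_{k}(\bbeta^{\star},t)-\be(\bbeta^{\star},t)\}$, together with the fact that $\cX_{k}(\bbeta^{\star},\cdot)$ is predictable with respect to the filtration generated by center~$k$, I would obtain
\[
    \bphi^{\top}\hat{\nabla\cL}(\bbeta^{\star}) = -\frac{1}{n}\sum_{i=1}^{n}\xi_{i} + R_{n}, \qquad R_{n} := \frac{1}{n}\sum_{k=1}^{K}\int_{0}^{\tau}\bphi^{\top}\{\cX_{k}(\bbeta^{\star},t)-\be(\bbeta^{\star},t)\}\,d\bar{M}_{k}(t),
\]
where $\xi_{i}=\int_{0}^{\tau}\bphi^{\top}\{\bx_{i}(t)-\be(\bbeta^{\star},t)\}\,dM_{i}(t)$ and $\bar{M}_{k}=\sum_{i\in\cI_{k}}M_{i}$.

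For the leading term, the $\xi_{i}$ are i.i.d., each a martingale integral with bounded predictable integrand, so $\bbE\xi_{i}=0$ and, by the predictable-variation formula, $\bbE\xi_{i}^{2}=\bbE\int_{0}^{\tau}(\bphi^{\top}\{\bx_{1}(t)-\be(\bbeta^{\star},t)\})^{2}Y_{1}(t)\exp\{\bx_{1}(t)^{\top}\bbeta^{\star}\}\,d\Lambda_{0}(t)=\bphi^{\top}\cH^{\star}\bphi=\sigma(\bphi)^{2}$ by~\eqref{eq_L_star}. Since $|\bphi^{\top}\be(\bbeta^{\star},t)|\leq M(\bphi)$ (a weighted average of quantities bounded by $M(\bphi)$), Assumptions~\ref{Assumption_covariate}--\ref{Assumption_beta_1} yield $|\xi_{i}|\leq 2M(\bphi)\{1+\exp(M)\Lambda_{0}(\tau)\}$; in particular $\sigma(\bphi)\leq 2M(\bphi)\sqrt{\exp(M)\Lambda_{0}(\tau)}$, so~\eqref{eq_Mv} forces $m\to\infty$. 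A Berry--Esseen bound for i.i.d.\ sums then gives $\sup_{z}|\bbP\{-n^{-1/2}\sigma(\bphi)^{-1}\sum_{i}\xi_{i}\leq z\}-\Phi(z)|\lesssim \bbE|\xi_{1}|^{3}/(\sqrt{n}\,\sigma(\bphi)^{3})\lesssim M(\bphi)/(\sqrt{n}\,\sigma(\bphi))\to 0$ under~\eqref{eq_Mv}.

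The main obstacle is proving $\sqrt{n}\,R_{n}/\sigma(\bphi)=o_{\bbP}(1)$, and in particular extracting the sharp $\sigma(\bphi)$-scaling rather than a crude $M(\bphi)$ bound. Since the $K$ center contributions to $R_{n}$ are independent and mean zero, $\bbE(\bphi^{\top}R_{n})^{2}=n^{-2}\sum_{k}\bbE\big(\int_{0}^{\tau}\bphi^{\top}\{\cX_{k}-\be\}\,d\bar{M}_{k}\big)^{2}=n^{-2}\sum_{k}\bbE\int_{0}^{\tau}(\bphi^{\top}\{\cX_{k}(\bbeta^{\star},t)-\be(\bbeta^{\star},t)\})^{2}\,d\langle\bar{M}_{k}\rangle(t)$, and $d\langle\bar{M}_{k}\rangle(t)=m\,S^{(0)}_{k}(\bbeta^{\star},t)\,d\Lambda_{0}(t)\leq m\exp(M)\,d\Lambda_{0}(t)$, where $S^{(0)}_{k}(\bbeta^{\star},t)=m^{-1}\sum_{i\in\cI_{k}}Y_{i}(t)\exp\{\bx_{i}(t)^{\top}\bbeta^{\star}\}$. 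The key identity is
\[
    \bphi^{\top}\{\cX_{k}(\bbeta^{\star},t)-\be(\bbeta^{\star},t)\} = \frac{1}{S^{(0)}_{k}(\bbeta^{\star},t)}\cdot\frac{1}{m}\sum_{i\in\cI_{k}}Y_{i}(t)\exp\{\bx_{i}(t)^{\top}\bbeta^{\star}\}\,\bphi^{\top}\{\bx_{i}(t)-\be(\bbeta^{\star},t)\},
\]
whose numerator is a centered empirical mean, since $\bbE[Y_{1}(t)\exp\{\bx_{1}(t)^{\top}\bbeta^{\star}\}\bphi^{\top}\{\bx_{1}(t)-\be(\bbeta^{\star},t)\}]=\bphi^{\top}s^{(1)}(\bbeta^{\star},t)-\bphi^{\top}\be(\bbeta^{\star},t)\,s^{(0)}(\bbeta^{\star},t)=0$. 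On the event $\{m^{-1}\sum_{i\in\cI_{k}}Y_{i}(\tau)\geq\rho_{0}/2\}$, of probability at least $1-\exp(-m\rho_{0}^{2}/2)$ by Hoeffding as in~\eqref{eq_event_E}, one has $S^{(0)}_{k}(\bbeta^{\star},t)\geq\exp(-M)\rho_{0}/2$, so bounding the variance of the numerator and integrating against $d\Lambda_{0}$ gives $\bbE\int_{0}^{\tau}(\bphi^{\top}\{\cX_{k}-\be\})^{2}\,d\Lambda_{0}\lesssim m^{-1}\bphi^{\top}\cH^{\star}\bphi=\sigma(\bphi)^{2}/m$ (constants depending on $M,\rho_{0}$), the complementary event contributing only an exponentially small term, negligible under~\eqref{eq_Mv}. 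Combining, $\bbE(\bphi^{\top}R_{n})^{2}\lesssim n^{-2}\cdot K\cdot m\exp(M)\cdot\sigma(\bphi)^{2}/m=\exp(M)\,\sigma(\bphi)^{2}/(nm)$, hence $\sqrt{n}\,|\bphi^{\top}R_{n}|/\sigma(\bphi)=O_{\bbP}(m^{-1/2})=o_{\bbP}(1)$ as $m\to\infty$.

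Finally, $\sqrt{n}\,\bphi^{\top}\hat{\nabla\cL}(\bbeta^{\star})/\sigma(\bphi)$ equals the leading term plus $\sqrt{n}\,\bphi^{\top}R_{n}/\sigma(\bphi)$; the former satisfies a uniform Gaussian approximation and the latter is $o_{\bbP}(1)$, so by a standard Slutsky/P\'olya argument ($\Phi$ being continuous) the stated uniform convergence follows. The technical care is concentrated in (i) the martingale-variance identities and the predictability of $\cX_{k}(\bbeta^{\star},\cdot)$ with respect to the center-$k$ filtration, so that the center contributions to $R_n$ are genuine mean-zero martingales, and (ii) extracting $\sigma(\bphi)^{2}/m$ rather than $M(\bphi)^{2}/m$ from $\cX_{k}-\be$ via the centered-empirical-mean representation, which is exactly what makes the remainder vanish at the required rate.
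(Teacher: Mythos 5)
Your decomposition is the same as the paper's: $\bphi^{\top}\hat{\nabla\cL}(\bbeta^{\star})$ splits into the i.i.d.\ martingale-integral average $n^{-1}\sum_i\xi_i$ (whose variance is exactly $\sigma(\bphi)^2$ by the predictable-variation identity and~\eqref{eq_L_star}) plus the remainder $R_n = \Delta_n(\bphi)$ arising from replacing $\be(\bbeta^{\star},\cdot)$ by the local ratios $\cX_k(\bbeta^{\star},\cdot)$; you then use a Berry--Esseen/Lyapunov argument for the first piece and Chebyshev for the second. Where you genuinely diverge is in how the remainder's variance is controlled. The paper passes through $W_1 = \int_0^\tau[\bphi^{\top}\{\be-\cX_1\}]^2S_1^{(0)}d\Lambda_0$, applies the Bernstein-type tail bound of Lemma~4.2 in \citet{Huang2013oracle} after lower-bounding $S_1^{(0)}$ on the event $\cE_1$, and integrates the tail to conclude $\bbE(W_1)\lesssim M(\bphi)^2/m$ (the constant $\bar M(\bphi)\asymp M(\bphi)^2$), giving $\Var(\sqrt{n}\Delta_n/\sigma)\lesssim M(\bphi)^2/(m\sigma^2)$. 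You instead exploit the exact centered-empirical-mean representation of $\bphi^{\top}\{\cX_k-\be\}$: the numerator $m^{-1}\sum_{i\in\cI_k}Y_i\exp\{\bx_i^{\top}\bbeta^{\star}\}\bphi^{\top}\{\bx_i-\be\}$ has mean zero and per-summand second moment controlled by the very integrand that defines $\sigma(\bphi)^2$, so after lower-bounding the denominator on the same good event you get $\bbE\int_0^\tau(\bphi^{\top}\{\cX_k-\be\})^2d\Lambda_0\lesssim\sigma(\bphi)^2/m$, hence $\Var(\sqrt{n}R_n/\sigma)\lesssim m^{-1}$, independent of the ratio $M(\bphi)/\sigma(\bphi)$. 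This is a sharper bound than the paper's: it removes the extra factor $M(\bphi)^2/\sigma(\bphi)^2$ and makes the $O_{\bbP}(m^{-1/2})$ conclusion immediate once~\eqref{eq_Mv} forces $m\to\infty$. In the regime the paper actually uses (where $M(\bphi)=\cM_\diamond$ or $\cM_\sharp$ and $\sigma(\bphi)$ are both $\Theta(1)$) the two bounds agree, but your version is cleaner and more robust. The argument is correct, including the verification that $\cX_k(\bbeta^{\star},\cdot)$ is predictable with respect to the center-$k$ filtration so that each $\int\bphi^{\top}\{\cX_k-\be\}\,d\bar M_k$ is a mean-zero martingale integral, and the treatment of the complementary event $\cE_k^c$ as exponentially small.
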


\begin{proof}[Proof of Lemma~\ref{Lemma_gradient_clt}]
In view of~\eqref{eq_distributed_gradient}, we decompose 
\begin{align*}
    \bphi^{\top} \hat{\nabla \cL}(\bbeta^{\star}) &= - \frac{1}{n}\sum_{i = 1}^{n} \int_{0}^{\tau} \bphi^{\top}\{\bx_{i}(t) - \be(\bbeta^{\star}, t)\} d M_{i}(t)\cr
    &- \frac{1}{n} \sum_{k = 1}^{K} \int_{0}^{\tau} \bphi^{\top}\{\be(\bbeta^{\star}, t) - \cX_{k}(\bbeta^{\star}, t)\} d \bar{M}_{k}(t) \cr
    &=: \frac{1}{n}\sum_{i = 1}^{n} \xi_{i}(\bphi) + \Delta_{n}(\bphi),
\end{align*}
where $\xi_{i}(\bphi) = -\int_{0}^{\tau} \bphi^{\top}\{\bx_{i}(t) - \be(\bbeta^{\star}, t)\} d M_{i}(t)$ and 
\begin{align*}
    \cX_{k}(\bbeta^{\star}, t) = \frac{\sum_{\ell \in \cI_{k}}Y_{\ell}(t)\exp\{\bx_{\ell}(t)^{\top}\bbeta^{\star}\}\bx_{\ell}(t)}{\sum_{\ell \in \cI_{k}}Y_{\ell}(t) \exp\{\bx_{\ell}(t)^{\top} \bbeta^{\star}\}}. 
\end{align*}
Note that $\bbE\{\Delta_{n}(\bphi)\} = 0$ and 
\begin{align*}
    \Var\{\Delta_{n}(\bphi)\} &= \frac{1}{n} \bbE \int_{0}^{\tau} \left[\bphi^{\top} \{\be(\bbeta^{\star}, t) - \cX_{1}(\bbeta^{\star}, t)\}\right]^{2} S_{1}^{(0)}(\bbeta^{\star}, t) d \Lambda_{0}(t)\cr
    &=: \frac{1}{n} \bbE(W_{1}),
\end{align*}
where $S_{1}^{(0)}(\bbeta^{\star}, t) = m^{-1} \sum_{\ell \in \cI_{1}}Y_{\ell}(t) \exp\{\bx_{\ell}(t)^{\top} \bbeta^{\star}\}$.
Similar to $\cE$ in \eqref{eq_event_E}, we define 
\begin{align*}
    \cE_{1} = \left\{\frac{1}{m} \sum_{i \in \mathcal{I}_{1}} Y_{i}(\tau)  > \frac{\rho_{0}}{2}\right\},
\end{align*}
which satisfies that $\bbP(\cE_{1}) \geq 1 - \exp(-m\rho_{0}^{2}/2)$. 
Under $\cE_{1}$, we have 
\begin{align*}
    W_{1} &= \int_{0}^{\tau} \frac{1}{S_{1}^{(0)}(\bbeta^{\star}, t)} \left\{\bphi^{\top} S_{1}^{(1)}(\bbeta^{\star}, t) - S_{1}^{(0)}(\bbeta^{\star}, t) \bphi^{\top} \be(\bbeta^{\star}, t)\right\}^{2} d \Lambda_{0}(t)\cr
    &\leq \frac{2\exp(M)}{\rho_{0}} \int_{0}^{\tau} \left\{\bphi^{\top} S_{1}^{(1)}(\bbeta^{\star}, t) - S_{1}^{(0)}(\bbeta^{\star}, t) \bphi^{\top} \be(\bbeta^{\star}, t)\right\}^{2} d \Lambda_{0}(t). 
\end{align*}
Then, by Lemma 4.2 in \citet{Huang2013oracle} and \eqref{eq_Mv}, 
\begin{align*}
    \bbP(\{W_{1} > \bar{M}(\bphi) \nu\} \cap \cE_{1}) \leq 2.221\exp\left(-\frac{m \nu}{2 + 2\sqrt{\nu}/3}\right),
\end{align*}
where $\bar{M}(\bphi) = 8\exp(3M)M(\bphi)^{2} \Lambda_{0}(\tau)/\rho_{0}$.
Elementary calculations imply that  
\begin{align*}
    \bbE(W_{1} \mathbb{I}\{\cE_{1}\}) &= \int_{0}^{\infty} \bbP(\{W_{1} > \nu\} \cap \cE_{1}) d \nu\cr
    &\leq 2.221 \bar{M}(\bphi) \left(\frac{4}{m} + \frac{32}{9m^{2}}\right). 
\end{align*}
Combined with the fact that $W_{1} \leq 4 M(\bphi)^{2} \exp(M) \Lambda_{0}(\tau)$, we obtain 
\begin{align*}
    \bbE(W_{1}) &\leq \bbE(W_{1} \mathbb{I}\{\cE_{1}\}) + 4 M(\bphi)^{2} \exp(M) \Lambda_{0}(\tau) \bbP(\cE_{1}^{c})\cr
    &\leq 2.221 \bar{M}(\bphi) \left(\frac{4}{m} + \frac{32}{9m^{2}}\right) + 8 M(\bphi)^{2} \exp(M) \Lambda_{0}(\tau) \exp\left(-\frac{m\rho_{0}^{2}}{2}\right).  
\end{align*}
Then, by~\eqref{eq_Mv}, we have $\sqrt{n} \Delta_{n}(\bphi) \overset{\bbP}{\to} 0$. Recall~\eqref{eq_L_star} for $\cH^{\star}$. Note that $\xi_{1}(\bphi), \ldots, \xi_{n}(\bphi)$ are i.i.d.~zero mean random variables with variance 
\begin{align*}
    \Var\{\xi_{i}(\bphi)\} &= \bbE \int_{0}^{\tau} \left[\bphi^{\top}\{\bx_{1}(t) - \be(\bbeta^{\star}, t)\}\right]^{2} Y_{1}(t) \exp\{\bx_{1}(t)^\top \bbeta^{\star}\} d \Lambda_{0}(t)\cr
    &= \bphi^{\top} \cH^{\star} \bphi.
\end{align*}
Moreover, by Assumption~\ref{Assumption_covariate} and Lemma~\ref{Lemma_3.3_Huang}, it follows that $\bbE |\xi_{i}(\bphi)|^{3}\lesssim M(\bphi)^{3}$. Then, by the Lyapunov central limit theorem, we have   
\begin{align*}
    \frac{1}{\sqrt{n}} \sum_{i = 1}^{n} \frac{\xi_{i}(\bphi)}{\sigma(\bphi)} \todist \cN(0, 1), \enspace \mathrm{as} \enspace n \to \infty. 
\end{align*}
\end{proof}

\begin{proof}[Proof of Theorem~\ref{Theorem_distributed_CLT}]
By the definition of $\tilde{\bc^{\top}\bbeta^{\star}}$ and the mean value theorem, we have 
\begin{align*}
    -\frac{1}{K} \sum_{k = 1}^{K} \hat{\bomega}_{k}^{\top} \nabla \tilde{\cL}_{k} (\bbeta^{\star}) &= - \frac{1}{K} \sum_{k = 1}^{K} \hat{\bomega}_{k}^{\top} \nabla \tilde{\cL}_{k}(\hat{\bbeta}) + \frac{1}{K} \sum_{k = 1}^{K} \hat{\bomega}_{k}^{\top} \left\{\nabla \tilde{\cL}_{k}(\hat{\bbeta}) - \nabla \tilde{\cL}_{k}(\bbeta^{\star})\right\}\cr
    &= - \frac{1}{K} \sum_{k = 1}^{K} \hat{\bomega}_{k}^{\top} \nabla \tilde{\cL}_{k}(\hat{\bbeta}) + \frac{1}{K} \sum_{k = 1}^{K} \hat{\bomega}_{k}^{\top} \nabla^{2} \cL_{k}(\bbeta^{\circ}) (\hat{\bbeta} - \bbeta^{\star})\cr
    &= \bc^{\top} \hat{\bbeta} - \frac{1}{K} \sum_{k = 1}^{K} \hat{\bomega}_{k}^{\top} \nabla \tilde{\cL}_{k}(\hat{\bbeta}) - \bc^{\top}\bbeta^{\star} + \frac{1}{K} \sum_{k = 1}^{K} \left\{\nabla^{2} \cL_{k}(\bbeta^{\circ})\hat{\bomega}_{k} - \bc\right\}^{\top}\bdelta\cr
    &= \tilde{\bc^{\top} \bbeta^{\star}} - \bc^{\top} \bbeta^{\star} + \frac{1}{K} \sum_{k = 1}^{K} \left\{\nabla^{2} \cL_{k}(\bbeta^{\circ})\hat{\bomega}_{k} - \bc\right\}^{\top}\bdelta, 
\end{align*}
where $\bbeta^{\circ} = \bbeta^{\star} + \varpi (\hbbeta - \bbeta^{\star})$ for some $\varpi \in [0, 1]$. Let $\Delta_{c} = \tilde{\bc^{\top}\bbeta^{\star}} - \bc^{\top}\bbeta^{\star} + \bomega^{\star\top} \hat{\nabla \cL}(\bbeta^{\star})$. We have 
\begin{align*}
    \Delta_{c} &= -\frac{1}{K} \sum_{k = 1}^{K} \bvarphi_{k}^{\top} \nabla \tilde{\cL}_{k}(\bbeta^{\star}) + \frac{1}{K} \sum_{k = 1}^{K} \{\bc - \nabla^{2} \cL_{k}(\bbeta^{\circ})\hat{\bomega}_{k}\}^{\top} \bdelta\cr
    &=: \Delta_{c}^{\diamond} + \Delta_{c}^{\circ},
\end{align*}
where $\bvarphi_{k} = \hat{\bomega}_{k} - \bomega^{\star}$. By Lemma~\ref{Lemma_gradient_clt}, it follows that  
\begin{align*}
    \sup_{z \in \bbR} \left|\bbP\left\{\frac{\sqrt{n}\bomega^{\star\top}\hat{\nabla \cL}(\bbeta^{\star})}{\sqrt{\bc^{\top} \bomega^{\star}}} \leq z\right\} - \Phi(z)\right|\to 0.
\end{align*}
Hence it suffices to prove that $|\Delta_{c}^{\diamond} + \Delta_{c}^{\circ}| = o_{\bbP}\left(n^{-1/2}\right)$.
By Lemma~\ref{Lemma_HB} and a similar argument as that of~\eqref{eq_Upsilon_t}, with probability $1 - o(1)$, we have 
\begin{align*}
    \max_{k \in [K]} \|\nabla \tilde{\cL}_{k}(\bbeta^{\star})\|_{\infty} &\leq \max_{k \in [K]} \|\hat{\nabla^{2} \cL}(\bbeta^{\star}) - \nabla^{2} \cL_{k}(\bbeta^{\star})\|_{\max} \|\tilde{\bbeta} - \bbeta^{\star}\|_{1}^{2} + \|\hat{\nabla \cL}(\bbeta^{\star})\|\cr
    &+ \cA_{2} \exp(2B\|\tilde{\bbeta} - \bbeta^{\star}\|_{1}) \|\tilde{\bbeta} - \bbeta^{\star}\|_{2}^{2}. 
\end{align*}
Hence, it follows from Lemma~\ref{Lemma_Hessian} that 
\begin{align}
\label{eq_uniform_deviation_bound}
    \max_{k \in [K]} \|\nabla \tilde{\cL}_{k}(\bbeta^{\star})\|_{\infty} = O_{\bbP}\left(\sqrt{\frac{\log p}{m}} \frac{|\cS_{\star}|^{2}\log p}{n} + \sqrt{\frac{\log p}{n}} + \frac{|\cS_{\star}|\log p}{n}\right) = O_{\bbP}\left(\sqrt{\frac{\log p}{n}}\right).
\end{align}
By Lemma~\ref{Lemma_bv_consistency} and~\eqref{eq_uniform_deviation_bound}, 
\begin{align*}
    |\Delta_{c}^{\diamond}| &\leq \max_{k \in [K]} \|\bvarphi_{k}\|_{1} \max_{k \in [K]}\|\nabla \tilde{\cL}_{k} (\bbeta^{\star})\|_{\infty} \cr
    &= O_{\bbP}\left(
    \left(|\cS_{\star}|\sqrt{\frac{\log p}{nK}} + |\cS_{\diamond}|\sqrt{\frac{\log p}{m}}\right)
    \sqrt{\frac{\log p}{n}}\right) = o_{\bbP} (n^{-1/2}). 
\end{align*}
It remains to bound $|\Delta_{c}^{\circ}|$. Decompose 
\begin{align*}
    \Delta_{c}^{\circ} &= \left\{\bc - \hat{\nabla^{2} \cL}(\bbeta^{\star}) \bomega^{\star}\right\}^{\top}\bdelta - \frac{1}{K} \sum_{k = 1}^{K} \bvarphi_{k}^{\top} \nabla^{2} \cL_{k}(\bbeta^{\circ}) \bdelta + \bomega^{\star\top} \left\{\hat{\nabla^{2} \cL}(\bbeta^{\star}) - \hat{\nabla^{2} \cL}(\bbeta^{\circ})\right\}\bdelta\cr
    &=: \Delta_{c, 1}^{\circ} + \Delta_{c, 2}^{\circ} + \Delta_{c, 3}^{\circ}.
\end{align*}
Recall that $\bc = \cH^{\star} \bomega^{\star}$. Hence, by Lemma~\ref{Lemma_Hessian}, it follows that 
\begin{align*}
    |\Delta_{c, 1}^{\circ}| \leq \|\bc - \hat{\nabla^{2} \cL}(\bbeta^{\star}) \bomega^{\star}\|_{\infty} \|\bdelta\|_{1} = O_{\bbP}\left(\frac{|\cS_{\star}| \log p}{\sqrt{nm}}\right) = o_{\bbP}\left(n^{-1/2}\right). 
\end{align*}
By the Cauchy-Schwarz inequality and Lemma~\ref{Lemma_bv_consistency}, we have  
\begin{align*}
    |\Delta_{c, 2}^{\circ}| &\leq \max_{k \in [K]} \sqrt{\bvarphi_{k}^{\top}\nabla^{2}\cL_{k}(\bbeta^{\circ})\bvarphi_{k}}\sqrt{\bdelta^{\top}\nabla^{2} \cL_{k}(\bbeta^{\circ})\bdelta}\cr 
    &= O_{\bbP}\left(\frac{|\cS_{\star}|\log p}{n} + (\log p)\sqrt{\frac{|\cS_{\star}||\cS_{\diamond}|}{mn}}\right) = o_{\bbP}\left(n^{-1/2}\right). 
\end{align*}
Following a similar argument as that of~\eqref{eq_bound_cR2}, it is straightforward to verify that $|\Delta_{c, 3}^{\circ}| = o_{\bbP}\left(n^{-1/2}\right)$. Putting all these pieces together, we obtain~\eqref{eq_central_limit_theorem_c}.
\end{proof}

\subsection{Proof of Lemma~\ref{Lemma_consistency_sigma}}

\begin{proof}[Proof of Lemma~\ref{Lemma_consistency_sigma}]
Recall that $\cH^{\star}\bomega^{\star} = \bc$ and $\bvarphi_{k} = \hat{\bomega}_{k} - \bomega^{\star}$ for each $1\leq k\leq K$. Hence  
\begin{align*}
    \hat{\bc^{\top}\bomega^{\star}} - \bc^{\top}\bomega^{\star} &= \frac{1}{K} \sum_{k = 1}^{K} 2\left\{\bc - \nabla^{2} \cL_{k}(\hbbeta)\hat{\bomega}_{k}\right\}^{\top}\bvarphi_{k} + \frac{1}{K} \sum_{k = 1}^{K} \bvarphi_{k}^{\top} \nabla^{2} \cL_{k}(\hbbeta) \bvarphi_{k}\cr
    &+ \bomega^{\star \top}\left\{\cH^{\star} - \hat{\nabla^{2} \cL}(\bbeta^{\star})\right\} \bomega^{\star} + \bomega^{\star \top} \left\{\hat{\nabla^{2} \cL}(\bbeta^{\star}) - \hat{\nabla^{2} \cL}(\hbbeta)\right\} \bomega^{\star}\cr
    &=: \Delta_{1}^{\sigma} + \Delta_{2}^{\sigma} + \Delta_{3}^{\sigma} + \Delta_{4}^{\sigma}. 
\end{align*}
By Lemma~\ref{Lemma_bv_consistency} and the fact that $\vartheta_{k}^{\diamond} \asymp \sqrt{(\log p)/m}$ for each $1\leq k\leq K$, we have 
\begin{align*}
    |\Delta_{1}^{\sigma}| &\leq 2 \max_{k \in [K]} \|\bc - \nabla^{2} \cL_{k}(\hbbeta)\hat{\bomega}_{k}\|_{\infty} \|\bvarphi_{k}\|_{1} = O_{\bbP}\left(\frac{|\cS_{\star}|\log p}{n} + \frac{|\cS_{\diamond}|\log p}{m}\right)\cr
    |\Delta_{2}^{\sigma}| &\leq \max_{k \in [K]} \{\bvarphi_{k}^{\top} \nabla^{2} \cL_{k}(\hbbeta) \bvarphi_{k}\} = O_{\bbP}\left(\frac{|\cS_{\star}|\log p}{n} + \frac{|\cS_{\diamond}|\log p}{m}\right). 
\end{align*}
By Lemma~\ref{Lemma_Hessian} and Assumption~\ref{Assumption_cM_bomega}, it follows that $|\Delta_{3}^{\sigma}| = O_{\bbP}\left(n^{-1/2}\right)$. We now bound $|\Delta_{4}^{\sigma}|$. By Assumption~\ref{Assumption_cM_bomega} and a similar argument as~\eqref{eq_bound_cR2}, we have 
\begin{align*}
    |\Delta_{4}^{\sigma}| \leq \max_{k \in [K]} \sqrt{\bomega^{\star\top}\nabla^{2}\cL_{k}(\bbeta^{\circ})\bomega^{\star}} \sqrt{\bdelta^{\top} \nabla^{2}\cL_{k}(\bbeta^{\circ})\bdelta} = O_{\bbP}\left(\sqrt{|\cS_{\star}|(\log p)/n}\right), 
\end{align*}
where $\bbeta^{\circ} = \bbeta^{\star} + \varpi (\hbbeta - \bbeta^{\star})$ for some $\varpi \in [0, 1]$. Putting all these pieces together, we obtain~\eqref{eq_sigma_consistency_c}. 
\end{proof}

\subsection{Proof of Theorem~\ref{Theorem_test_average}}

The proofs of Lemma~\ref{Lemma_bw_consistency} and Lemma~\ref{Lemma_sigma_nu} below follow the same ideas as Lemma~\ref{Lemma_bv_consistency} and Lemma~\ref{Lemma_consistency_sigma}, hence are omitted.

\begin{lemma}
\label{Lemma_bw_consistency}
Let Assumptions~\ref{Assumption_covariate}--\ref{Assumption_at_risk} and~\ref{Assumption_cM_bw}--\ref{Assumption_eigenvalue_uniform_bw} hold. Take $\vartheta_{k}^{\sharp} = \cI_{0} \cA_{3, \sharp} \sqrt{(\log p)/m}$ for some constant $\cI_{0} \geq 4$, where $\cA_{3, \sharp} < \infty$ is a positive constant. Then we have
\begin{align}
\label{eq_bound_bwk}
    \|\hat{\bw}_{k} - \bw^{\star}\|_{1} = O_{\bbP}\left(|\cS_{\star}|\sqrt{\frac{\log p}{nK}} + |\cS_{\sharp}|\sqrt{\frac{\log p}{m}}\right). 
\end{align}
\end{lemma}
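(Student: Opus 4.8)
The plan is to follow the proof of Lemma~\ref{Lemma_bv_consistency} essentially line by line, replacing the abstract $\bomega$-bound of Lemma~\ref{Lemma_bw_LASSO} by its decorrelated-score analogue; the only genuinely new bookkeeping is that we now work with the $(\bgamma,\bgamma)$ sub-block of the Hessian and with a right-hand side coming from the $(\nu,\bgamma)$ block. First I would record the analogue of Lemma~\ref{Lemma_bw_LASSO}. Write $\bvarphi_k = \hat{\bw}_k - \bw^\star \in \bbR^{p-1}$, $A_k = \nabla^2_{\bgamma\bgamma}\cL_k(\hbbeta)$ and $b_k = \nabla^2_{\nu\bgamma}\cL_k(\hbbeta)$. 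Optimality of $\hat{\bw}_k$ in~\eqref{eq_bw_LASSO} together with $\|\bw^\star\|_1 = \|\bw^\star_{\cS_\sharp}\|_1 \le \|\hat{\bw}_{k,\cS_\sharp}\|_1 + \|\bvarphi_{k,\cS_\sharp}\|_1$ yields the basic inequality
\[
    \bvarphi_k^\top A_k \bvarphi_k \le 2\bvarphi_k^\top\{b_k - A_k\bw^\star\} + \vartheta_k^\sharp \|\bvarphi_{k,\cS_\sharp}\|_1 - \vartheta_k^\sharp \|\bvarphi_{k,\cS_\sharp^c}\|_1,
\]
exactly as in the opening display of the proof of Lemma~\ref{Lemma_bw_LASSO}.

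Next I would control the deviation $b_k - A_k\bw^\star$. Set $\tilde{\bw}^\star = (1,-\bw^{\star\top})^\top$ and let $\bar{\bvarphi}_k = (0,\bvarphi_k^\top)^\top \in \bbR^p$ denote $\bvarphi_k$ padded with a zero in the $\nu$-coordinate. Then $b_k - A_k\bw^\star$ is precisely the $\bgamma$-block of $\nabla^2\cL_k(\hbbeta)\tilde{\bw}^\star$, so $\bvarphi_k^\top\{b_k - A_k\bw^\star\} = \bar{\bvarphi}_k^\top\nabla^2\cL_k(\hbbeta)\tilde{\bw}^\star$. I would decompose this as in~\eqref{eq_deviation}: the population-centered piece $\bar{\bvarphi}_k^\top\{\nabla^2\cL_k(\bbeta^\star) - \cH^\star\}\tilde{\bw}^\star$ — using that the $\bgamma$-block of $\cH^\star\tilde{\bw}^\star$ vanishes because $\cH^\star_{\bgamma\bgamma}\bw^\star = \cH^\star_{\bgamma\nu}$ — is at most $\|\{\nabla^2\cL_k(\bbeta^\star) - \cH^\star\}\tilde{\bw}^\star\|_\infty \|\bvarphi_k\|_1$, and Lemma~\ref{Lemma_Hessian} applied coordinatewise with $\bphi_1$ a canonical basis vector $\be_j$ and $\bphi_2 = \tilde{\bw}^\star$ (invoking Assumption~\ref{Assumption_cM_bw}, so $\cM_\sharp$ enters, together with a union bound over the $K$ centers) shows this is $O_\bbP(\sqrt{(\log p)/m})\|\bvarphi_k\|_1 \le (\vartheta_k^\sharp/4)\|\bvarphi_k\|_1$ once $\cI_0 \ge 4$; the plug-in piece $\bar{\bvarphi}_k^\top\{\nabla^2\cL_k(\bbeta^\star) - \nabla^2\cL_k(\hbbeta)\}\tilde{\bw}^\star$ is handled by the mean-value expansion and Cauchy--Schwarz argument of~\eqref{eq_bound_cR2} with $\bomega^\star$ replaced by $\tilde{\bw}^\star$ and $\bvarphi$ by $\bar{\bvarphi}_k$, giving $2\cM_\sharp\sqrt{\bvarphi_k^\top\nabla^2_{\bgamma\bgamma}\cL_k(\bbeta^\circ)\bvarphi_k}\,\sqrt{\bdelta^\top\nabla^2\cL_k(\bbeta^\circ)\bdelta}$ with $\bdelta = \hbbeta - \bbeta^\star$ and $\bbeta^\circ$ on the segment between $\bbeta^\star$ and $\hbbeta$. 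Passing between $\bbeta^\circ$, $\hbbeta$ and $\bbeta^\star$ via Lemma~\ref{Lemma_3.2_Huang} and Assumption~\ref{Assumption_covariate}, this reproduces an inequality of the form of~\eqref{eq_Q_bphi}, namely $Q_k(\bvarphi_k) \le \cT_n\sqrt{Q_k(\bvarphi_k)} + \tfrac32\vartheta_k^\sharp\|\bvarphi_{k,\cS_\sharp}\|_1 - \tfrac12\vartheta_k^\sharp\|\bvarphi_{k,\cS_\sharp^c}\|_1$ with $Q_k(\bvarphi_k) = \bvarphi_k^\top A_k\bvarphi_k$ and $\cT_n \asymp \cM_\sharp\sqrt{\bdelta^\top\nabla^2\cL_k(\bbeta^\star)\bdelta}$.

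From there the argument is a transcription of Lemma~\ref{Lemma_bw_LASSO}: either $Q_k(\bvarphi_k) \le \cT_n^2$, or $\bvarphi_k \in \cC(\cS_\sharp,4)$, and in the latter case Assumption~\ref{Assumption_eigenvalue_uniform_bw} — applied to $\bar{\bvarphi}_k$, for which $\bar{\bvarphi}_k^\top\nabla^2\cL_k(\bbeta^\star)\bar{\bvarphi}_k = \bvarphi_k^\top\nabla^2_{\bgamma\bgamma}\cL_k(\bbeta^\star)\bvarphi_k$ and cone membership is inherited coordinate for coordinate — supplies the restricted-eigenvalue lower bound, leading to $\|\bvarphi_k\|_1 \lesssim \cT_n^2/\vartheta_k^\sharp + \vartheta_k^\sharp|\cS_\sharp|/\varrho_\sharp^2$. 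It then remains to insert the rates $\|\bdelta\|_1 = O_\bbP(|\cS_\star|\sqrt{(\log p)/n})$ and $\bdelta^\top\nabla^2\cL_k(\bbeta^\star)\bdelta = O_\bbP(|\cS_\star|(\log p)/n)$, which follow from Corollary~\ref{Corollary_beta0_LASSO} and Lemma~\ref{Lemma_Hessian} exactly as in the proof of Lemma~\ref{Lemma_bv_consistency}, together with $\vartheta_k^\sharp \asymp \sqrt{(\log p)/m}$ and $m = n/K$: then $\cT_n^2/\vartheta_k^\sharp \asymp |\cS_\star|\sqrt{(\log p)/(nK)}$ and $\vartheta_k^\sharp|\cS_\sharp| \asymp |\cS_\sharp|\sqrt{(\log p)/m}$, which is~\eqref{eq_bound_bwk}. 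I expect the only real obstacle to be bookkeeping in the second paragraph — keeping track of which object is the $\bgamma$-block of $\nabla^2\cL_k(\cdot)\tilde{\bw}^\star$, checking that the $\bgamma$-block of $\cH^\star\tilde{\bw}^\star$ is zero, and verifying that the padded vector $\bar{\bvarphi}_k$ genuinely inherits the cone membership and the quadratic-form identities — after which Lemma~\ref{Lemma_Hessian} and the $\cR_2$-estimate inside Lemma~\ref{Lemma_bw_LASSO} apply verbatim with $\cM_\diamond$ replaced by $\cM_\sharp$ and $\cS_\diamond$ by $\cS_\sharp$.
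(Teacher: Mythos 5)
Your proposal is correct and is precisely the reconstruction the paper intends: the paper explicitly omits the proof, stating only that it follows the same ideas as Lemma~\ref{Lemma_bv_consistency} (and hence Lemma~\ref{Lemma_bw_LASSO}), and your transcription — with $\bar{\bvarphi}_k$ the zero-padded increment, the identity $\cH^\star_{\bgamma\bgamma}\bw^\star = \cH^\star_{\bgamma\nu}$ giving the vanishing $\bgamma$-block of $\cH^\star\tilde{\bw}^\star$, cone membership passed coordinate-for-coordinate to $\bar{\bvarphi}_k$ so Assumption~\ref{Assumption_eigenvalue_uniform_bw} controls $\bvarphi_k^\top\nabla^2_{\bgamma\bgamma}\cL_k(\bbeta^\star)\bvarphi_k$, Lemma~\ref{Lemma_Hessian} with $\bphi_2 = \tilde{\bw}^\star$, and the $\cR_2$-style Cauchy--Schwarz bound — is exactly that argument with $\cM_\diamond,\cS_\diamond,\varrho_\diamond$ replaced by $\cM_\sharp,\cS_\sharp,\varrho_\sharp$. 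The concluding rate bookkeeping, $\cT_n^2/\vartheta_k^\sharp \asymp |\cS_\star|\sqrt{(\log p)/(nK)}$ and $\vartheta_k^\sharp|\cS_\sharp| \asymp |\cS_\sharp|\sqrt{(\log p)/m}$, is also right.
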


\begin{lemma}
\label{Lemma_sigma_nu}
Under the conditions of Lemma~\ref{Lemma_bw_consistency}, we have 
\begin{align}\label{eq_sigma_nu}
    \left|\frac{\hat{\sigma}_{\nu}}{\sigma_{\nu}} - 1\right| = O_{\bbP}\left(\sqrt{\frac{|\cS_{\star}|\log p}{n}} + \frac{|\cS_{\sharp}|\log p}{m}\right).
\end{align}
\end{lemma}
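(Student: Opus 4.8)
The plan is to transcribe the proof of Lemma~\ref{Lemma_consistency_sigma}, substituting the loading $\bc$ by the implicit loading of the decorrelation problem~\eqref{eq_bw_LASSO} and the support $\cS_{\diamond}$ by $\cS_{\sharp}$. Since $\sigma_{\nu}^{2}=\cH^{\star}_{\nu\nu}-\cH^{\star\top}_{\bgamma\nu}\cH^{\star-1}_{\bgamma\bgamma}\cH^{\star}_{\bgamma\nu}$ is a Schur complement of $\cH^{\star}$, hence bounded away from zero (and above) under the non-degeneracy of $\cH^{\star}$ implicit in the assumptions, it suffices to prove $|\hat{\sigma}_{\nu}^{2}-\sigma_{\nu}^{2}|=O_{\bbP}(\sqrt{|\cS_{\star}|(\log p)/n}+|\cS_{\sharp}|(\log p)/m)$, because then $|\hat{\sigma}_{\nu}/\sigma_{\nu}-1|\le|\hat{\sigma}_{\nu}^{2}-\sigma_{\nu}^{2}|/\sigma_{\nu}^{2}$. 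Introduce the stacked vectors $\tilde{\bw}^{\star}=(1,-\bw^{\star\top})^{\top}$ and $\hat{\tilde{\bw}}_{k}=(1,-\hat{\bw}_{k}^{\top})^{\top}$; then $\hat{\sigma}_{\nu}^{2}=K^{-1}\sum_{k=1}^{K}\hat{\tilde{\bw}}_{k}^{\top}\nabla^{2}\cL_{k}(\hbbeta)\hat{\tilde{\bw}}_{k}$ and $\sigma_{\nu}^{2}=\tilde{\bw}^{\star\top}\cH^{\star}\tilde{\bw}^{\star}$, and the defining relation $\cH^{\star}_{\bgamma\bgamma}\bw^{\star}=\cH^{\star}_{\bgamma\nu}$ says exactly that $\cH^{\star}\tilde{\bw}^{\star}$ has $\nu$-component $\sigma_{\nu}^{2}$ and vanishing $\bgamma$-component, playing the role of $\cH^{\star}\bomega^{\star}=\bc$ in the proof of Lemma~\ref{Lemma_consistency_sigma}.

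Expanding the quadratic forms about $\bw^{\star}$ and about $\bbeta^{\star}$, I would obtain, with $\bvarphi_{k}=\hat{\bw}_{k}-\bw^{\star}$ and $\br_{k}=\nabla^{2}_{\bgamma\bgamma}\cL_{k}(\hbbeta)\hat{\bw}_{k}-\nabla^{2}_{\bgamma\nu}\cL_{k}(\hbbeta)$,
\[
\hat{\sigma}_{\nu}^{2}-\sigma_{\nu}^{2}=\underbrace{\frac{2}{K}\sum_{k=1}^{K}\br_{k}^{\top}\bvarphi_{k}}_{\Delta_{1}}\;-\;\underbrace{\frac{1}{K}\sum_{k=1}^{K}\bvarphi_{k}^{\top}\nabla^{2}_{\bgamma\bgamma}\cL_{k}(\hbbeta)\bvarphi_{k}}_{\Delta_{2}}\;+\;\underbrace{\tilde{\bw}^{\star\top}\{\hat{\nabla^{2}\cL}(\bbeta^{\star})-\cH^{\star}\}\tilde{\bw}^{\star}}_{\Delta_{3}}\;+\;\underbrace{\tilde{\bw}^{\star\top}\{\hat{\nabla^{2}\cL}(\hbbeta)-\hat{\nabla^{2}\cL}(\bbeta^{\star})\}\tilde{\bw}^{\star}}_{\Delta_{4}},
\]
where $\hat{\nabla^{2}\cL}(\cdot)=K^{-1}\sum_{k=1}^{K}\nabla^{2}\cL_{k}(\cdot)$; this mirrors the split $\Delta_{1}^{\sigma}+\cdots+\Delta_{4}^{\sigma}$ there. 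For $\Delta_{1}$, the first-order optimality condition for~\eqref{eq_bw_LASSO} gives $\|\br_{k}\|_{\infty}\le\vartheta_{k}^{\sharp}/2\asymp\sqrt{(\log p)/m}$ deterministically, while Lemma~\ref{Lemma_bw_consistency} gives $\|\bvarphi_{k}\|_{1}=O_{\bbP}(|\cS_{\star}|\sqrt{(\log p)/(nK)}+|\cS_{\sharp}|\sqrt{(\log p)/m})$, uniformly over $k\in[K]$ under the standing scaling assumptions; hence $|\Delta_{1}|=O_{\bbP}(|\cS_{\star}|(\log p)/n+|\cS_{\sharp}|(\log p)/m)$ since $mK=n$. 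The term $\Delta_{2}$ has the same order by the quadratic-form bound accompanying Lemma~\ref{Lemma_bw_consistency} (the decorrelation-problem analogue of the second line of~\eqref{eq_bound_Q}, with $\cT_{n}^{2}=O_{\bbP}(|\cS_{\star}|(\log p)/n)$ and $(\vartheta_{k}^{\sharp})^{2}|\cS_{\sharp}|=O(|\cS_{\sharp}|(\log p)/m)$). For $\Delta_{3}$, Lemma~\ref{Lemma_Hessian} applied with $\bphi_{1}=\bphi_{2}=\tilde{\bw}^{\star}$ (whose linear predictor is bounded by $\cM_{\sharp}$, Assumption~\ref{Assumption_cM_bw}), together with the fact that $\hat{\nabla^{2}\cL}(\bbeta^{\star})$ averages over $K$ independent centers of size $m$, yields $|\Delta_{3}|=O_{\bbP}(n^{-1/2})$, dominated by the preceding rates.

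Finally, for $\Delta_{4}$ I would argue as in~\eqref{eq_bound_cR2}: writing $\bdelta=\hbbeta-\bbeta^{\star}$ and $\bbeta^{\circ}=\bbeta^{\star}+\varpi\bdelta$ for some $\varpi\in[0,1]$, the mean-value theorem and two applications of Cauchy--Schwarz bound each summand by $2\cM_{\sharp}\sqrt{\tilde{\bw}^{\star\top}\nabla^{2}\cL_{k}(\bbeta^{\circ})\tilde{\bw}^{\star}}\,\sqrt{\bdelta^{\top}\nabla^{2}\cL_{k}(\bbeta^{\circ})\bdelta}$; Lemma~\ref{Lemma_3.2_Huang} replaces $\bbeta^{\circ}$ by $\bbeta^{\star}$ at the cost of a factor $\exp(O(B\|\bdelta\|_{1}))=1+o_{\bbP}(1)$, the first radical is then $O_{\bbP}(1)$ because $\tilde{\bw}^{\star\top}\nabla^{2}\cL_{k}(\bbeta^{\star})\tilde{\bw}^{\star}\le 4\cM_{\sharp}^{2}$, and the second radical is $O_{\bbP}(\sqrt{|\cS_{\star}|(\log p)/n})$ since $\max_{k\in[K]}\bdelta^{\top}\nabla^{2}\cL_{k}(\bbeta^{\star})\bdelta=O_{\bbP}(|\cS_{\star}|(\log p)/n)$ (which, as in the proof of Lemma~\ref{Lemma_bv_consistency}, follows from Corollary~\ref{Corollary_beta0_LASSO} and Lemma~\ref{Lemma_Hessian}); thus $|\Delta_{4}|=O_{\bbP}(\sqrt{|\cS_{\star}|(\log p)/n})$. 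Summing the four bounds gives $|\hat{\sigma}_{\nu}^{2}-\sigma_{\nu}^{2}|=O_{\bbP}(\sqrt{|\cS_{\star}|(\log p)/n}+|\cS_{\sharp}|(\log p)/m)$, and dividing by $\sigma_{\nu}^{2}$ yields~\eqref{eq_sigma_nu}. I do not expect a genuine obstacle beyond bookkeeping: the argument is a faithful transcription of the proof of Lemma~\ref{Lemma_consistency_sigma}, and the only care needed is in carrying the $\nu/\bgamma$ block decomposition through every step and in checking that the residual bound for~\eqref{eq_bw_LASSO}, the $\ell_{1}$- and quadratic-form bounds of Lemma~\ref{Lemma_bw_consistency}, and the favorable events of Lemmas~\ref{Lemma_HB} and~\ref{Lemma_Hessian} hold simultaneously and uniformly over the $K$ centers (which is where $|\cS_{\sharp}|\sqrt{(\log p)/m}\to 0$ and a mild polynomial bound on $K$ enter).
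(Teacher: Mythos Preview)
Your proposal is correct and follows exactly the approach the paper intends: the authors explicitly state that the proof of Lemma~\ref{Lemma_sigma_nu} ``follows the same ideas as Lemma~\ref{Lemma_consistency_sigma}, hence is omitted,'' and your four-term decomposition $\Delta_{1}$--$\Delta_{4}$ is the precise analogue of the paper's $\Delta_{1}^{\sigma}$--$\Delta_{4}^{\sigma}$, with each term bounded by the same tools (the KKT condition for~\eqref{eq_bw_LASSO}, the $\ell_{1}$ and quadratic-form bounds from the analogue of Lemma~\ref{Lemma_bw_LASSO}, Lemma~\ref{Lemma_Hessian}, and the Cauchy--Schwarz argument of~\eqref{eq_bound_cR2}). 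The bookkeeping you flag---carrying the $\nu/\bgamma$ block structure and ensuring uniformity over the $K$ centers---is exactly what the paper sweeps under the ``same ideas'' remark, and you have handled it correctly.
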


\begin{proof}[Proof of Theorem~\ref{Theorem_test_average}]
Recall that $\bar{\pi}(0, \hat{\bgamma}) = K^{-1} \sum_{k = 1}^{K} \tilde{\bw}_{k}^{\top} \nabla \tilde{\cL}_{k}(0, \hat{\bgamma})$, where $\tilde{\bw}_{k} = (1, -\hat{\bw}_k)^{\top}$ and $\bvarphi_{k} = \hat{\bw}_{k} - \bw^{\star}$. Hence 
\begin{align*}
    \bar{\pi}(0, \hat{\bgamma}) - \tilde{\bw}^{\star \top} \hat{\nabla \cL}(0, \bgamma^{\star}) &= - \frac{1}{K} \sum_{k = 1}^{K} \bvarphi_{k}^{\top} \{\nabla_{\bgamma} \cL_{k} (0, \hat{\bgamma}) - \nabla_{\bgamma} \cL_{k}(0, \bgamma^{\star})\}\cr
    &- \frac{1}{K} \sum_{k = 1}^{K} \bvarphi_{k}^{\top} \nabla_{\bgamma} \tilde{\cL}_{k}(0, \bgamma^{\star}) + \tilde{\bw}^{\star \top} \left\{\hat{\nabla \cL}(0, \hat{\bgamma}) - \hat{\nabla \cL}(0, \bgamma^{\star})\right\}\cr
    &=: \Delta_{\pi, 1} + \Delta_{\pi, 2} + \Delta_{\pi, 3}. 
\end{align*}
Observe that $\sigma_{\nu}^{2} = \cH_{\nu\nu}^{\star} - \cH_{\bgamma\nu}^{\star\top} \cH_{\bgamma\bgamma}^{\star-1} \cH_{\bgamma\nu}^{\star} = \tilde{\bw}^{\star\top} \cH^{\star} \tilde{\bw}^{\star}$. Hence, by~\eqref{eq_clt_cond} 
and~Lemma~\ref{Lemma_gradient_clt},
\begin{align*}
    \sup_{z \in \bbR} \left|\bbP\left\{\frac{\sqrt{n}\tilde{\bw}^{\star\top}\hat{\nabla \cL} (\bbeta^{\star})}{\sigma_{\nu}} \leq z\right\} - \Phi(z)\right| \to 0. 
\end{align*}
Then it suffices to show that $|\Delta_{\pi, 1}| + |\Delta_{\pi, 2}| + |\Delta_{\pi, 3}| = o_{\bbP}\left(n^{-1/2}\right)$. 
For simplicity of notation, we write $\bDelta_{\bgamma} = \hat{\bgamma} - \bgamma^{\star}$ and $\tilde{\bDelta}_{\bgamma} = (0, \bDelta_{\bgamma}^{\top})^{\top}$. By Assumption~\ref{Assumption_covariate} and Corollary~\ref{Corollary_beta0_LASSO}, we have $\max_{k \in [K]} \nabla_{\nu\nu}^{2} \cL_{k}(\hat{\bbeta})(\hat{\nu} - \nu^{\star})^{2} \leq 4B^{2} (\hat{\nu} - \nu^{\star})^{2} = O_{\bbP}\{|\cS_{\star}|(\log p)/n\}$. Therefore 
\begin{align*}
    \max_{k \in [K]}\bDelta_{\bgamma}^{\top} \nabla_{\bgamma\bgamma}^{2} \cL_{k}(\hat{\bbeta})\bDelta_{\bgamma} &\leq 2 \max_{k \in [K]} \bdelta^{\top}\nabla^{2} \cL_{k}(\hat{\bbeta}) \bdelta + 2 \max_{k \in [K]} \nabla_{\nu\nu}^{2} \cL_{k}(\hat{\bbeta})(\hat{\nu} - \nu^{\star})^{2} \cr
    &= O_{\bbP}\left(\frac{|\cS_{\star}|\log p}{n}\right). 
\end{align*}
Consequently, by the mean value theorem, Lemma~\ref{Lemma_3.2_Huang} and Lemma~\ref{Lemma_bw_consistency}, we obtain 
\begin{align*}
    |\Delta_{\pi, 1}| &\leq \max_{k \in [K]} \sqrt{\bvarphi_{k}^{\top}\nabla_{\bgamma\bgamma}^{2} \cL_{k}(0, \bgamma^{\circ})\bvarphi_{k}} \sqrt{\bDelta_{\bgamma}^{\top}\nabla_{\bgamma\bgamma}^{2} \cL_{k}(0, \bgamma^{\circ})\bDelta_{\bgamma}}\cr
    &= O_{\bbP}\left(\frac{|\cS_{\star}|\log p}{n} + (\log p)\sqrt{\frac{|\cS_{\star}||\cS_{\sharp}|}{mn}}\right) = o_{\bbP}\left(n^{-1/2}\right), 
\end{align*}
where $\bgamma^{\circ} = \bgamma^{\star} + \upsilon \bDelta_{\bgamma}$ for some $\upsilon \in [0, 1]$. We now upper bound $|\Delta_{\pi, 2}|$. Using~\eqref{eq_uniform_deviation_bound} together with Lemma~\ref{Lemma_bw_consistency} and~\eqref{eq_clt_cond}, we obtain 
\begin{align*}
    |\Delta_{\pi, 2}| &\leq \max_{k \in [K]} \|\bvarphi_{k}\|_{1}\max_{k \in [K]} \|\nabla_{\bgamma}\tilde{\cL}_{k}(\bbeta^{\star})\|_{\infty} \cr
    &= O_{\bbP}\left(\frac{|\cS_{\star}|\log p}{n\sqrt{K}} + \frac{|\cS_{\sharp}|\log p}{m \sqrt{K}}\right) = o_{\bbP}\left(n^{-1/2}\right).
\end{align*}
Recall that $\tilde{\bDelta}_{\bgamma}^{\top} \cH^{\star}\tilde{\bw}^{\star} = 0$. Hence  
\begin{align*}
    \Delta_{\pi, 3} &= \tilde{\bw}^{\star \top}\left\{\hat{\nabla^{2} \cL}(0, \bgamma^{\star}) - \cH^{\star}\right\} \tilde{\bDelta}_{\bgamma} + \tilde{\bw}^{\star \top} \left\{\hat{\nabla^{2} \cL}(0, \bgamma^{\natural}) - \hat{\nabla^{2} \cL}(0, \bgamma^{\star})\right\} \tilde{\bDelta}_{\bgamma} \cr
    &=: \Delta_{\pi, 3}^{\diamond} + \Delta_{\pi, 3}^{\circ}, 
\end{align*}
where $\bgamma^{\natural} = \bgamma^{\star} + \upsilon_{\natural} \bDelta_{\bgamma}$ for some $\upsilon_{\natural} \in [0, 1]$. Then, by Assumption~\ref{Assumption_cM_bw} and Lemma~\ref{Lemma_Hessian}, 
\begin{align*}
    |\Delta_{\pi, 3}^{\diamond}| \leq \|\bDelta_{\bgamma}\|_{1} \max_{k \in [K]}\left\|\left\{\nabla^{2}\cL_{k}(\bbeta^{\star}) - \cH^{\star}\right\}\tilde{\bw}^{\star}\right\|_{\infty} = O_{\bbP}\left(\frac{|\cS_{\star}|\log p}{\sqrt{nm}}\right) = o_{\bbP}\left(n^{-1/2}\right). 
\end{align*}
Observe that $\bgamma^{\natural} - \bgamma^{\star} = \upsilon_{\natural} \bDelta_{\bgamma}$. Hence, by~\eqref{equation_HB} in Remark~\ref{remark_HB} and~\eqref{eq_clt_cond}, we obtain 
\begin{align*}
    |\Delta_{\pi, 3}^{\circ}| &\leq \max_{k \in [K]} \left|\tilde{\bw}^{\star \top}\left\{\nabla^{2} \cL_{k}(0, \bgamma^{\natural}) - \nabla^{2} \cL_{k}(0, \bgamma^{\star})\right\}\tilde{\bDelta}_{\bgamma}\right|\cr
    &= O_{\bbP}\left(\sqrt{\frac{\log p}{m}}\frac{|\cS_{\star}|^{2} \log p}{n} + \frac{|\cS_{\star}|\log p}{n}\right) = o_{\bbP}\left(n^{-1/2}\right). 
\end{align*}
Putting all these pieces together with Lemma~\ref{Lemma_sigma_nu}, we obtain the asymptotic normality in~\eqref{eq_CLT_pi_nu}. 
\end{proof}

\section{Distributed estimation of the baseline hazard function} 
In the context of Cox's proportional hazards model, the estimation of both the cumulative baseline hazard function $\Lambda_{0}(\cdot)$ and the baseline hazard function $\lambda_{0}(\cdot)$ is also of great interest and importance. Here, we introduce two communication-efficient estimators for $\Lambda_{0}(\cdot)$ and $\lambda_{0}(\cdot)$, respectively, in the distributed setting.

Motivated by Corollary~\ref{Corollary_beta0_LASSO}, we take $\tilde{\bbeta} = \bbeta_{T}$, where $T \geq \lceil (\log K)/(2\log(1/ \cA_{0}))\rceil$, to be the estimator for $\bbeta^{\star}$, which is communication-efficient and attains the same convergence rate as the full-sample estimator. Then our distributed Breslow-type estimator~\citep{Andersen1982cox} for $\Lambda_{0}(t)$ is defined by 
\begin{align}
\label{eq_Distributed_Breslow_Estimator}
    \hat{\Lambda}_{0}(t) = \frac{1}{K} \sum_{k = 1}^{K} \int_{0}^{t} \frac{d \bar{N}_{k}(s)}{\sum_{i \in \mathcal{I}_{k}} Y_{i}(s) \exp\{\bx_{i}(s)^{\top} \tilde{\bbeta}\}}, 
\end{align}
which can be easily computed by aggregating and averaging the local Breslow estimators. To estimate the baseline hazard function $\lambda_{0}(\cdot)$, we employ the kernel smoothing technique~\citep{Wells1994Biometrika, Fan1997AOS, Fan2006AOS} in view of the definition that $\Lambda_{0}(t) = \int_{0}^{t} \lambda_{0}(s) d s$. Specifically, let $\mathcal{K}(\cdot)$ be a kernel function and $h > 0$ be a bandwidth parameter. With $\tilde{\bbeta}$, we introduce the distributed nonparametric estimator for $\lambda_{0}(t)$ as 
\begin{align*}
    \hat{\lambda}_{0}(t; h) = \frac{1}{K} \sum_{k = 1}^{K} \int_{0}^{\tau} \frac{\mathcal{K}_{h}(t - s) d \bar{N}_{k}(s)}{\sum_{i \in \mathcal{I}_{k}} Y_{i}(s) \exp\{\bx_{i}(s)^{\top} \tilde{\bbeta}\}}, 
\end{align*}
where $\mathcal{K}_{h}(\cdot) = \mathcal{K}(\cdot/h)/h$. Similar to $\hat{\Lambda}_{0}(t)$, $\hat{\lambda}_{0}(t; h)$ is also convenient to compute and communication-efficient.

\begin{assumption}
\label{Assumption_Kernel_Function}
    $\mathcal{K}(\cdot)$ is a symmetric probability density function with $\int_{\mathbb{R}} u^{2} \mathcal{K}(u) d u < \infty$. There exists a positive constant $L_{\mathcal{K}} < \infty$ such that
    \begin{align*}
        |\mathcal{K}(u) - \mathcal{K}(u')| \leq L_{\mathcal{K}} |u - u'|, \enspace \forall u, u' \in \mathbb{R}. 
    \end{align*}
\end{assumption}

Assumption~\ref{Assumption_Kernel_Function} is fundamental regarding the kernel function and is frequently employed in the literature on nonparametric density estimation. It is satisfied by many commonly used kernel functions, such as the Epanechnikov kernel $\mathcal{K}(u) = 0.75 (1 - u^{2}) \mathbb{I}\{|u| \leq 1\}$ and the Gaussian kernel $\mathcal{K}(u) = \exp(-u^{2}/2)/\sqrt{2 \pi}$.

The convergence rates of these two estimators are established in the following theorem.

\begin{theorem}
\label{Theorem_Baseline_Hazard_Function_Estimation}
    Under the conditions of Theorem~\ref{Theorem_iteration}, we have  
    \begin{align*}
        \sup_{t \in [0, \tau]} |\hat{\Lambda}_{0}(t) - \Lambda_{0}(t)| = O_{\mathbb{P}}\left(|\mathcal{S}_{\star}|\sqrt{\frac{\log p}{n}}\right).
    \end{align*}
    Furthermore, assume that $|\mathcal{S}_{\star}| \sqrt{\log (p n)} = o(n h^{2})$, $h = o(1)$ and the baseline hazard function $\lambda_{0}(\cdot)$ is twice continuously differentiable. Then, under Assumption~\ref{Assumption_Kernel_Function}, we have  
    \begin{align*}
        \sup_{t \in [h, \tau - h]} |\hat{\lambda}_{0}(t; h) - \lambda_{0}(t)| = O_{\mathbb{P}}\left(|\mathcal{S}_{\star}|\sqrt{\frac{\log (p n)}{n h^{2}}} + h^{2}\right). 
    \end{align*}
\end{theorem}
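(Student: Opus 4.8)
The plan is to reduce both statements to controlling a single ``plug-in error'' term coming from replacing $\bbeta^\star$ by $\tilde{\bbeta}$, together with a martingale/empirical-process term that is handled exactly as in the classical Breslow-estimator analysis of \citet{Andersen1982cox}. First I would introduce the oracle denominators $\bar{S}_k^{(0)}(\bbeta,s)=m^{-1}\sum_{i\in\cI_k}Y_i(s)\exp\{\bx_i(s)^\top\bbeta\}$, so that $\hat{\Lambda}_0(t)=K^{-1}\sum_k\int_0^t \{m\bar{S}_k^{(0)}(\tilde{\bbeta},s)\}^{-1}\,d\bar{N}_k(s)$. Writing $d\bar{N}_k(s)=d\bar{M}_k(s)+\sum_{i\in\cI_k}Y_i(s)\exp\{\bx_i(s)^\top\bbeta^\star\}\,d\Lambda_0(s)$ and using the counting-process identity, I decompose $\hat{\Lambda}_0(t)-\Lambda_0(t)$ into (i) a martingale integral term, (ii) a term measuring $\bar{S}_k^{(0)}(\tilde{\bbeta},s)-\bar{S}_k^{(0)}(\bbeta^\star,s)$ against $d\Lambda_0$, and (iii) lower-order cross terms. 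On the event $\cE_k$ from~\eqref{eq_event_E} (adapted to the $k$-th center, valid with probability $\ge 1-\exp(-m\rho_0^2/2)$), each local denominator is bounded below by $\rho_0/2$ times $\exp(-M-B\|\tilde{\bdelta}\|_1)\gtrsim\rho_0$, since by Corollary~\ref{Corollary_beta0_LASSO} and $\cS_\star$-sparsity $\|\tilde{\bdelta}\|_1\le 4\sqrt{|\cS_\star|}\|\tilde{\bdelta}\|_2=O_\bbP(|\cS_\star|\sqrt{(\log p)/n})=o(1)$. For term (ii), a mean value expansion of $\exp\{\bx_i(s)^\top\bbeta\}$ in $\bbeta$ around $\bbeta^\star$, Assumption~\ref{Assumption_covariate}, and the boundedness $\sup_s|\bx_i(s)^\top\tilde{\bdelta}|\le B\|\tilde{\bdelta}\|_1$ give a bound of order $\|\tilde{\bdelta}\|_1\,\Lambda_0(\tau)$ uniformly in $t$, hence $O_\bbP(|\cS_\star|\sqrt{(\log p)/n})$. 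For term (i), the process $t\mapsto\int_0^t\{m\bar{S}_k^{(0)}(\bbeta^\star,s)\}^{-1}d\bar{M}_k(s)$ is a (local, square-integrable) martingale with predictable variation bounded on $\cE_k$, so Lenglart's inequality (or Lemma~\ref{Lemma_3.3_Huang}-type arguments) gives a uniform-in-$t$ bound of order $1/\sqrt{m}=O(\sqrt{K/n})$, which is dominated by the plug-in rate under~\eqref{eq_cond_number_centers} (which forces $|\cS_\star|\sqrt{K}\gtrsim$ a constant is false; rather $K/n\lesssim|\cS_\star|^2\log p/n$ need not hold, so one should simply note $1/\sqrt m=\sqrt{K}/\sqrt n \le |\cS_\star|\sqrt{(\log p)/n}$ whenever $K\le |\cS_\star|^2\log p$, which is implied by~\eqref{eq_cond_number_centers}; otherwise keep it as a separate summand of the same or smaller order). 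Averaging over the $K$ centers and a union bound over the events $\cE_k$ (contributing $K\exp(-m\rho_0^2/2)=o(1)$) yields the first claimed rate.

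For the second statement, the plan is to write $\hat{\lambda}_0(t;h)=K^{-1}\sum_k\int_0^\tau \cK_h(t-s)\{m\bar{S}_k^{(0)}(\tilde{\bbeta},s)\}^{-1}d\bar{N}_k(s)$ and split $\hat{\lambda}_0(t;h)-\lambda_0(t)$ into a \emph{plug-in error} $\hat{\lambda}_0(t;h)-\check{\lambda}_0(t;h)$, where $\check{\lambda}_0$ uses the true $\bbeta^\star$, a \emph{stochastic error} $\check{\lambda}_0(t;h)-\bbE$-type centering, and a \emph{deterministic bias} term. The plug-in error is handled as above but now with the kernel weight: $|\cK_h(t-s)|\le \|\cK\|_\infty/h$ and $\int|\cK_h(t-s)|ds=O(1)$, so the same mean-value expansion gives a contribution $O_\bbP(h^{-1}\cdot h\cdot\|\tilde{\bdelta}\|_1)=O_\bbP(|\cS_\star|\sqrt{(\log p)/n})$... but the correct bookkeeping is subtler: the denominator perturbation enters as $\|\tilde\bdelta\|_1$ without the $1/h$ blow-up because $\int_0^\tau\cK_h(t-s)\,d\Lambda_0(s)$-type integrals are $O(1)$; I would instead bound the \emph{martingale} part of the plug-in-perturbed numerator, which does carry a $1/\sqrt{nh^2}$ from the kernel, giving the first summand $|\cS_\star|\sqrt{(\log(pn))/(nh^2)}$. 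For the stochastic error I write $d\bar N_k=d\bar M_k+\sum_i Y_i(s)e^{\bx_i^\top\bbeta^\star}d\Lambda_0$, so the martingale part is $K^{-1}\sum_k\int\cK_h(t-s)\{m\bar S_k^{(0)}(\bbeta^\star,s)\}^{-1}d\bar M_k(s)$; by the Lenglart inequality its predictable variation is $O_\bbP(1/(nh))\le O_\bbP(1/(nh^2))$, and a chaining/discretization argument over a grid of mesh $\asymp h^2/n$ using the Lipschitz property of $\cK$ (Assumption~\ref{Assumption_Kernel_Function}) upgrades this to a \emph{uniform} bound $O_\bbP(\sqrt{\log(pn)/(nh^2)})$ on $[h,\tau-h]$ — the $\log(pn)$ factor is exactly the price of the union bound over the grid. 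The compensator part of $\check\lambda_0$ equals $\int\cK_h(t-s)\,\{$ratio converging to $1\}\,d\Lambda_0(s)$, and since $\lambda_0$ is twice continuously differentiable and $\cK$ is symmetric with finite second moment, a standard Taylor expansion gives deterministic bias $O(h^2)$ uniformly on $[h,\tau-h]$ (the endpoint restriction avoids boundary bias). Combining the three pieces and union-bounding over $\{\cE_k\}$ gives the second rate.

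The main obstacle I anticipate is the \emph{uniform-in-$t$} control of the kernel-smoothed martingale term: unlike the cumulative estimator, here the integrand depends on $t$ through $\cK_h(t-s)$, so Lenglart's inequality only gives a pointwise bound and one must pay a $\sqrt{\log(\cdot)}$ factor through a chaining argument. The Lipschitz hypothesis on $\cK$ in Assumption~\ref{Assumption_Kernel_Function} is precisely what makes this tractable: two evaluation points $t,t'$ with $|t-t'|\le h^2/n$ give $\sup_s|\cK_h(t-s)-\cK_h(t'-s)|\le L_\cK|t-t'|/h^2\le 1/n$, so a grid of $O(n/h^2)$ points suffices and the oscillation between grid points is negligible relative to the target rate. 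A secondary technical point is carefully tracking that the plug-in perturbation of the \emph{denominator} in the kernel estimator does not introduce a $1/h$ factor — this follows because the denominator is bounded below by a constant on $\cE_k$ and the perturbation is multiplied by $d\bar N_k$ whose total mass on any window is $O_\bbP(1)$, so the perturbation contributes at the same order as in the Breslow case, i.e. $O_\bbP(|\cS_\star|\sqrt{(\log p)/n})$, which is dominated by $|\cS_\star|\sqrt{\log(pn)/(nh^2)}$ since $h=o(1)$. Everything else is a routine combination of Assumptions~\ref{Assumption_covariate}, \ref{Assumption_beta_1}, \ref{Assumption_at_risk}, Lemma~\ref{Lemma_3.3_Huang}, and Corollary~\ref{Corollary_beta0_LASSO}.
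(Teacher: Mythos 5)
Your plan matches the paper's proof essentially step for step: the same decomposition of $\hat{\Lambda}_0(t)-\Lambda_0(t)$ into a plug-in (mean-value) term controlled by $\|\tilde{\bbeta}-\bbeta^\star\|_1$ and a martingale term controlled via Lemma~\ref{Lemma_3.3_Huang} on the at-risk events, and the analogous three-term split (plug-in, kernel-weighted martingale with a chaining/grid argument supplying the $\sqrt{\log}$ factor, and $O(h^2)$ Taylor bias) for $\hat{\lambda}_0(t;h)$. Your remark that the denominator perturbation should not itself carry a $1/h$ is also correct; the paper's bound on its $\Delta_1(t;h)$ is loose by exactly that factor but still valid.

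One small correction to your bookkeeping for the cumulative estimator's martingale term: you obtain a per-center rate $1/\sqrt m$ and then try to argue that $K\le|\cS_\star|^2\log p$ (which you claim follows from~\eqref{eq_cond_number_centers}) makes it dominated. That implication is false in general (e.g.\ $|\cS_\star|=1$, $p$ small, $K$ moderate), so this route would leave a gap. The fix is to exploit independence across centers rather than worst-case per-center bounds: $\bar M_1,\dots,\bar M_K$ are independent, so $K^{-1}\sum_k\int_0^t\{m\bar S_k^{(0)}(\bbeta^\star,s)\}^{-1}d\bar M_k(s)$ is itself a martingale in $t$ built from all $n$ samples with predictable variation $O_\bbP(1/n)$, and Lemma~\ref{Lemma_3.3_Huang} applied once to this aggregated process gives $\sup_t|\Delta_2(t)|=O_\bbP(n^{-1/2})$, which is dominated by $|\cS_\star|\sqrt{(\log p)/n}$ unconditionally. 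This is what the paper does. With that adjustment the rest of your argument goes through.
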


\subsection{Proof of Theorem~\ref{Theorem_Baseline_Hazard_Function_Estimation}}
\begin{proof}[Proof of Theorem~\ref{Theorem_Baseline_Hazard_Function_Estimation}]
    Denote $\hat{\Lambda}_{0}(t) - \Lambda_{0}(t) = \Delta_{1}(t) + \Delta_{2}(t)$ for $t \in [0, \tau]$, where 
    \begin{align*}
        \Delta_{1}(t) &= \frac{1}{K} \sum_{k = 1}^{K} \int_{0}^{\tau} \left[\frac{d \bar{N}_{k}(s)}{\sum_{i \in \mathcal{I}_{k}} Y_{i}(s) \exp\{\bx_{i}(s)^{\top} \tilde{\bbeta}\}} - \frac{d \bar{N}_{k}(s)}{\sum_{i \in \mathcal{I}_{k}} Y_{i}(s) \exp\{\bx_{i}(s)^{\top} \bbeta^{\star}\}}\right], \cr
        \Delta_{2}(t) &= \frac{1}{K} \sum_{k = 1}^{K} \int_{0}^{t} \frac{d \bar{M}_{k}(s)}{\sum_{i \in \mathcal{I}_{k}} Y_{i}(s) \exp\{\bx_{i}(s)^{\top} \bbeta^{\star}\}}.  
    \end{align*}
    By Assumption~\ref{Assumption_at_risk} and Hoeffding's inequality,   
    \begin{align*}
        \mathbb{P}\left(\min_{k \in [K]} \frac{1}{m} \sum_{i \in \mathcal{I}_{K}} Y_{i}(\tau) \geq \frac{\rho_{0}}{2} \right) \leq K \exp\left(-\frac{m \rho_{0}^{2}}{2}\right).
    \end{align*}
    Then, as $\|\tilde{\bbeta} - \bbeta^{\star}\|_{1}^{2} = O_{\mathbb{P}} (|\mathcal{S}_{\star}| (\log p)/n)$, and using Assumption~\ref{Assumption_covariate}, it follows that 
    \begin{align}
    \label{eq_Delta_1_bound_Lambda}
        \sup_{t \in [0, \tau]} |\Delta_{1}(t)| &\leq \max_{i \in [n]} \sup_{t \in [0, \tau]} |\bx_{i}(t)^{\top} (\tilde{\bbeta} - \bbeta^{\star})| \frac{1}{K} \sum_{k = 1}^{K} \int_{0}^{\tau} \frac{d \bar{N}_{k} (s)}{\sum_{i \in \mathcal{I}_{k}} Y_{i}(s) \exp\{\bx_{i}(s)^{\top} \tilde{\bbeta}_{\upsilon}\}}\cr
        &\lesssim B \|\tilde{\bbeta} - \bbeta^{\star}\|_{1} = O_{\mathbb{P}} \left(|\mathcal{S}_{\star}|\sqrt{\frac{\log p}{n}}\right),  
    \end{align}
    where $\tilde{\bbeta}_{\upsilon} = \bbeta^{\star} + \upsilon (\tilde{\bbeta} - \bbeta^{\star})$ for some constant $\upsilon \in [0, 1]$.
    By~\eqref{eq_event_E} and Lemma~\ref{Lemma_3.3_Huang}, we have $\sup_{t \in [0, \tau]} |\Delta_{2}(t)| = O_{\mathbb{P}}(n^{-1/2})$ and 
    \begin{align*}
        \sup_{t \in [0, \tau]} |\hat{\Lambda}_{0}(t) - \Lambda(t)| \leq \sup_{t \in [0, \tau]} |\Delta_{1}(t)| + \sup_{t \in [0, \tau]} |\Delta_{2}(t)| = O_{\mathbb{P}}\left(|\mathcal{S}_{\star}| \sqrt{\frac{\log p}{n}}\right).  
    \end{align*}
    Similarly, we denote $\hat{\lambda}_{0}(t; h) - \lambda_{0}(t) = \Delta_{1}(t; h) + \Delta_{2}(t; h) + \Delta_{3}(t; h)$, where 
    \begin{align*}
        \Delta_{1}(t; h) &= \frac{1}{K} \sum_{k = 1}^{K} \int_{0}^{\tau} \left[\frac{\mathcal{K}_{h}(t - s) d \bar{N}_{k}(s)}{\sum_{i \in \mathcal{I}_{k}} Y_{i}(s) \exp\{\bx_{i}(s)^{\top} \tilde{\bbeta}\}} - \frac{\mathcal{K}_{h}(t - s) d \bar{N}_{k}(s)}{\sum_{i \in \mathcal{I}_{k}} Y_{i}(s) \exp\{\bx_{i}(s)^{\top} \bbeta^{\star}\}}\right], \cr 
        \Delta_{2}(t; h) &= \frac{1}{K} \sum_{k = 1}^{K} \int_{0}^{t} \frac{\mathcal{K}_{h}(t - s) d \bar{M}_{k}(s)}{\sum_{i \in \mathcal{I}_{k}} Y_{i}(s) \exp\{\bx_{i}(s)^{\top} \bbeta^{\star}\}}, \cr
        \Delta_{3}(t; h) &= \int_{0}^{\tau} \mathcal{K}_{h}(t - s) \lambda_{0}(s) ds - \lambda_{0}(t). 
    \end{align*}
    Following a similar argument as that in~\eqref{eq_Delta_1_bound_Lambda}, we have 
    \begin{align*}
        \sup_{t \in [h, \tau - h]} |\Delta_{1}(t; h)| &\lesssim B \|\tilde{\bbeta} - \bbeta^{\star}\|_{1} \frac{1}{K} \sum_{k = 1}^{K} \int_{0}^{t} \frac{\mathcal{K}_{h} (t - s) d \bar{N}_{k} (s)}{\sum_{i = 1}^{n} Y_{i}(s) \exp\{\bx_{i}(s)^{\top} \tilde{\bbeta}_{\upsilon}\}} \cr
        &= O_{\mathbb{P}} \left(|\mathcal{S}_{\star}|\sqrt{\frac{\log p}{n h^{2}}}\right). 
    \end{align*}
    By Assumption~\ref{Assumption_Kernel_Function}, Lemma~\ref{Lemma_3.3_Huang} and the baseline hazard function $\lambda_{0}(\cdot)$ is twice continuously differentiable, it is straightforward to obtain 
    \begin{align*}
        \sup_{t \in [h, \tau - h]} |\Delta_{2}(t; h)| = O_{\mathbb{P}} \left(\sqrt{\frac{\log n}{n h^{2}}}\right). 
    \end{align*}
    By Assumption~\ref{Assumption_Kernel_Function}, we have $\sup_{t \in [h, \tau - h]} |\Delta_{3}(t; h)| \lesssim h^{2}$. Putting all these pieces together, we obtain the convergence rate of $\hat{\lambda}(t; h)$ in Theorem~\ref{Theorem_Baseline_Hazard_Function_Estimation}.     
\end{proof}

\section{Additional simulated experiments}

In this Section, we add simulated experiments for $n = 240$ with a $50\%$ censoring rate and $p = 300$, and demonstrate the performance of our algorithms via various metrics as in Section~\ref{sec_simulations}. We use $K = 2$ centers and simulate $200$ i.i.d replications.

As in Section~\ref{subsubsection-estimation}, we study the median estimation error of estimators. Considering $\bbeta^{\star} = (0,2,2,2,{\bf{0}}_{p-4}^{\top})^{\top}$, we obtain Table~\ref{tab_median_estimation_error_appendix}. Our iterated procedure has an estimation error close to the full-sample benchmark, and is much smaller than other estimators.

\begin{table}[H]
\begin{center}
\caption{\small{Median estimation error of various estimators.}}
\label{tab_median_estimation_error_appendix}
\footnotesize
\vspace{2mm}
\begin{tabular}{ccccc}
\hline\hline
{5 iterations} & {10 iterations} & {Full-sample} & {One-center} & {Average-debiased}
\\\cline{1-5}
2.25 & 2.18 & 1.95 & 3.54 & 3.65
\\\hline\hline
\end{tabular}
\end{center}
\end{table}

Similar to Section~\ref{subsubsection-hyptesting}, we consider testing the hypothesis
\begin{align*}
    H_0 : \nu^{\star} = 0 \enspace \mathrm{versus} \enspace H_{1} : \nu^{\star} \neq 0.  
\end{align*}
The Q-Q plots of the $p$-values under $H_0$ in the current simulation setting are in Figure~\ref{fig_qqplot_appendix}. As expected, the $p$-values from our iterated estimator and from the full-sample estimator follow the Uniform $(0,1)$ distribution more closely than the $p$-values coming from the one-center and the average-debiased estimators.

\begin{figure}[H]
\centering
    \begin{subfigure}{\subfigfracin\linewidth}
        \includegraphics[width=\imgfrac\linewidth]{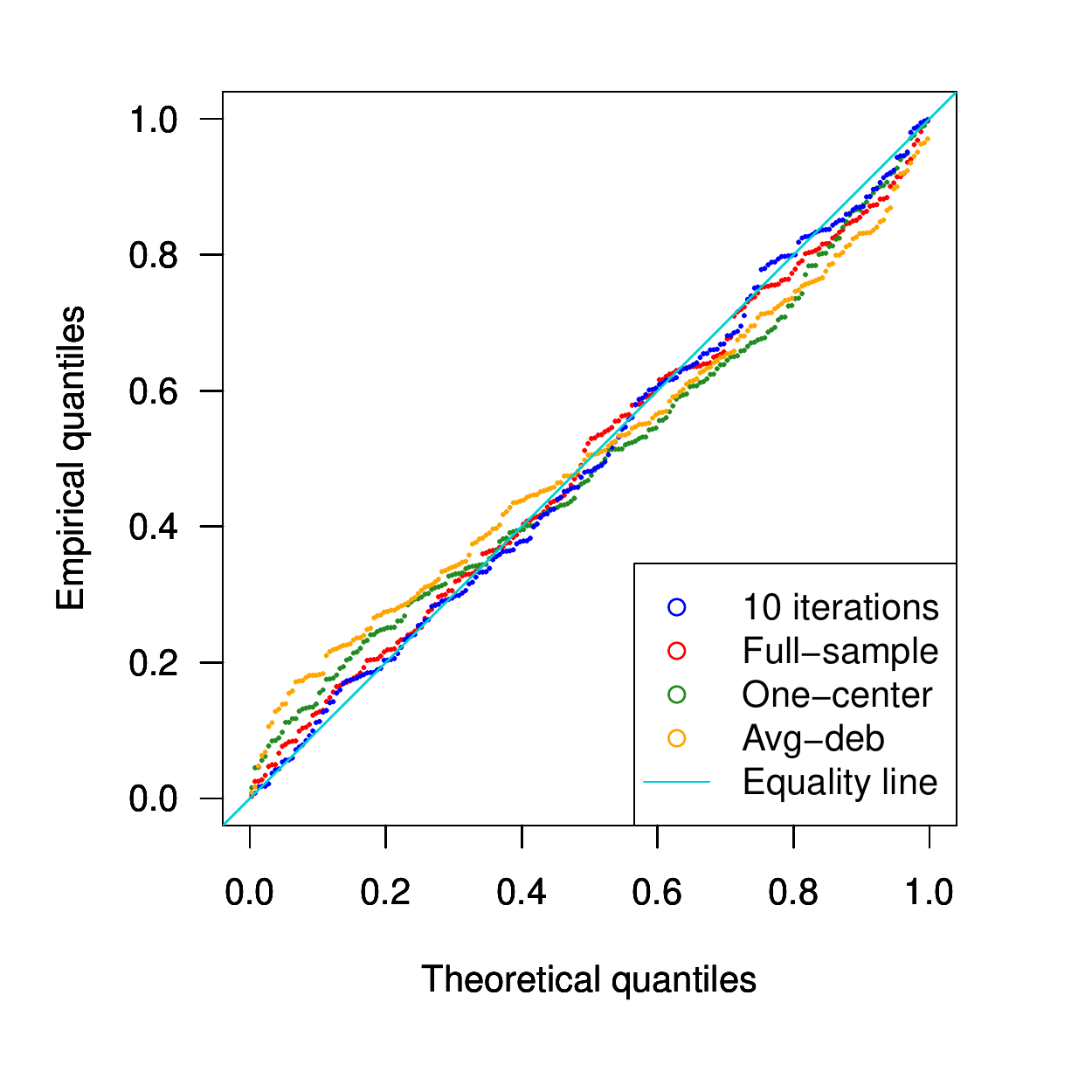}
    \end{subfigure}\hfill
    \caption{Q-Q plots of the $p$-values under $H_0$.}
    \label{fig_qqplot_appendix}
\end{figure}

Under the current ($n$, $p$) setting, the tests presented in the paper have their size displayed in Table~\ref{tab_size_appendix} and power in Table~\ref{tab_power_appendix}. The estimated test size for our estimator is the closest to the target level. It is also more powerful than the tests based on the one-center and average-debiased estimators.

\begin{table}[H]
\begin{center}
\caption{\small{Size of the tests with significance level $\alpha = 0.05$.}}
\label{tab_size_appendix}
\footnotesize
\vspace{2mm}
\begin{tabular}{ccccc}
\hline\hline
& {10 iterations} & {Full-sample} & {One-center} & {Average-debiased}
\\\cline{1-5}
{Average} & 0.055 & 0.025 & 0.035 & 0.010\\
{Standard error} & 0.02 & 0.01 & 0.01 & 0.01
\\\hline\hline
\end{tabular}
\end{center}
\end{table}

\begin{table}[H]
\begin{center}
\caption{\small{Power of the tests for $\nu^{\star} = 0.5$.}}
\label{tab_power_appendix}
\footnotesize
\vspace{2mm}
\begin{tabular}{ccccc}
\hline\hline
& {10 iterations} & {Full-sample} & {One-center} & {Average-debiased}
\\\cline{1-5}
{Average} & 0.60 & 0.61 & 0.22 & 0.38\\
{Standard error} & 0.04 & 0.04 & 0.03 & 0.04
\\\hline\hline
\end{tabular}
\end{center}
\end{table}

We now turn to confidence intervals for the linear functional $\bc^{\top} \bbeta^{\star}$ as in Section~\ref{subsubsection-linearfunc}. The estimated coverage probability is displayed in Table~\ref{tab_coverage_appendix} and the median interval width in Table~\ref{tab_width_appendix}. Our confidence interval has an estimated coverage probability closer to the $95 \%$ target than the intervals based on other estimators, and a width very similar to the one based on the full-sample estimator.
\begin{table}[H]
\begin{center}
\caption{\small{Coverage probability with target $95\%$.}}
\label{tab_coverage_appendix}
\footnotesize
\vspace{2mm}
\begin{tabular}{ccccc}
\hline\hline
& {10 iterations} & {Full-sample} & {One-center} & {Average-debiased}
\\\cline{1-5}
{Average} & 0.965 & 0.99 & 0.985 & 0.99\\
{Standard error} & 0.01 & 0.01 & 0.01 & 0.01
\\\hline\hline
\end{tabular}
\end{center}
\end{table}

\begin{table}[H]
\begin{center}
\caption{\small{Confidence interval width.}}
\label{tab_width_appendix}
\footnotesize
\vspace{2mm}
\begin{tabular}{cccc}
\hline\hline
{10 iterations} & {Full-sample} & {One-center} & {Average-debiased}
\\\cline{1-4}
0.53 & 0.52 & 0.74 & 0.45
\\\hline\hline
\end{tabular}
\end{center}
\end{table}

\end{document}